\newcommand{\mycomment}{}   %Hier Grafikeinbindung ein/ausschalten
\theoremstyle{plain} 
\newtheorem{Lem}{Lemma}[section] 
\newtheorem{Prop}[Lem]{Proposition} 
\newtheorem{Cor}[Lem]{Corollary} 
\theoremstyle{definition} 
\newtheorem{Def}[Lem]{Definition} 
\newtheorem{Rem}[Lem]{Remark} 
\newtheorem{Con}[Lem]{Conjecture}
\newtheorem{Rule}[Lem]{Rule} 
\numberwithin{equation}{section}
\newcommand{\bpi}{\begin{picture}} 
\newcommand{\epi}{\end{picture}}
\begin{document}
\sloppy

\title[An Astronomical Interpretation of the Nebra Sky Disc] 
{An Astronomical Interpretation of the Nebra Sky Disc} 
\date{November 2024} 
%\date{October 2024} 
\author[B. Fiedler]{Bernd Fiedler}
%\address{Bernd Fiedler \\ Mathematisches Institut \\ Universit\"at Leipzig\\ 
%Augustusplatz 10/11 \\ D-04109 Leipzig \\ Germany}
\address{Bernd Fiedler \\ Eichelbaumstr. 13 \\ D-04249 Leipzig \\ Germany}
\urladdr{http://www.fiemath.de/}  
\email{bfiedler@fiemath.de}  
\subjclass{01A15, 85-03, 85-04, 8505}
\keywords{Nebra Sky Disc, Bronze Age, star constellations, heliacal settings, beta day, sowing date, lunisolar calendar, leap rule, week count, software Stellarium, PeakFinder, Mathematica, GIMP}

\begin{abstract} 
We agree with the interpretation of W. Schlosser, that the Nebra Sky Disc is a reminder of a method of determining a start date (and possibly also an end date) of the farming year. We extend this interpretation. We think that we
found the constellation Taurus on the Disc, which forms by addition of
three stars from the constellation Gemini the pattern of a plough of
Bronze Age. Moreover we found a line on the disc consisting of the
stars $\epsilon$ Gem, $\theta$ Aur, $\beta$ Aur and $\alpha$ Aur, which we called the Auriga line. We think that the Nebra people used the Auriga line to determine the day (which we call beta day) on which the Pleiades are vertically below $\beta$ Aur at dusk in February. We found a second representation of the Auriga line on the Disc where the distance ratios between the stars are very precisely equal to the distance ratios in the sky, and where the Pleiades are vertically below $\beta$ Aur. This proves that the Nebra people must have measured the distances, and that our hypothesis is correct.

The beta day could have been used to harmonize a lunisolar calendar with the solar year. However, the most likely possibility seems to us that a good sowing date could be determined by setting the sowing on the second round lunar phase after the beta day. (By round lunar phases we mean the full moon and the new moon.) Such a sowing date makes it possible to start a week count based on the lunar phases with the sowing in order to determine other agricultural dates.

The astronomical
knowledge for this procedure can be gained by astronomical
observations only. No mathematical calculations and import of knowledge from a Mediterranean culture are necessary.

One can determine a point on the ecliptic (beta point) such that the beta day occurs when the sun is at the beta point. The beta point moves only slowly relative to the vernal equinox. However, since the Bronze Age, the beta point and vernal equinox have swapped their order on the ecliptic.
\end{abstract}

\maketitle 

%First page headline in LaTeX for S\'eminaire Lotharingien de Combinatoire
%--first part
%\thispagestyle{myheadings}
%\font\rms=cmr8 
%\font\its=cmti8 
%\font\bfs=cmbx8
%\markright{\its S\'eminaire Lotharingien de
%Combinatoire \bfs 50 \rms (2003), Article~B50???, Working paper 20.05.05\hfill}
%\def\thepage{}
%
%

\section{Introduction}

The Nebra Sky Disc was found in 1999 on the Mittelberg in Saxony-Anhalt in Germany.
The first astronomical interpretations of the Nebra Sky Disc were
given by W. Schlosser \cite{schlosser1, schlosser2, schlosser3} and
R. Hansen \cite{hansen1, hansen2}. W. Schlosser saw the Disc as a
reminder of a method of determining the start and end dates for the
farming year. In his opinion, the heliacal setting (evening setting) of the Pleiades in
spring and the morning setting of the Pleiades in autumn came into
question, which the people in the Nebra region would have been able to
observe. The days of these settings were supplemented by conjunctions
between the moon and the Pleiades, which appear as a conjunction
between the Pleiades and the waxing moon shortly after the new moon in
spring and as a conjunction between the Pleiades and full moon in
autumn.

R. Hansen sees in the Disc the pictorial representation of a rule for inserting a leap month in order to harmonize the solar and lunar years. The rule is
known from the much later Babylonian calendar and is also based on a
conjunction between the moon and the Pleiades. E. P\'asztor and
C. Roslund \cite{pasztorroslund} provide a third interpretation of the
Sky Disc. They see it as an object for cultic purposes that has no
astronomical background.

We agree with the interpretation of W. Schlosser. In particular, we
interpret the large objects on the disc in the same way as
W. Schlosser (see Fig. \ref{bigobj}).

\mycomment{
\begin{figure}[!ht]
  \begin{center}
    \includegraphics[width=\textwidth]{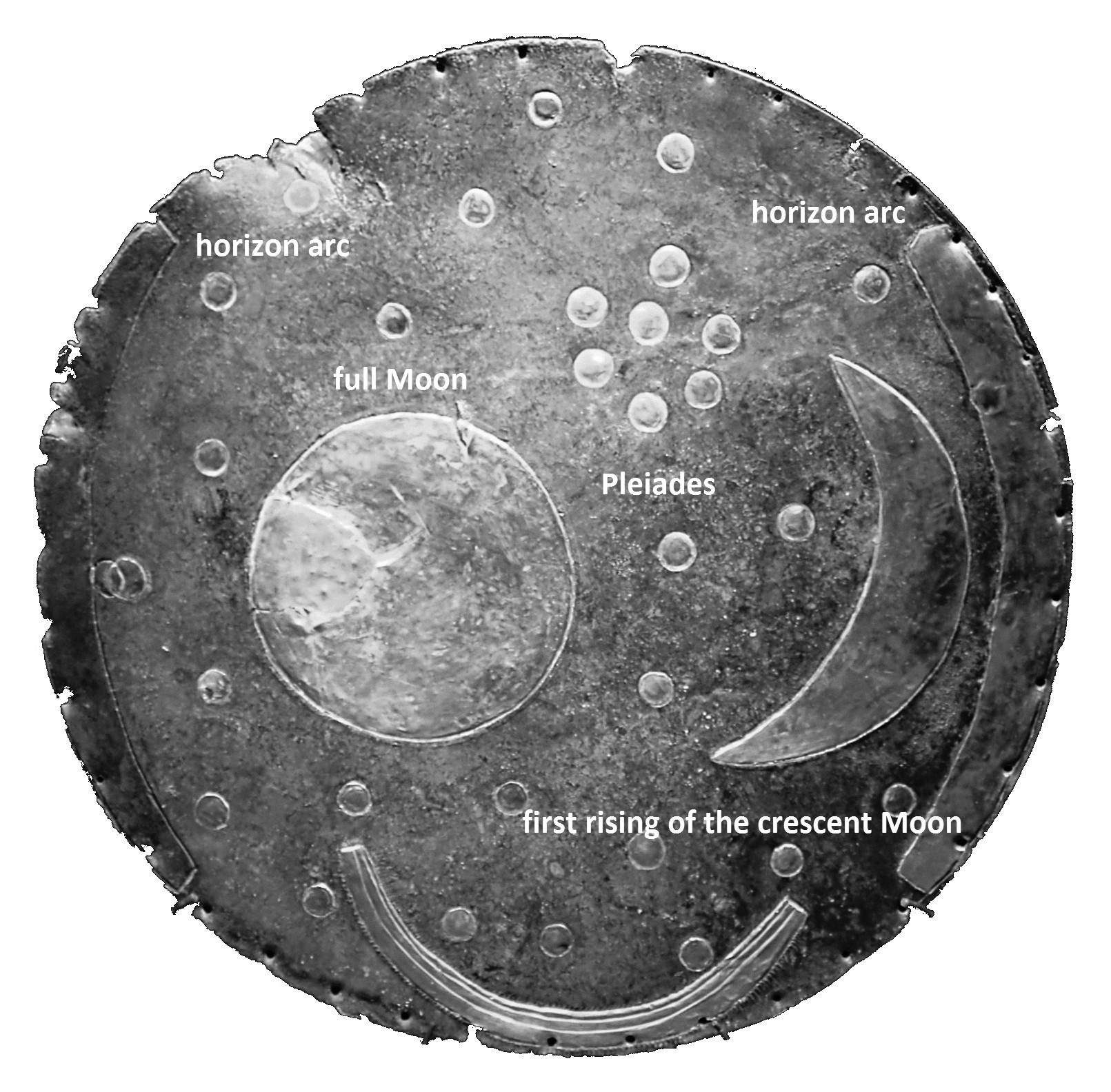}
  %\vspace*{-1.5cm}
\end{center}
\caption{The big objects on the Sky Disc} 
\label{bigobj}
\end{figure}
}
According to W.Schlosser, the circle seen as the moon could also be the sun and the crescent moon could also represent an eclipse, but the meanings shown in Figure \ref{bigobj} are primarily discussed by W.Schlosser.

We add new details to the theory of W. Schlosser. We believe
that we have found a star constellation and a line consisting of
stars from the constellations Auriga and Gemini on the Disc (Auriga line). The Nebra people\footnote{According to Meller and Michel, the creators of the Sky Disc were members of the {\it\'Un$\check{e}$tice Culture} (see \cite[p.45]{melmich1}).} could have used
the Auriga line stars to predict the imminence of the heliacal setting of
the Pleiades from the position of the Pleiades in relation to these
stars.

They determined the day on which the Pleiades, when they appear at nautical twilight, are exactly perpendicular to $\beta$ Aur (beta day). In the Bronze Age, this day occurred 24 to 28 days before the heliacal setting of the Pleiades. By determining this day, the Nebra people were able to predict that the heliacal setting of the Pleiades was imminent (Section \ref{sec4}).

In addition, with the help of the beta day they could have
started a lunisolar calendar every year, which is well adapted to a solar
calendar and to the seasons (Section \ref{subsec6.1}).

However, we consider the possibility that the Nebra people used the beta day to set a sowing date to be the most likely. A favorable sowing date at the time of the Sky Disc would have been the second round phase of the moon (full moon or new moon) that followed the beta day. If the start of sowing was set on such a day, then the sowing could also start a week count that was based on the phases of the moon and which could then be used to determine further agricultural dates over the course of the year (Section \ref{subsec6.2}).

The Nebra people depicted the Auriga line twice on the Disc. The second representation is a scaled reproduction of the celestial image of the Auriga line on the Disc (Section \ref{sec524a} and Appendix \ref{appB}). This suggests the following:
\begin{itemize}
\item It is very likely that the Nebra people measured distances between stars in the sky.
\item Our ideas about the use of the Auriga line presented in this paper are probably correct.
\end{itemize}

Our investigations are based, among other things, on own
astronomical observations from March to May 2020 and October 2022 and on calculations with the {\tt Stellarium} software 0.21.0 \cite{zottihoffmann, zottiwolf}, with {\tt Mathematica} \cite{mma} and {\tt Peakfinder} \cite{peakf1, peakf2} and on measurements on photos using {\tt Gimp} \cite{gimp}.

\section{Star Constellations}
\subsection{Can the image of the Sky Disc show a star constellation?}
In all of his papers on the Sky Disc, W. Schlosser tried to show that
there were no star constellations on the Disc. 1.) He generated random
star arrangements by computer. All these computer images showed
clusters of stars in certain areas, which he called
"constellations". In the image on the Disc, however, there are no such
clusters of stars (if one disregards the object of the Pleiades).

2.) If people are asked to draw dots on a piece of paper without creating any meaning, they place the dots far apart and, if possible, in such a way that no connection is suggested.

However, both of these do not mean that no constellation can be contained in the point cloud.

In order
to orientate oneself in the sky, humans connect certain, often bright
stars with lines, so that a pattern arises that one can memorize and
recognize. Such patterns are constellations. The stars in them almost
never cluster together (see Fig. \ref{bear}).
\mycomment{
\begin{figure}[!ht]
  \begin{center}
    \includegraphics[width=\textwidth]{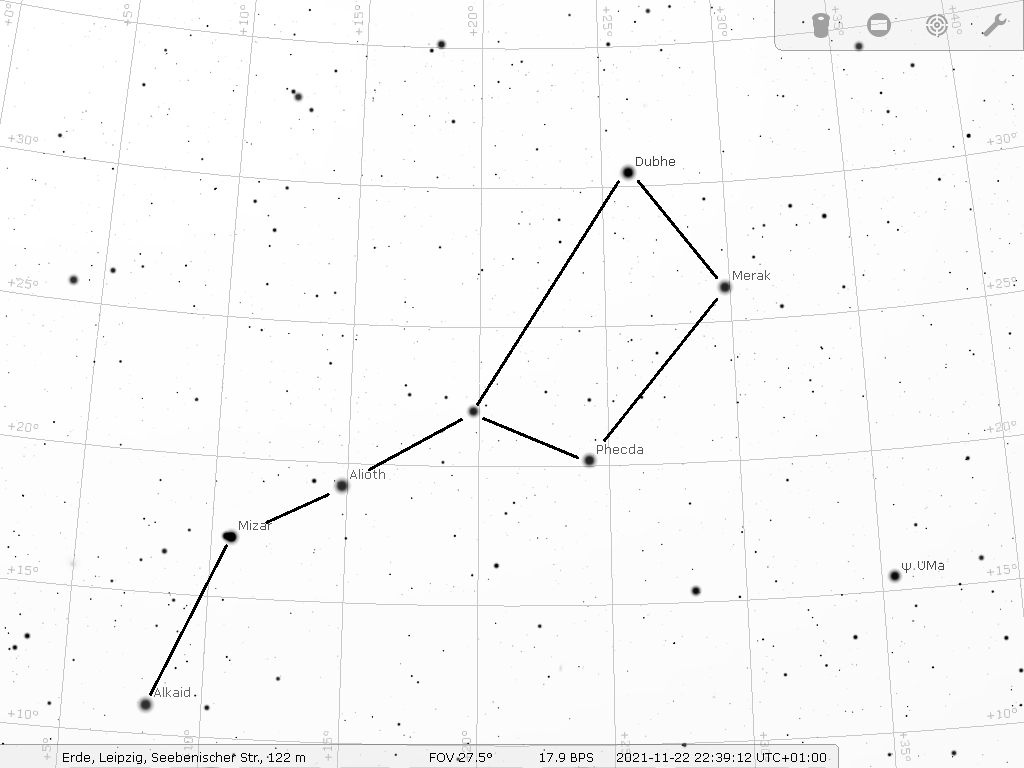}
  %\vspace*{-1.5cm}
\end{center}
\caption{The constellation Great Bear is not a cluster of stars but a pattern in a large area.} 
\label{bear}
\end{figure}
}

If one has recognized a constellation, one can also identify and
use the individual stars in it, for example to navigate a ship.

The absence of clusters or a dot placement that does not reveal any connections does not mean that an arrangement of dots does not contain a constellation.

One could also ask the question: Why is the Sky Disc round? One answer could be: Because something is depicted on the Disc that, by rotating the Disc, can be made to coincide with something that can also be seen in nature. And that can only be constellations.

\subsection{The constellations Taurus and ''Plough''} \label{subsec2.2}
The star cluster of the Pleiades belongs to the constellation
Taurus. The main part of the constellation Taurus consists of two
v-shaped lines that connect the star $\gamma$ Tau with the stars
$\beta$ Tau and $\zeta$ Tau (see Fig. \ref{pflug}).
\mycomment{
\begin{figure}[!ht]
  \begin{center}
    \includegraphics[width=\textwidth]{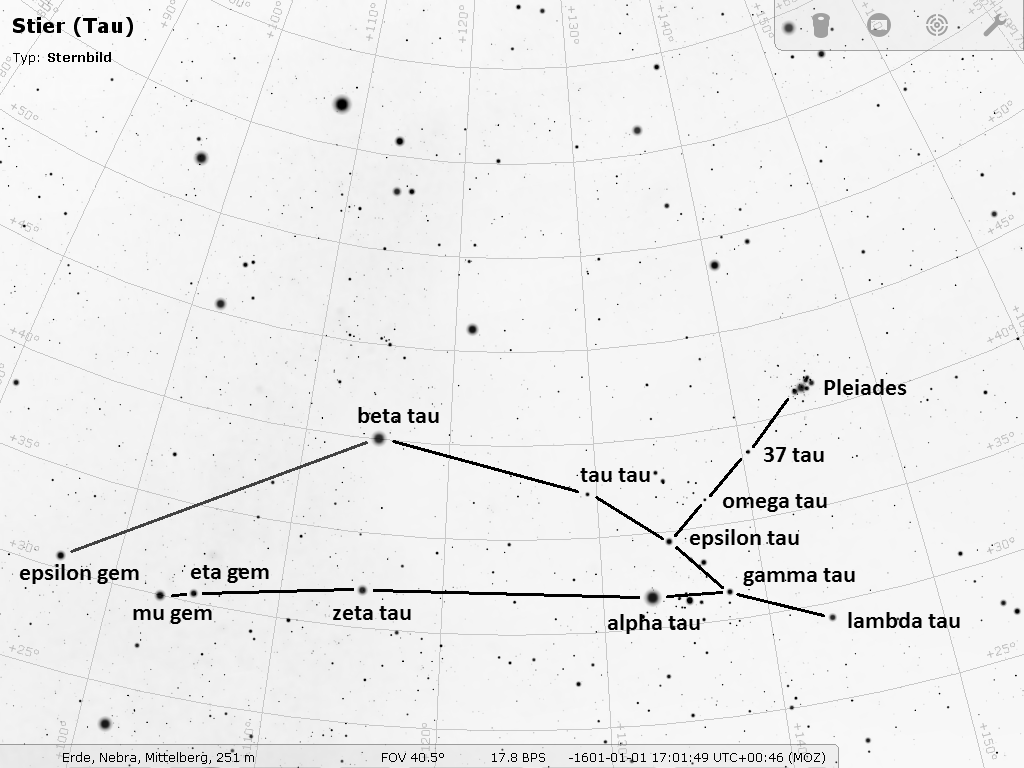}
  %\vspace*{-1.5cm}
\end{center}
\caption{The constellations Taurus and ''Plough''. The abbreviations tau and gem denote the affiliation of the stars to the constellations Taurus or Gemini.}
\label{pflug}
\end{figure}
}
The Pleiades are connected to these lines by a line running over the
stars 37 Tau and $\omega$ Tau.

The same image can be found on the Sky Disc (see Fig. \ref{pflug2}).
\mycomment{
\begin{figure}[!ht]
  \begin{center}
    \includegraphics[width=\textwidth]{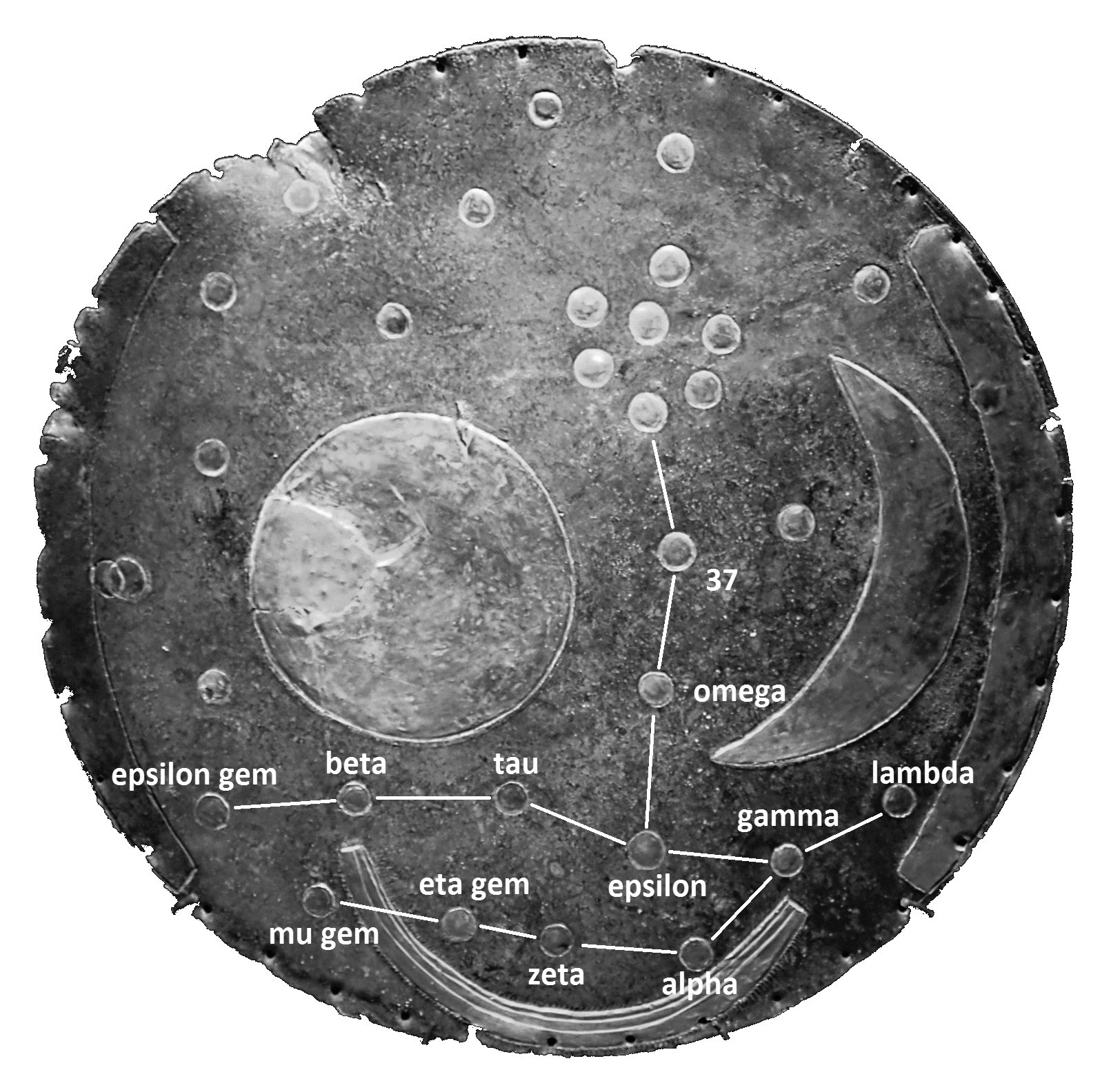}
  %\vspace*{-1.5cm}
\end{center}
\caption{The constellations Taurus and ''Plough'' on the Sky Disc} 
\label{pflug2}
\end{figure}
}
We see a line running over 2 stars that leads from the Pleiades to the lower part of the
Disc, where it meets two v-shaped lines on which there are the same
number of stars of Taurus as on the star map Fig. \ref{pflug}. On the
Disc Fig. \ref{pflug2}, the v-lines are extended by 3 more stars. The
stars $\mu$ Gem, $\eta$ Gem and $\epsilon$ Gem of the constellation
Gemini correspond well to these stars (see Fig. \ref{pflug}). $\mu$
and $\eta$ fit because they extend the ''$\alpha$ Tau'' - ''$\zeta$
Tau'' line as dead straight as on the Disc. $\epsilon$ fits because it
is slightly to the left above $\mu$. The resulting constellation is
very similar to a plough from the Bronze Age (see Fig. \ref{walle}).
\mycomment{
\begin{figure}[!ht]
  \begin{center}
    \includegraphics[width=\textwidth]{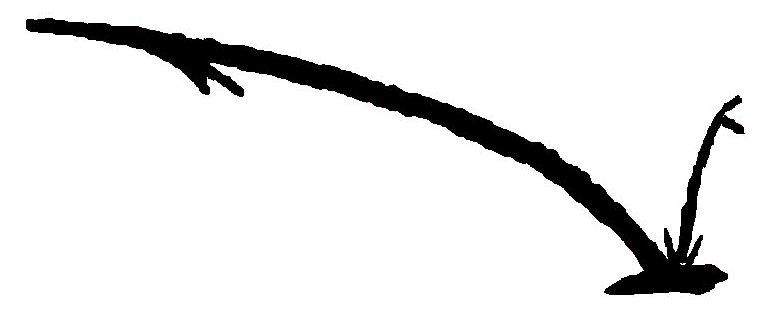}
  %\vspace*{-1.5cm}
\end{center}
\caption{Silhouette of the Plough of Walle} 
\label{walle}
\end{figure}
}
This impression arises not so much from the image on the Disc
(Fig. \ref{pflug2}) but mainly from the image in the sky itself
(Fig. \ref{pflug}).

There are indications that the Nebra people actually saw a
plough in this constellation. In the nature, there is no equivalent to
the star to the right of 37 Tau on the Disc (Fig. \ref{pflug2}). But
exactly there would be the handle with which the farmer steers the
plough. Furthermore, the line ''$\gamma$ Tau'' - ''$\mu$ Gem'' has a
kink at $\alpha$ Tau and does not run straight as in
Fig. \ref{pflug}. This is where the ploughshare would be, which is
inclined to the ground and not tangential as in
Fig. \ref{pflug}. Because of the kink at $\alpha$ Tau, a kink also had
to be added at $\epsilon$ Tau.

There are also some problems with the mapping we made between the
stars on the Disc and the stars in the sky.\\*[0.2cm]
(i) The stars $\omega$ Tau and 37 Tau, which connect the Pleiades with the constellation Taurus, are faint. They have magnitudes
of 4.90 and 4.35 (values from {\tt Stellarium} \cite{zottihoffmann,
  zottiwolf}). (The smaller the magnitude, the
brighter the star.) Could the Nebra people actually see these stars?

It is commonly said that the naked human eye can see stars up to
magnitude 5 (see \cite[p.31]{mueller}). R. Brandt \cite[p.126]{brandt}
specifies this. If the human eye is in the dark for 15 minutes, it can
see stars up to magnitude 5. If it is 1 hour in the dark, it sees
stars up to magnitude 6. So the Nebra people could see the two faint stars.\\*[0.2cm]
(ii) Between the stars $\gamma$ Tau and $\alpha$ Tau there is the star
$\theta$2 Tau and between $\gamma$ Tau and $\epsilon$ Tau there is the
star $\delta$1 Tau. These stars seem to be missing from the Disc.

The star we identified as $\epsilon$ Tau cannot be any other star,
since the line coming from the Pleiades ends at $\epsilon$. And then
the star opposite $\epsilon$ on the second Taurus line must also be
the star $\alpha$, since both stars are symmetrical. Hence, $\theta$2
and $\delta$1 are the missing stars.

Probably the Nebra people deliberately left out $\theta$2 and
$\delta$1, perhaps because such closely spaced stars could not be
attached to the disc, or because they only wanted to mark the
beginning and end of these star-rich line segments.\\*[0.2cm]
(iii) On the disc, the distances between the stars on the Taurus'
v-shaped lines are approximately equidistant (Fig.
\ref{pflug2}). However, this is not the case in nature
(Fig. \ref{pflug}).

Probably the Nebra people just wanted to indicate which stars are on a
common line and how many there are. Perhaps in such a situation people
tend to draw equidistant distances rather than depicting the actual
distance relationships. That this could be true is shown by the
example of the Sky Tablet of Tal-Qadi (Fig. \ref{talqadi}).

\mycomment{
\begin{figure}[!ht]
  \begin{center}
    \includegraphics[width=\textwidth]{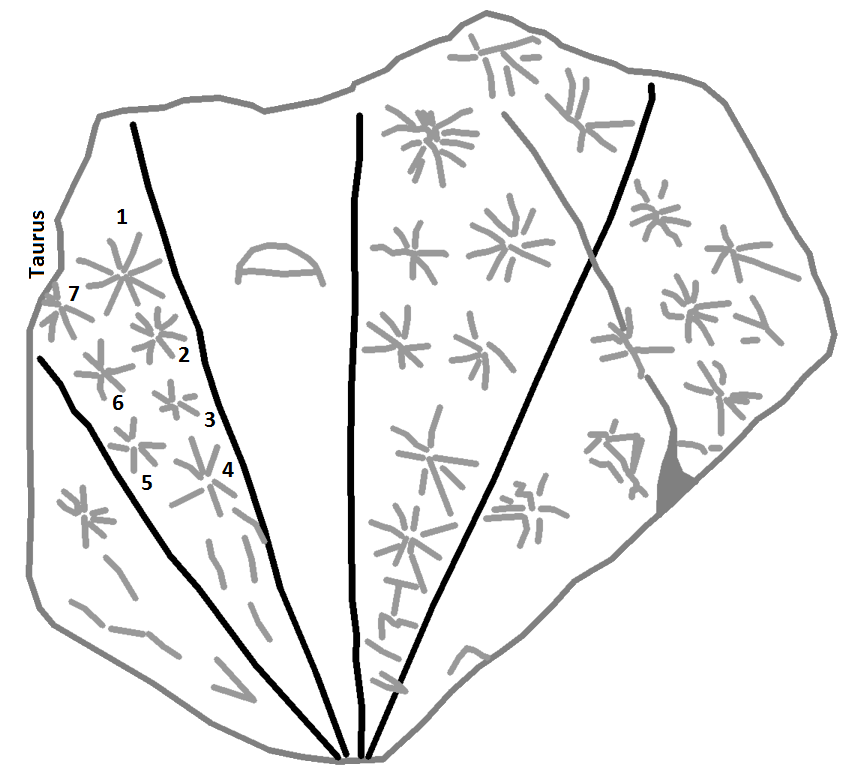}
  %\vspace*{-1.5cm}
\end{center}
\caption{The star constellation Taurus on the Sky Tablet of Tal-Qadi:
  1=beta, 2=epsilon, 3=delta1, 4=gamma, 5=theta2, 6=alpha, 7=zeta.} 
\label{talqadi}
\end{figure}
}

P. Kurzmann \cite{kurzmann1, kurzmann2} has assigned real
constellations to all star arrangements on the Tablet. The second
segment from the left shows the constellation Taurus. There all stars
are also arranged equidistantly.

Interestingly, the stars $\theta$2 and $\delta$1 are present on the
Tablet. Instead, the star $\tau$ Tau has been omitted.

The deviation from equidistance is particularly strong for the stars
$\mu$ Gem and $\eta$ Gem (see Fig. \ref{pflug} and \ref{pflug2}). Why
did the Nebra people include these stars in the line anyway?

The stars of Taurus are not very bright except for $\alpha$ Tau
(Aldebaran) and $\beta$ Tau. If Aldebaran or the Pleiades are near the western horizon (about to set), then
they are darkened even further by the extinction and are difficult to
find. In this case, however, the Plough is very steep and the line ''$\mu$ Gem'' - ''$\alpha$ Tau'' is an important
orientation line that leads from the brighter stars of Gemini to
Aldebaran and the neighboring Pleiades (Fig. \ref{pflug_steil}). 

\mycomment{
\begin{figure}[!ht]
  \begin{center}
    \includegraphics[width=\textwidth]{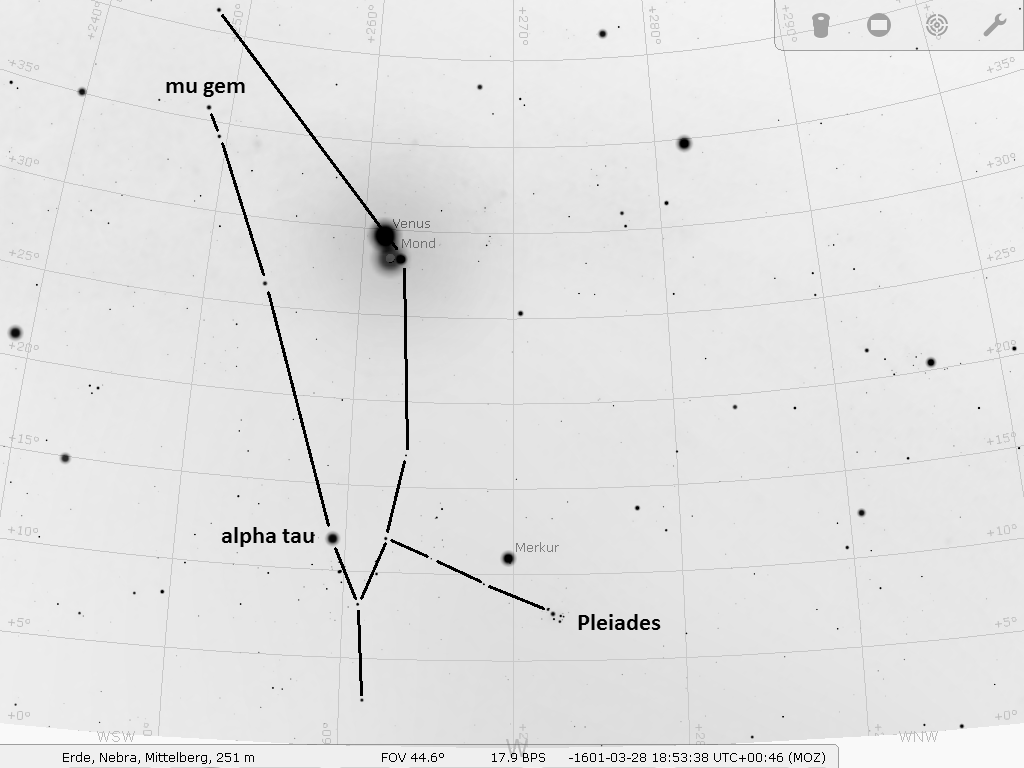}
  %\vspace*{-1.5cm}
\end{center}
\caption{On the day of the heliacal setting of the Pleiades, the Plough is steep.} 
\label{pflug_steil}
\end{figure}
}

If the beginning of this line
is marked by two stars that are close together, it is easier to see
that one is on the right line.\\*[0.2cm]
If we accept the explanations given here for the above critical points
(i)-(iii), then our interpretation of the stars discussed could be
correct.
\begin{Rem}
In the English language, the main part of the constellation Great Bear
(Fig. \ref{bear}) is also referred to as the ''Plough''. In our paper, however, by ''Plough'' we always mean the constellation in Fig. \ref{pflug}.
\end{Rem}

\subsection{The ''Auriga line''} \label{subsec2.3}
We can identify 3 more stars on the Disc with real stars. On the Disc,
a line leads away from the star $\epsilon$ Gem, which runs over the stars we have designated as $\theta$ Aur, $\beta$ Aur
and $\alpha$ Aur (Fig. \ref{auxiliaryline2}). 

\mycomment{
\begin{figure}[!ht]
  \begin{center}
    \includegraphics[width=\textwidth]{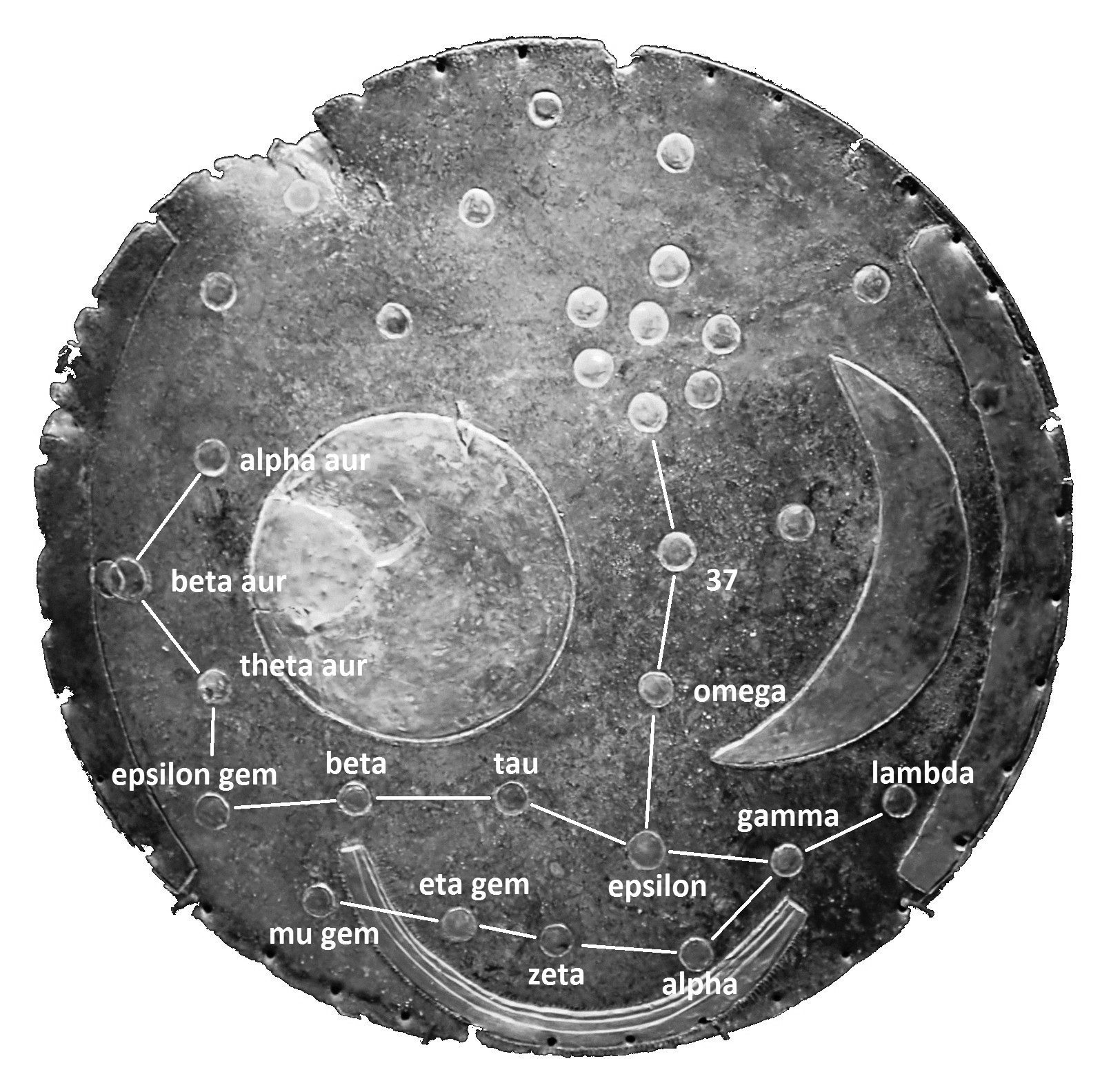}
  %\vspace*{-1.5cm}
\end{center}
\caption{The constellation ''Plough'' and the ''Auriga line'' on
  the Sky Disc} 
\label{auxiliaryline2}
\end{figure}
}
The real stars $\theta$
Aur, $\beta$ Aur and $\alpha$ Aur (Capella) in the constellation
Auriga form the same pattern in the sky as the stars of the same name
on the Disc (Fig. \ref{auxiliaryline}).
\mycomment{
\begin{figure}[!ht]
  \begin{center}
    \includegraphics[width=\textwidth]{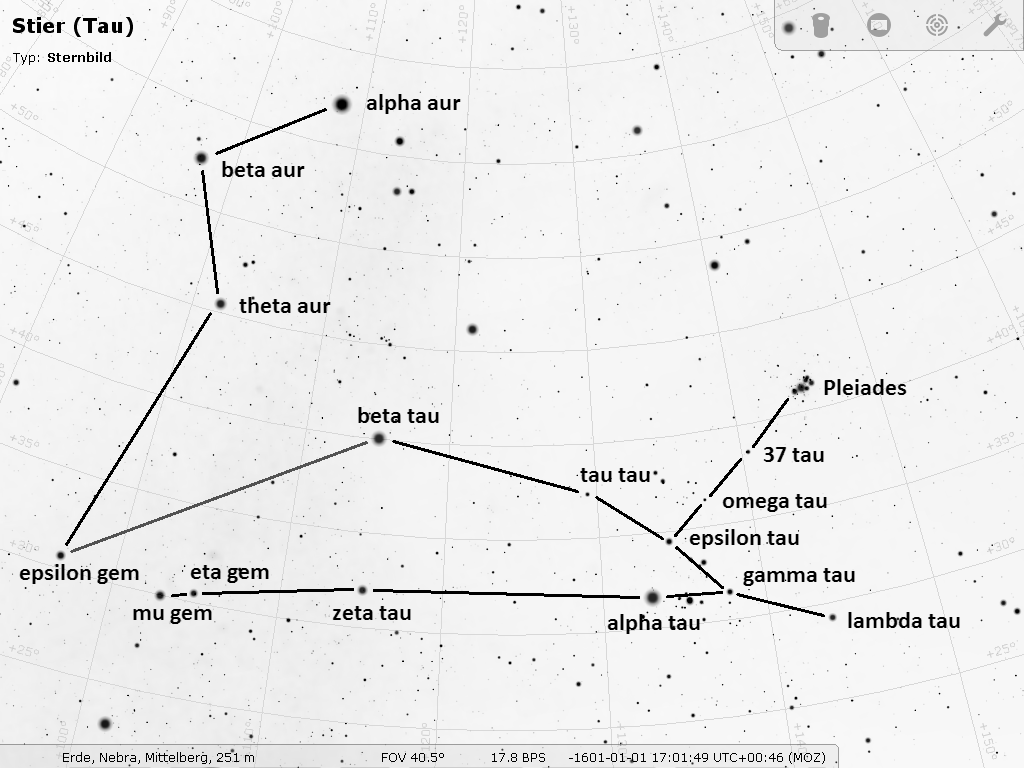}
  %\vspace*{-1.5cm}
\end{center}
\caption{The constellation ''Plough'' and the ''Auriga line''} 
\label{auxiliaryline}
\end{figure}
}
This identification is based on the fact that the stars $\epsilon$ Gem,
$\theta$ Aur and $\alpha$ Aur lie on a common straight line, from which the star
$\beta$ Aur deviates in the characteristic peak. Only the pattern and the
number of stars matter. The distances between the stars on the disc or
in the sky can differ as in Subsection \ref{subsec2.2}.

We call the pattern from the stars considered here ''Auriga
line''. It will be seen that the Nebra people presumably used these stars to predict the imminence of the Pleiades' heliacal setting.

\section{Sidereal time, stellar time and solar time} \label{sec3a}
In astronomy there are the following time units, which we also use in our paper.
\begin{Def}
 Let an observation location $P$ be given.
  \begin{itemize}
   \item The time that the sun needs to travel from its transit across the local meridian of $P$ back to the local meridian of $P$ after 1 rotation of the Earth is called {\it solar day}.
    \item The time that the vernal equinox needs to travel from its transit across the local meridian of $P$ back to the local meridian of $P$ after 1 rotation of the Earth is called {\it sidereal day}.
  \item  The time that an infinitely distant fixed star without proper motion needs to travel from its transit across the local meridian of $P$ back to the local meridian of $P$ after 1 rotation of the Earth is called {\it stellar day}.
  \end{itemize}
  All these periods are divided into 24 hours x 60 minutes x 60 seconds.
  The time measured by the movement of the sun, the vernal equinox or a fixed star is called ({\it apparent or true}) {\it solar time, sidereal time, stellar time}.
\end{Def}

%The concept ''sidereal time'' is needed to explain the astronomical phenomena considered in Sections \ref{sec3} and \ref{sec4}.

%The {\it apparent or true solar time} is the time given by
%sundials.

At a fixed observation site, it is 12 o'clock local true
solar time when the sun is at the local meridian.

Sidereal time is also
a local time, i.e. for each observation location it is defined that it
is 0 o'clock sidereal time when the vernal equinox is on the local meridian.

True solar time is not a uniform flow of time. Therefore, a {\it mean
  solar time} has been introduced that represents a uniform time flow
with a day length of 24 hours.

There is also a mean sidereal time. But we do not use this.\\*[0.2cm]
\fbox{
  \parbox{0.9\linewidth}{
In this paper, we only use:
\begin{itemize}
\item mean solar time,
\item apparent sidereal time and apparent stellar time.
\end{itemize}}
}\\*[0,2cm]

%The difference $z$ between true solar time and mean solar time is called
%{\it equation of time}. In any year between -2101 and -1601, it
%fluctuates in an interval $z_{min} < z < z_{max}$, where $z_{min}$ = -18m
%24s and $z_{max}$ decreases from 12m 56s to 11m 52s.

%The time that
%the spring equinox needs to get from its transit over the local meridian back to
%the local meridian after one revolution of the earth is called a {\it sidereal day}. As usual, this period of time is divided into 24 hours of 60 minutes and 60 seconds each. This
%gives the {\it sidereal time}.

The sidereal day is approximately 3 minutes 56 seconds shorter than the solar day.
%(see \cite[p.66]{ahnert2}). 
This is illustrated by Fig. \ref{stellarday}.

Position 1 shows the earth when the Sun or a distant star is on the
local meridian of an observer. In position 2, the earth has rotated
360 degrees and the local meridian is again pointing in the same
direction as in position 1. If there had been a distant star on the
local meridian in position 1, it would now be on the local meridian
again. One sidereal day would have passed.

The Sun, however, would not yet be on the local meridian. The time to
position 3 would have to pass before the local meridian points to the
Sun and a solar day is over.

Very precise values for the duration of a solar day and a sidereal day are given in \cite{eop} and \cite[p.66]{ahnert2}.
\begin{eqnarray}
1 \text{ mean solar day} & = & 86400 \text{ seconds} \label{3.1}\\
1 \text{ sidereal day} & = & 86164,09053083288 \text{ seconds} \label{3.2}\\
                       & = & 0,997269566329084 \text{ mean solar day}\nonumber 
\end{eqnarray}
From (\ref{3.1}) and (\ref{3.2}) we can now calculate the difference between solar day and sidereal day.
\begin{eqnarray}
\hspace*{0.8cm} 1 \text{ mean solar day} - 1 \text{ sidereal day} & = & 235,90946916712
\text{ seconds} \\
 &=& 3 \text{ minutes } 55,90946916712 \text{ seconds}\nonumber
\end{eqnarray}
Viewed from the same location, a star seen at one position in the sky
will be seen at the same position on another night at the same sidereal
time. This is particularly true of its positions in rising, transit and setting.
\mycomment{
\begin{figure}[!ht]
  \begin{center}
    \includegraphics[width=8cm]{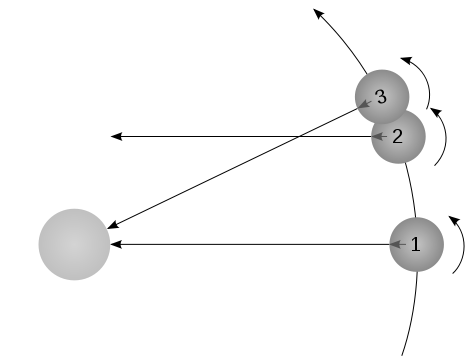}
  %\vspace*{-1.5cm}
\end{center}
\caption{Solar day and sidereal day.} 
\label{stellarday}
\end{figure}
}

Let the celestial sphere be seen from a fixed observation point at a
point in sidereal time $ t_0 $. Then the celestial sphere shows exactly the
same sight on every other day at the same sidereal time. Only the positions of the sun, moon and planets
have changed.

The fact that a given view of the sky always recurs at the same
sidereal time is only valid in shorter periods of time. In large
periods of time there is a shift in the sidereal time at which a given
view of the sky appears (see Table \ref{tab10}).

The stellar day is about 0.008 seconds longer than
the sidereal day because the vernal equinox moves slowly due to the precession and nutation of the Earth's axis. As a result, the shift mentioned above is smaller for stellar time than for sidereal time.

\cite{eop} indicates for the duration of the stellar day:
\begin{eqnarray}
1 \text{ stellar day} &=& 86164,098903691 \text{ seconds} \label{3.4}
\end{eqnarray}
From (\ref{3.4}) and (\ref{3.2}) we obtain
\begin{eqnarray}
1 \text{ stellar day} - 1 \text{ sidereal day} & = & 0,00837285812 \text{ seconds}
\end{eqnarray}

%Table \ref{tab1a}, which was calculated with the aid of the {\tt Stellarium} software
%\cite{zottihoffmann,zottiwolf}, illustrates the shift in the sidereal
%time using the example of the transits of the spring equinox and the
%star Spica on the day of the spring equinox.

%The column of the sidereal times of the transits of the vernal equinox
%contains only values around 0 o'clock for all given dates. This corresponds exactly to the definition that says that it is 0 o'clock
%sidereal time when the vernal equinox has its transit.

%Let us now consider the transits of Spica. The first five lines of
%Table \ref{tab1a} concern dates that are 1 year apart. There are no
%significant shifts in sidereal time here.

%The next 3 lines contain dates that are 10, 50 and 100 years apart
%from the previous date. This results in shifts of 31.4 seconds, 2
%minutes 37.5 seconds and 5 minutes 17.5 seconds. The following lines
%with dates, between which there are long time gaps, then also
%show large shifts in the sidereal time.

%But at whatever point in time $t_0$ one may be. If one considers only small temporal deviations from $t_0$ of 1, 10, 50 or 100 years, then there will only be such small shifts in sidereal time as in the first 8 lines of Table \ref{tab1a}.

\section{Heliacal settings} \label{sec3}
According to the interpretation of W. Schlosser, the Nebra people of the Bronze Age chose the heliacal setting of the Pleiades as the starting date of the farming year and determined it through observations. In this section we put together some basic astronomical concepts that can be used to explain and calculate heliacal sets.

\subsection{What is a heliacal setting?}
If we use sidereal time, the rise, transit, and setting of a star
occur at the same times every day. Measured in solar time, however,
the rise, transit and setting of a star are always 3 minutes 56
seconds earlier than the day before. The reason for this is that a
sidereal day is 3 minutes 56 seconds shorter than a solar day. As a
result, the setting of the Pleiades takes place 3 minutes 56 seconds
earlier each day than the day before.

In the first half of the year up to the summer solstice, the days get
longer, because the Sun sets a little later each day than the day
before (about 1 minute). The setting times of the Sun and the Pleiades
thus converge and could meet.

Table  \ref{tab1} shows examples of set times from the present and from the time of the Nebra Disc.
\begin{table}[t] %[h] [b]
\begin{center}
\begin{tabular}{|l|l|l|}
\hline
 setting Sun &  setting $\eta$ Tau & apparent sidereal time \\
\hline
 2020-1-1 Greg. 16:12  & 2020-1-2 Greg 5:34  & 12:05:15 \\
\hline
 2020-5-30 Greg. 21:14  & 2020-5-30 Greg. 20:44 & 12:05:02 \\
\hline
\end{tabular}
$\;$\vspace{10pt}
\begin{tabular}{|l|l|l|}
\hline
 setting Sun &  setting $\eta$ Tau & apparent sidereal time \\
\hline
 -1601-1-1 Jul. 15:55  & -1601-1-2 Jul. 1:21 & 7:08:11 \\
\hline
 -1601-4-28 Jul. 18:44  & -1601-4-28 Jul. 17:41 & 7:08:13  \\
\hline
\end{tabular}
\vspace{3mm}
\caption{Some settings of the Sun and of $\eta$ Tau (Pleiades) in 2020
and -1601. The sidereal times are the times for the setting of the
Pleiades. If a Gregorian date is given, ''time'' is given in CET. In the case of a Julian date, ''time'' is Local Mean Solar Time.}
\label{tab1}
\end{center}
\end{table}
At the beginning of the year the setting of the Pleiades takes place
deep in the night. It is not affected by the light of the Sun. The
Pleiades can be seen from sunset until its own setting.

In May or April, however, the Pleiades set before the Sun. They can
then no longer be seen after sunset. In both examples there must be a
day between the two dates in Table \ref{tab1} on which the Pleiades can be seen for the
last time after sunset. This day is the {\it heliacal setting} of the
Pleiades (see \cite[p.36]{weigzimm}).

On this day the Pleiades only appear shortly after sunset, when dusk
is so far advanced that they can be seen. After a few minutes they
disappear again because then they set themselves. The Pleiades
disappear in the rays of the Sun. ''Heliacal'' means "in the rays of
the Sun". But actually there is only a remnant of sunlight. The day after the heliacal setting, the Pleiades no longer appear.

Briefly something about the phases of twilight (see \cite{astro}). Twilight is divided
into 3 phases (see Table \ref{tab2}).
\begin{table}[t] %[h] [b]
\begin{center}
\begin{tabular}{|l|c|}
\hline
 Twilight phase &  depth of the Sun below the horizon\\
\hline
 civil twilight  & $0^\circ$ - $6^\circ$ \\
\hline
 nautical twilight  & $6^\circ$ - $12^\circ$ \\
\hline
 astronomical twilight  & $12^\circ$ - $18^\circ$ \\
\hline
\end{tabular}
\vspace{3mm}
\caption{The phases of twilight.}
\label{tab2}
\end{center}
\end{table}
In civil twilight, stars with a magnitude of 1 or brighter are only
visible. Constellations cannot yet be recognized because too many
stars are missing. In the nautical twilight stars of a magnitude 3
or brighter are visible and thus also constellations. But there are
also objects on earth, such as the horizon, recognizable, so that
measurements for ship navigation are possible. The Pleiades cannot be seen until nautical twilight at the earliest, as their brightest stars are of the magnitude of about 3.

\subsection{The Arcus Visionis}
The traditional method of determining the heliacal setting of a star
is based on what is known as the arcus visionis, in German
''Sehungsbogen''.

If we want to observe a star of a certain magnitude shortly after
sunset, the Sun must be a certain depth below the horizon so that the
star can be seen at all. If we want to watch the setting of this star,
which happens shortly after sunset, the depth of the Sun will have to
be greater than normal because the star will then approach the horizon
and its magnitude will be further reduced by the extinction of the
atmosphere. If in the following days the time of the star's setting
approaches the time of the Sun's setting, the depth of the Sun will
decrease because the Sun has less time until the star's setting to
gain depth. The smallest value at which the setting of the star can
still be seen is the depth of the Sun on the day of the heliacal
setting of the star.

Because of its extinction, the star will usually not reach the horizon
when it sets, but will instead disappear at a certain height. For mathematical reasons one does not use the depth of the Sun at the time of the actual disappearance of the star, but one waits for the time until the star reaches the horizon and works with the depth that the Sun then has.
\begin{Def}[{\cite[p.731]{schoch}}]
The {\it arcus visionis} $\sigma$ of a star is the depression of the Sun below
the horizon, measured in the vertical circle for the moment when the
star sets on the last evening when it is visible or rises on the first
morning when it is visible, refraction being disregarded in the case
of both bodies.
\end{Def}
Figure \ref{arcvis} shows the arcus visionis of the Pleiades in the year -1601.
\mycomment{
\begin{figure}[!ht]
  \begin{center}
    \includegraphics[width=\textwidth]{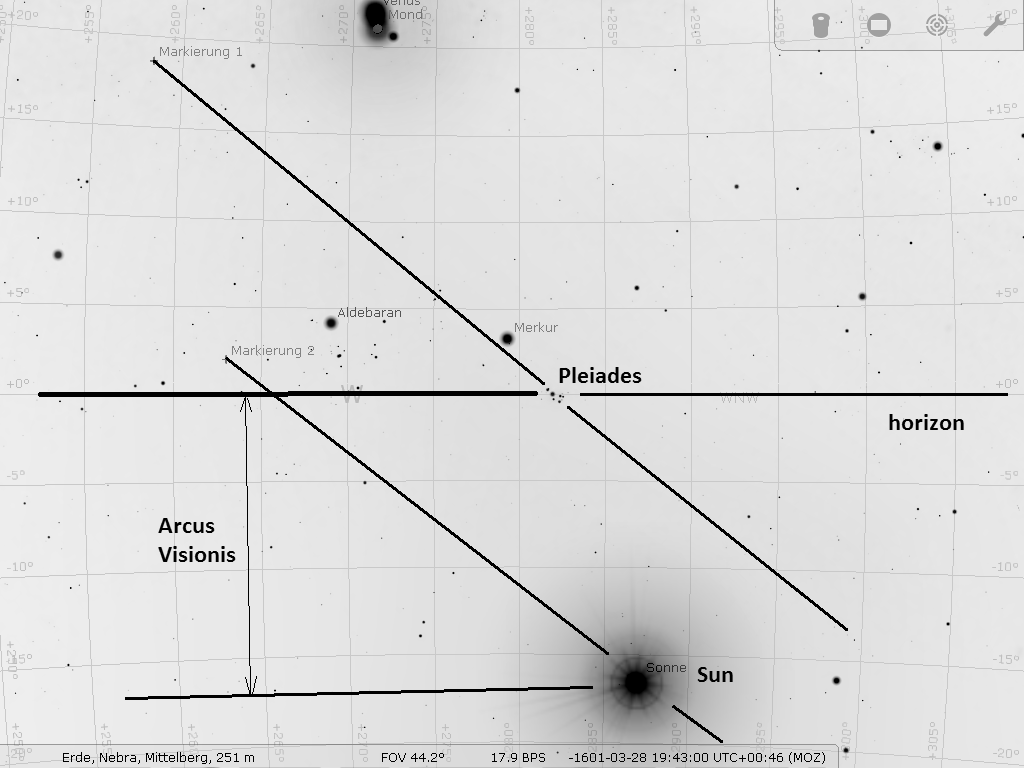}
  %\vspace*{-1.5cm}
\end{center}
\caption{-1601: Sun, Pleiades and Arcus Visionis on the day of the heliacal setting of the Pleiades, when the Pleiades are on the horizon.} 
\label{arcvis}
\end{figure}
}
On the days before the heliacal setting of the Pleiades, the depth of
the Sun is greater than $\sigma$ when the Pleiades are on the
horizon. The Pleiades can then be seen longer than at their heliacal
setting. After the heliacal setting of the Pleiades, the depth of the
Sun is less than $\sigma$. There is then too much residual light from the Sun and the Pleiades can no longer be seen.

%The definition of arcus visionis is a purely mathematical definition
%for calculations. Since everything is below the horizon, the
%angle $\sigma$ and the celestial bodies involved are not visible.

Since the determination of the arcus visionis is an old topic, it is
difficult to find information about values of $\sigma$
today. Wikipedia \cite{wikibogen} gives the following $\sigma$-values for the Pleiades without citing the source:
\begin{center}
\begin{tabular}{|l|c|}
\hline
 $\sigma$ Pleiades &  transparency of the air\\
\hline
 $14.5^\circ\leq\sigma\leq 15.5^\circ$  & good \\
\hline
  $19.5^\circ\leq\sigma\leq 20.5^\circ$  & poor \\
\hline
\end{tabular}
\end{center}
The value for good transparency of the air matches the value in
\cite[Table E 64]{neugebauer} where $\sigma = 15.5^\circ$ is given for
a star with magnitude 3.

W. Schlosser \cite{schlosser3} did not use an arcus visionis, but
explained that the heliacal setting of the Pleiades occurs when the
Pleiades are $5^\circ$ high and at the same time the Sun is $15^\circ$
below the horizon (both on the western horizon). Since the Pleiades
and the Sun are close to the ecliptic, the Sun will deepen another
$5^\circ$ as the Pleiades move to the horizon. Thus W. Schlosser's
approach corresponds to the use of an arcus visionis of $\sigma =
20^\circ$.

Since we later also have to consider the heliacal setting of the
Aldebaran ($\alpha$ Tau), we also need its arcus visionis. B.L. van der
Waerden \cite{waerden} gives $\sigma = 11.5^\circ$ for Aldebaran.

Finally, Table \ref{tab3} shows the $\sigma$-values that we will use.

\begin{table}[t] %[h] [b]
\begin{center}
\begin{tabular}{|l|l|}
\hline
 $\sigma = 16^\circ$ &  for Pleiades with good transparency of the air\\
\hline
 $\sigma = 20^\circ$  & for Pleiades with poor transparency of the air
and\\
 &  for comparisons with values from W. Schlosser\\
\hline
 $\sigma = 12^\circ$  & for Aldebaran \\
\hline
\end{tabular}
\vspace{3mm}
\caption{The  $\sigma$-values that we use.}
\label{tab3}
\end{center}
\end{table}

\subsection{Heliacal settings of the Pleiades and Aldebaran}
If the arcus visionis $\sigma$ of a star is known, the heliacal
setting of the star can be calculated using spherical trigonometry
(see \cite[p.581-586]{ideler2}). However, we use a method
\cite[p.57-58]{ideler1} that was
used in earlier times to determine heliacal settings with the help of
an armillary sphere made of metal. We carried out this method within
the framework of the {\tt Stellarium} software \cite{zottihoffmann,zottiwolf} (see Appendix \ref{appD}) and
thus determined the heliacal settings of the Pleiades and the
Aldebaran $\alpha$ Tau) compiled in Table \ref{tab4}. The accuracy of {\tt Stellarium} (1 arc second) is much greater than that of an armillary sphere.

\begin{table}[t] %[h] [b]
\begin{center}
\begin{tabular}{|c|c|c|}
\hline
 Pleiades $\sigma = 16^\circ$ & Pleiades $\sigma = 20^\circ$ &
 Aldebaran $\sigma = 12^\circ$\\
\hline
 -1-4-8 & -1-4-3  & -1-4-16\\
 -601-4-4 & -601-3-30 & -601-4-12\\
 -1001-4-2 &-1001-3-28  & -1001-4-11\\
\hline
 -1601-3-28& -1601-3-24& -1601-4-7\\
 -1801-3-27& -1801-3-22& -1801-4-5\\
 -1941-3-26& -1941-3-21& -1941-4-5\\
 -2101-3-25& -2101-3-20& -2101-4-4\\
\hline
\end{tabular}
\vspace{3mm}
\caption{Julian dates of the heliacal settings of the Pleiades and
  Aldebaran ($\alpha$ Tau), calculated for the Mittelberg.}
\label{tab4}
\end{center}
\end{table}

We have chosen 1 as the last digit of each year so that the year in
question is not a leap year. In a leap year, all dates after February
are shifted by one day. This could give the impression of inaccuracies.

The period -2101 to -1601 is the period in which the Sky Disc was
dated by H. Meller \cite{meller1,meller2}, E. Pernicka \cite{pernicka2, pernicka}
and others (see \cite[p.115]{pernicka}). We also consider the period
-1001 to -1 because there was a suggestion by R. Gebhard and R. Krause
\cite{gebhardkrause} to date the Disc to this period. We calculate
heliacal sets for both periods to see in which period the heliacal
sets of the Pleiades are
better suited as starting dates for the farming year.

Using the method described in Appendix \ref{appD}, we have determined
the Julian dates compiled in Table \ref{tab4} for the heliacal
settings of the Pleiades and Aldebaran. In order to be able to assess
what the seasonal conditions at the time of a heliacal set actually
are, the Julian dates in Table \ref{tab4} must be converted into
Gregorian dates. This can be done with the help of Table \ref{tab7}
and the formula
$$\text{Gregorian date} = \text{Julian date} - \Delta .$$
Table \ref{tab5} shows the results. In the first line of Table
\ref{tab5} we have also given the heliacal sets for 2020 in order to
be able to relate them to our own sky observations made this year.

The values for the heliacal settings of the Pleiades for -1601 and
-1941 at $\sigma = 20^\circ$ are the same as the dates given by
W. Schlosser in \cite{schlosser3}.

\begin{table}[t] %[h] [b]
\begin{center}
\begin{tabular}{|c|c|c|}
\hline
 Pleiades $\sigma = 16^\circ$ & Pleiades $\sigma = 20^\circ$ &
 Aldebaran $\sigma = 12^\circ$\\
\hline
2020-4-30 & 2020-4-23 & 2020-5-6 \\
\hline
   -1-4-6 & -1-4-1 & -1-4-14 \\
 -601-3-28 & -601-3-23 & -601-4-5 \\
-1001-3-23 & -1001-3-18 & -1001-4-1 \\
\hline
-1601-3-14 & -1601-3-10 &  -1601-3-24 \\
-1801-3-11 & -1801-3-6  & -1801-3-20 \\
-1941-3-9  & -1941-3-4  & -1941-3-19 \\
-2101-3-7 & -2101-3-2 & -2101-3-17 \\
\hline
\end{tabular}
\vspace{3mm}
\caption{Gregorian dates of the heliacal settings of the Pleiades and
  Aldebaran ($\alpha$ Tau). The heliacal sets in 2020 were calculated for Leipzig, Seebenischer Stra{\ss}e, all others for the Mittelberg.}
\label{tab5}
\end{center}
\end{table}
Regarding the question in which of the periods ''-2101 to -1601'' or
''-1001 to -1'' the heliacal settings of the Pleiades provide better
start dates for the farming year, it can now be said that the heliacal
settings in the time -1001 to -1 are not very suitable. They fall in
the second half of March or even April, which would be too late for a
sowing date. If one wants to see the purpose of the Disc, among other things, in determining the heliacal setting of the Pleiades for agriculture, then that would not work in the period -1001 to -1.

\section{Signs that the heliacal setting of the Pleiades is imminent
  or has just ended} \label{sec4}
The Bronze Age Nebra people could not yet predict the heliacal setting
of the Pleiades. They could only determine it through observation. To make
this work a little easier, the following additional information would
be helpful:
\begin{enumerate}
\item A sign that the heliacal setting of the Pleiades is imminent. If
  one does not have such a sign, one must observe the Pleiades daily
  for many more days than necessary in order to find the heliacal
  setting.
\item A sign that the heliacal setting of the Pleiades is definitely over. Such a sign is necessary if no observations were possible on the day of the heliacal set due to heavy cloud cover.
\end{enumerate}
In their observational practice, the Nebra people will certainly have sensed the need for such signs and looked for them.

\subsection{A sign that the heliacal setting of the Pleiades is
  definitely over}
At the setting of the Aldebaran ($\alpha$ Tau) one can see that the heliacal setting of the Pleiades is definitely over.
\mycomment{
\begin{figure}[!ht]
  \begin{center}
    \includegraphics[width=\textwidth]{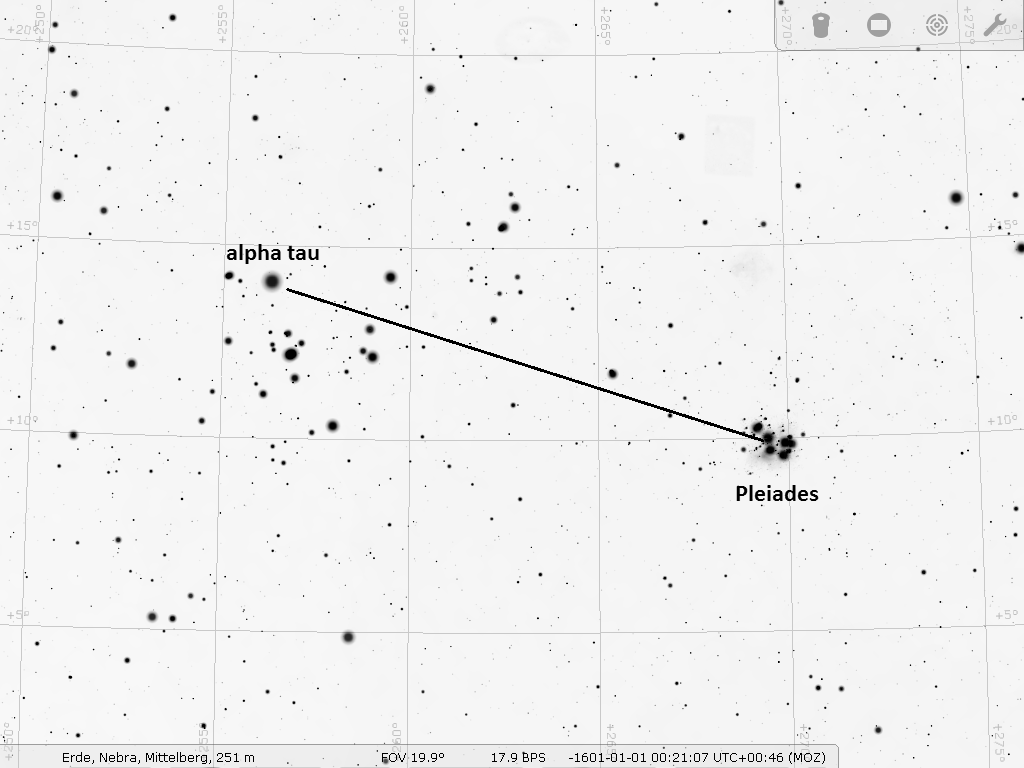}
  %\vspace*{-1.5cm}
\end{center}
\caption{The line from Aldebaran ($\alpha$ Tau) to the Pleiades ($\eta$ Tau) at -1601-1-1 above the Mittelberg when the Pleiades are $10^\circ$ high in the evening.} 
\label{lineAP}
\end{figure}
}

In the years -2101 to -1601 the Pleiades and the Aldebaran lay on an
almost horizontal line shortly before the setting of the
Pleiades. Figure \ref{lineAP} shows this line on -1601-1-1 on the
Mittelberg, when the Pleiades are only $10^\circ$ degrees high in the
evening. The Aldebaran then has a height of about $14^\circ$.

The magnitude (visual brightness) of the Aldebaran is 0.85 while the
brightest stars of the Pleiades have magnitudes of around 3. This
means that on setting the Pleiades disappear due to the extinction at
a higher height than the Aldebaran. During our own observations, the
Pleiades always disappeared at an altitude of $7^\circ$ to $8^\circ$,
while Aldebaran could still be seen up to $2^\circ$ altitude (see
Appendix \ref{F.2}, Tables \ref{tab18} and \ref{tab20}. According to
W. Schlosser, the Pleiades can be seen up to $5^\circ$. But even then,
Aldebaran can still be seen at lower altitudes.

Since Aldebaran is a little higher than the Pleiades and on the other hand can be seen a little longer than the Pleiades every day, it is a good control star for the setting of the Pleiades.
If the heliacal setting of the Pleiades cannot be observed due to cloudiness and a few days later, when the visibility is good again, the Aldebaran is already quite low when it appears after sunset (for example lower than $8^\circ$), then the heliacal setting of the Pleiades is over.
Since Aldebaran is pictured on the Sky Disc (Figure \ref{pflug2}), the
Nebra people had this star in their sights. They will probably have
used it as an aid to observing the heliacal setting of the
Pleiades. This is particularly possible because the heliacal setting
of Aldebaran occurs a sufficient number of days later than that of the
Pleiades (see Table \ref{tab5}).

\subsection{A sign that the heliacal setting of the Pleiades is
  imminent}
\subsubsection{A possible use of the ''Auriga Line''} \label{sec4.2.1}
The stars of the ''Auriga Line'' (see Subsection \ref{subsec2.3}) play an important role in a criterion for the imminence of the Pleiades' heliacal setting.
\mycomment{
\begin{figure}[!ht]
  \begin{center}
    \includegraphics[width=\textwidth]{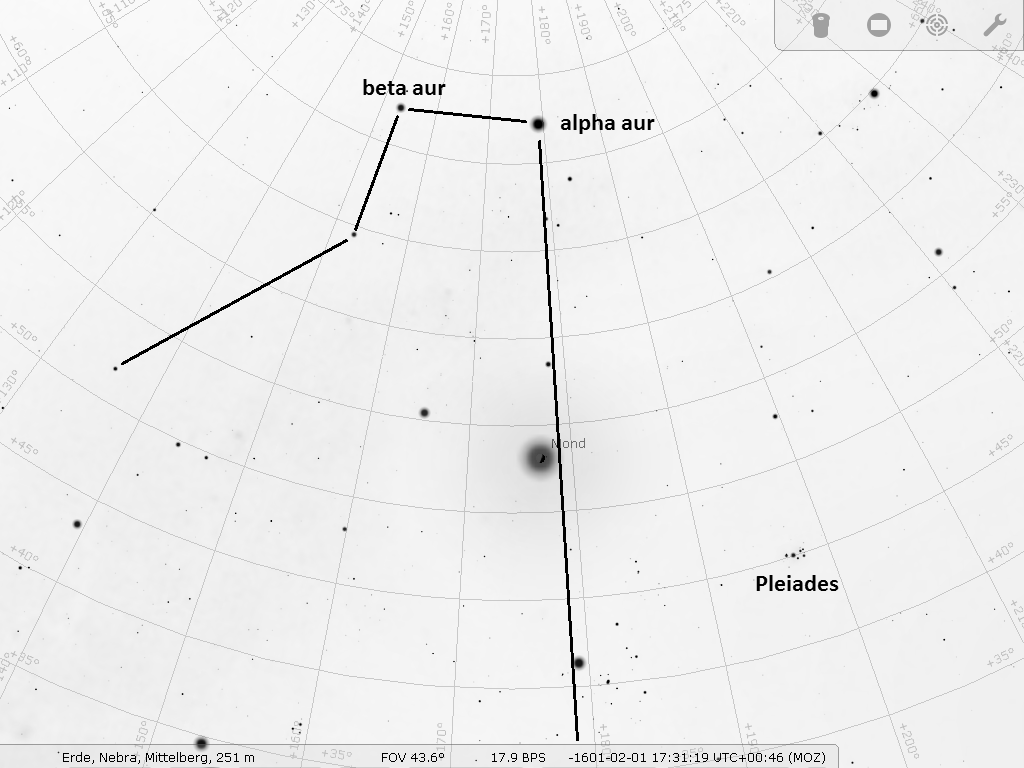}
  %\vspace*{-1.5cm}
\end{center}
\caption{-1601-02-01 Julian. Pleiades to the right of $\alpha$ Aur
  (Capella). Sun height $h = -9^\circ$.} 
\label{fig13}
\end{figure}
}
\mycomment{
\begin{figure}[!ht]
  \begin{center}
    \includegraphics[width=\textwidth]{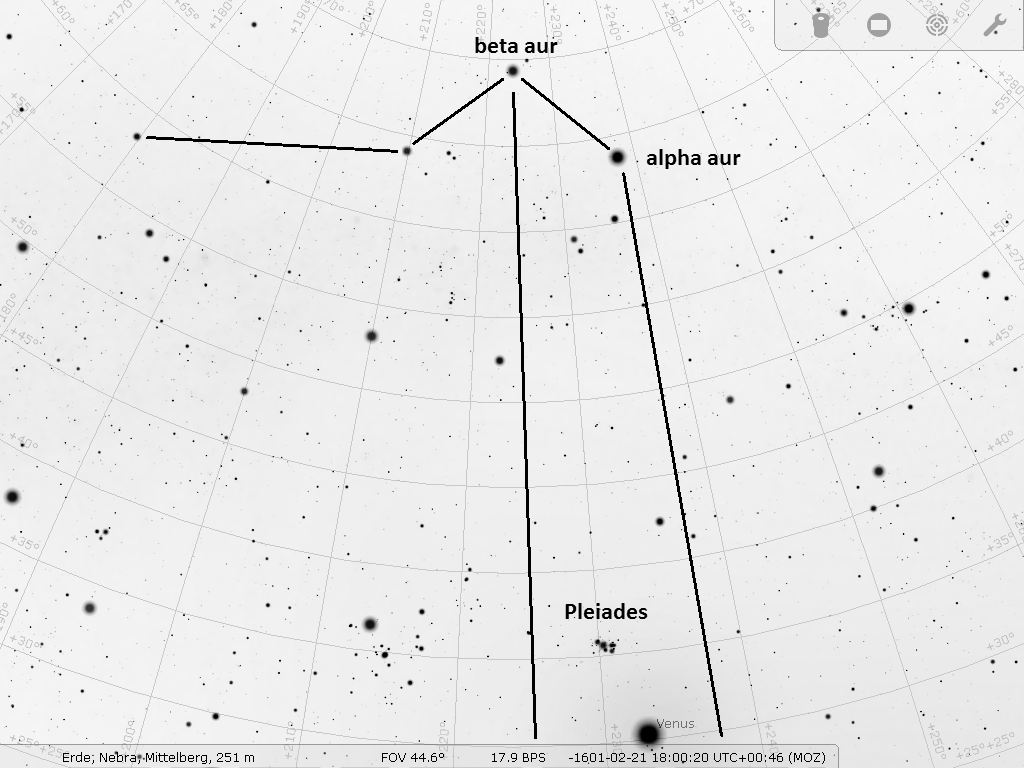}
  %\vspace*{-1.5cm}
\end{center}
\caption{-1601-02-21 Julian. Pleiades between $\beta$ Aur and $\alpha$
  Aur (Capella). Sun height $h = -9^\circ$.} 
\label{fig14}
\end{figure}
}
\mycomment{
\begin{figure}[!ht]
  \begin{center}
    \includegraphics[width=\textwidth]{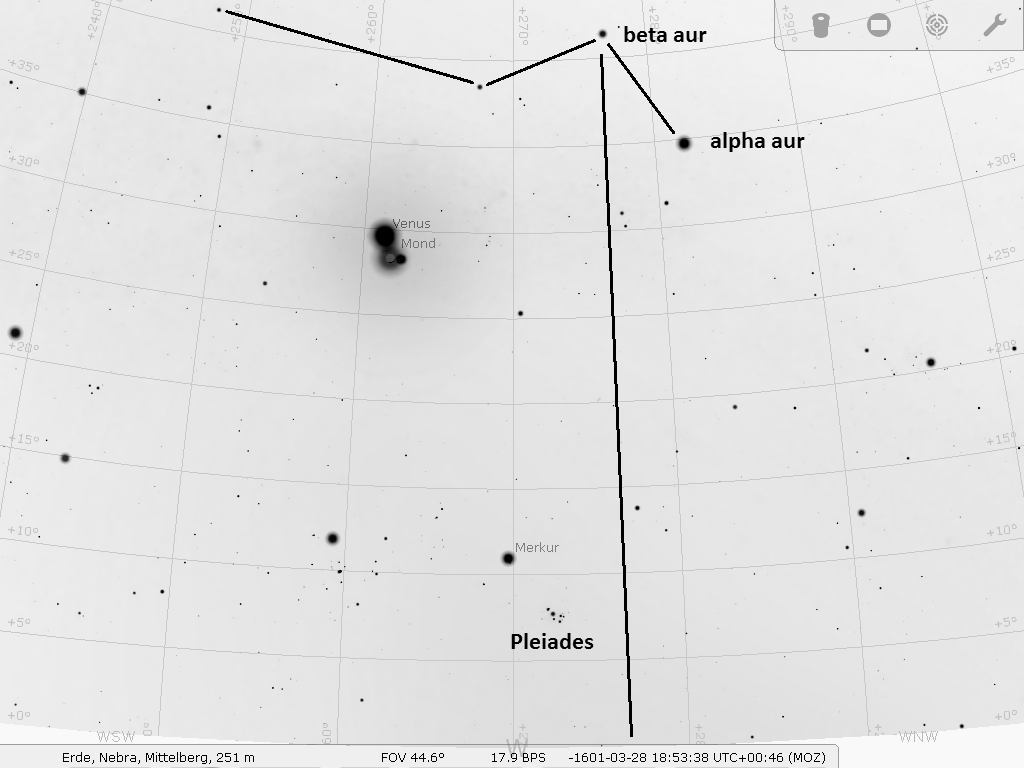}
  %\vspace*{-1.5cm}
\end{center}
\caption{-1601-03-28 Julian. Heliacal setting of the Pleiades at $\sigma =
  16^\circ$. Pleiades to the left of beta Aur. Sun height $h = -9^\circ$.} 
\label{fig15}
\end{figure}
}

The heliacal setting of the Pleiades always occurs at nautical
twilight after sunset. To find the date of this setting, the Nebra
people will have observed the sky at nautical twilight some time prior
to that. They will have noticed that the Pleiades are making a slow
movement under the stars of the Auriga line. Figures \ref{fig13}, \ref{fig14} and \ref{fig15}
show this movement in the year -1601. All three images were calculated
for the time when the sun is $-9^\circ$ below the horizon in the evening.

On February 1st Julian, the Pleiades are far to the right of the
perpendicular that runs from $\alpha$ Aur (Capella) to the horizon. On
February 21, the Pleiades will be between the perpendiculars of
$\alpha$ Aur and $\beta$ Aur to the horizon. And on March 28 Julian,
the day of their heliacal setting, the Pleiades are to the left of
$\beta$ Aur's perpendicular to the horizon.

This movement of the Pleiades under the stars of the Auriga Line is
only an apparent movement. In reality, the stars do not move against
each other in such a short time. But using $\alpha$ Aur or $\beta$ Aur
as fixed reference points gives the impression that the Pleiades are
moving along under these stars.

From the series of Figures \ref{fig13}-\ref{fig15} it follows that
there are also days when the Pleiades are exactly perpendicular under
$\alpha$ Aur or $\beta$ Aur in the nautical twilight. Of these days, the day when the Pleiades are perpendicular
under $\beta$ Aur in nautical twilight is particularly important. Only
from this day on there is a need to observe the sky daily to determine
the day of the heliacal setting of the Pleiades.

We believe that the Nebra people used the Auriga Line to find the
day when the Pleiades are perpendicular under $\beta$ Aur at nautical
twilight. This allowed them to greatly shorten the time they had to
observe the sky each day. It is now understandable why they moved the
star $\beta$ Aur on the Sky Disc when the horizon arcs were placed
(see Figure \ref{auxiliaryline2}). It was the most important star on
the Auriga Line and must not be lost at all during the forging
work.

We introduce the following concept:
\begin{Def}
Let $P$ be a given astronomical observing site. We call a day {\it beta day at $P$ to solar depth $h_0$} if
\begin{enumerate}
\item on that day the Pleiades appear from $P$ vertically below $\beta$ Aur when the Sun has a depth $h = h_0$ below the western horizon in the evening.
\item the Pleiades perform an apparent movement from right to left under $\beta$ Aur with respect to $\beta$ Aur.
\end{enumerate}
\end{Def}
$h = h_0$ is the sun depth at which the Pleiades become visible at nautical twilight.
\mycomment{
\begin{figure}[!ht]
  \begin{center}
    \includegraphics[width=\textwidth]{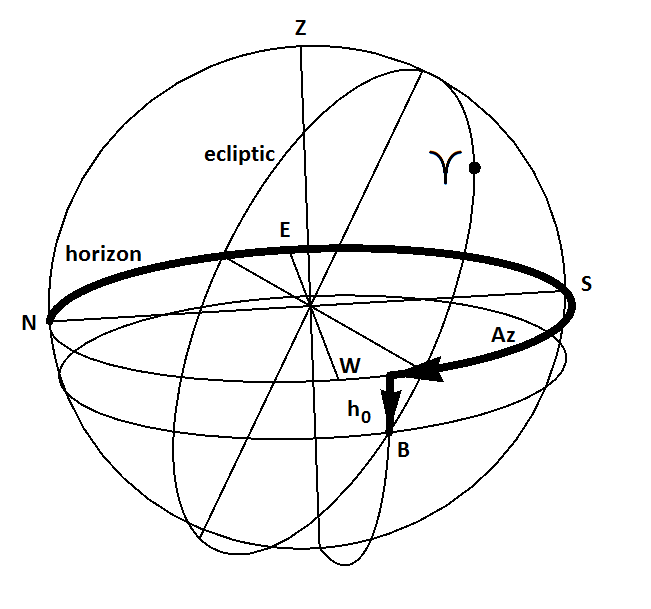}
    %\includegraphics[width=10cm]{horizont_mit_ekliptik.png}
  %\vspace*{-1.5cm}
\end{center}
\caption{Determination of the beta point.} 
\label{HorEklipt}
\end{figure}
}

\subsubsection{Details about beta days} \label{sec5.2.2}
%We must now show that the method explained in the previous section
%works in every year of the Bronze Age, not just in the year -1601.

\begin{table}[t] %[h] [b]
\begin{center}
\begin{tabular}{|c|c|c|c|c|}
\hline
 date & time & $\text{Az}_{\eta}$ & app. sid. time & description \\
\hline
-1601-1-1&  18:44:55&$  180^\circ0'$&      0:31:02 & transit
Pleiades\\
-1601-1-1& && 1:19:27 & sky view of Figure \ref{fig13} \\
-1601-1-1&     20:32:55&$  216^\circ36'$&  2:19:20 & $\text{Az}_{\eta} = \text{Az}_{\alpha}$\\
-1601-1-1& && 3:07:24 & sky view of Figure \ref{fig14} \\
-1601-1-1&     21:57:55&$  239^\circ44'$&  3:44:34 & $\text{Az}_{\eta} = \text{Az}_{\beta}$\\
-1601-1-1&     23:39:55&$  266^\circ55'$&  5:26:50 & hardly any
relative movement\\
&&&& to $\beta$ Aur (begin)\\
-1601-1-2& && 6:18:50 & sky view of Figure \ref{fig15} \\
-1601-1-2&      0:38:55&$  273^\circ42'$&  6:26:00 & hardly any
relative movement\\
&&&& to $\beta$ Aur (end)\\
-1601-1-2&      1:19:55&$  281^\circ41'$&  7:07:07 & set Pleiades\\
-1601-1-2&      2:46:55&$  299^\circ18'$&  8:34:21 & $\text{Az}_{\eta} = \text{Az}_{\beta}$\\
-1601-1-2&      4:27:55&$  322^\circ39'$&  10:15:30 & $\text{Az}_{\eta} = \text{Az}_{\alpha}$\\
-1601-1-2&     12:03:55&$   77^\circ55'$&  17:52:52 & rise Pleiades\\
-1601-1-2&     14:57:55&$  113^\circ14'$&  20:47:22 & hardly any
relative movement\\
&&&& to $\alpha$ Aur (begin)\\
-1601-1-2&     15:53:55&$  126^\circ47'$&  21:43:30 & hardly any
relative movement\\
&&&& to $\alpha$ Aur (end)\\
-1601-1-2&     18:40:59&$  180^\circ0'$&   0:31:02  & transit Pleiades\\
\hline
\end{tabular}
\vspace{3mm}
\caption{Important positions of the Pleiades during a full azimuth
  cycle on -1601-1-1 Jul. and -1601-1-2 Jul. on Mittelberg. ''Time''
  is Local Mean Solar Time. $\text{Az}_{\eta}$ = azimuth of $\eta$ Tau (Alcyone
/ Pleiades), $\text{Az}_{\alpha}$ = azimuth of $\alpha$ Aur
  (Capella), $\text{Az}_{\beta}$ = azimuth of $\beta$
Aur. ''app. sid. time'' means ''apparent sidereal time''.}
\label{tab7a}
\end{center}
\end{table}
\mycomment{
\begin{figure}[!ht]
  \begin{center}
    \includegraphics[width=\textwidth]{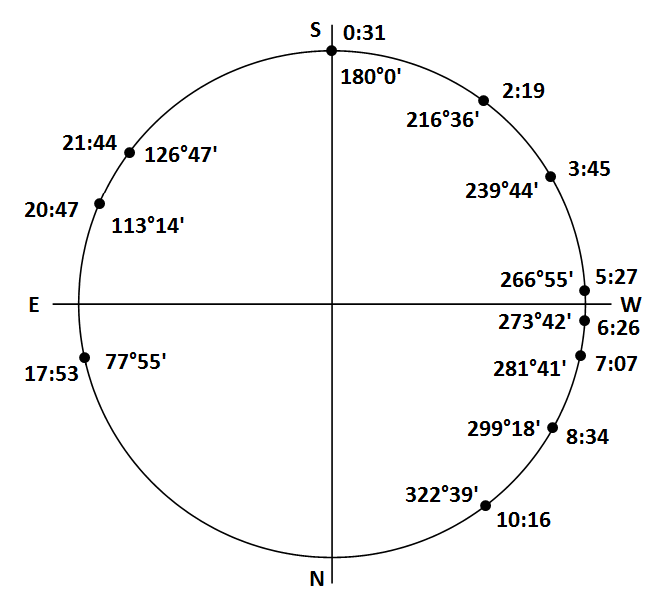}
  %\vspace*{-1.5cm}
\end{center}
\caption{Positions of the Pleiades from Table \ref{tab7a}.} 
\label{azimutkreis}
\end{figure}
}
Each position of the Pleiades shown in Figures \ref{fig13}-\ref{fig15}
is also traversed by the Pleiades on every single day within 24
hours. Table \ref{tab7a} contains important positions passed by the
Pleiades between their transit on -1601-1-1 and their subsequent
transit on -1601-1-2 (on the Mittelberg). This table was calculated
with {\tt Stellarium} \cite{zottihoffmann,zottiwolf} using the method
described in Appendix \ref{appE}.

If Az$_\eta$ = Az$_\alpha$ or Az$_\eta$ = Az$_\beta$ holds, the
Pleiades are perpendicular under $\alpha$ Aur (Capella) or $\beta$
Aur. ''Hardly any relative movement to $\beta$ Aur'' means
$$\text{Az}_\eta - \text{Az}_\beta \sim \text{const.} \;\;, \;\;\text{h}_\eta - \text{h}_\beta \sim \text{const.}$$
The stars $\eta$ Tau and $\beta$ Aur are moving, but seem to be at
rest with respect to each other. The analogous meaning has "hardly any
relative motion to $\alpha$ Aur".

"Set of the Pleiades" in Table \ref{tab7a} means that the Pleiades
have an altitude of $0^\circ$. In Figure \ref{fig15} the Pleiades
still have an altitude of $7^\circ$. Therefore the sky view of Figure
\ref{fig15} appears at an earlier sidereal time than the setting of
the Pleiades.

In the period 21:43:30 to 5:26:50 sidereal time the Pleiades perform a relative motion from right to left with respect to the stars $\alpha$ Aur and $\beta$ Aur and in the period 6:26:00 to 20:47:22 sidereal time a relative motion from left to right with respect to these stars.

Figure \ref{azimutkreis} shows the azimuth angles of $\eta$ Tau and sidereal times
at which the events listed in Table \ref{tab7a} occur at the observing
site. The points corresponding to Figures \ref{fig13}-\ref{fig15} have
not been plotted here. Note that the events between the setting of the
Pleiades and the rising of the Pleiades are not visible at the
observation site, because they take place below the horizon plane.

All events listed in Table \ref{tab7a} take place throughout the year at the apparent sidereal times given for them. Therefore the sequence of these events does not change.

Since sidereal time shifts by 3 minutes 56 seconds from solar time
each day, the events in Table \ref{tab7a} and Figure \ref{azimutkreis}
appear 3 minutes 56 seconds earlier each day than the previous
day. However, only a small fraction of these events were observed on
each day by the Nebra people, since we assume that they made their
observations when the Pleiades appeared at dusk at a solar depth
$h_0$. Thus a beta day only occurred when the time when the Pleiades
were perpendicular under $\beta$ Aur was very close to the time when the
Sun was at depth $h_0$ in the evening.

\begin{Def} \label{def5.2}
Let $P$ be a given astronomical observing site and $yr$ a given year. We consider on any day of this year in the horizon system of $P$ the coordinate circle $K_{h_0}$ of depth $h_0$ below the horizon of $P$ at the moment when
\begin{enumerate}
\item the Pleiades appear vertically under $\beta$ {Aur} from $P$.
\item the Pleiades perform an apparent motion from right to left under $\beta$ Aur with respect to $\beta$ Aur.
\end{enumerate}
Then we call the intersection $B$ of $K_{h_0}$ with the ecliptic, which has azimuth $180^\circ < Az_B < 360^\circ$, {\it beta point to depth $h_0$ at $P$} (see Figure \ref{HorEklipt}). The moment when conditions (1) and (2) are fulfilled, we call {\it beta event}.
\end{Def}
The condition $180^\circ < Az_B < 360^\circ$ guarantees that the intersection point under the west horizon is used as beta point and not the second also still existing intersection point under the east horizon.

It follows from the properties of the sidereal time (see Section \ref{sec3a}):
\begin{Prop}
  A view of the celestial sphere that satisfies conditions {\rm (1)} and {\rm (2)} in Def. {\rm\ref{def5.2}}, always takes place at the same sidereal time %or stellar time
on every day of a given year.
\end{Prop}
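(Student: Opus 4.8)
The plan is to reduce conditions (1) and (2) of Definition~\ref{def5.2} to statements that involve \emph{only fixed stars} in the horizon system of $P$, and then to apply the property of sidereal time recorded in Section~\ref{sec3a}: the view of the fixed celestial sphere from a fixed observing point recurs at the same apparent sidereal time on every day, provided the elapsed span is short enough that proper motion, precession and nutation are negligible --- here, one year. Concretely, I would let $t$ denote apparent sidereal time at $P$ and track the horizontal coordinates $(\mathrm{Az}_\eta(t),h_\eta(t))$ of $\eta$ Tau (Alcyone, representing the Pleiades as elsewhere in the paper) and $(\mathrm{Az}_\beta(t),h_\beta(t))$ of $\beta$ Aur.

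First I would observe that neither condition (1) nor condition (2) mentions the Sun, Moon or planets; they refer only to $\eta$ Tau, $\beta$ Aur, and the horizon plane of $P$. Since $P$ is fixed, its horizon system is fixed, and over a single year the equatorial coordinates of these two stars may be treated as constant. Hence, by the cited property of sidereal time, the maps $t\mapsto(\mathrm{Az}_\eta(t),h_\eta(t))$ and $t\mapsto(\mathrm{Az}_\beta(t),h_\beta(t))$ are, up to negligible error, one and the same pair of functions on every day of the given year; in particular $\mathrm{Az}_\eta(t)-\mathrm{Az}_\beta(t)$ and $h_\eta(t)-h_\beta(t)$ are day-independent functions of $t$.

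Next I would translate the two conditions. Condition (1), ``the Pleiades appear vertically under $\beta$ Aur,'' becomes $\mathrm{Az}_\eta(t)=\mathrm{Az}_\beta(t)$ together with $h_\eta(t)<h_\beta(t)$. Condition (2), ``apparent motion from right to left under $\beta$ Aur with respect to $\beta$ Aur,'' becomes a sign condition on the instantaneous rate at which $\mathrm{Az}_\eta-\mathrm{Az}_\beta$ changes at that instant (equivalently, on the direction of the diurnal drift of $\eta$ Tau relative to $\beta$ Aur). Because all of these quantities are functions of $t$ alone, the set of sidereal times at which (1) holds is the same for every day of the year, and the sign of the relative drift at such a $t$ is again determined by $t$ alone, so the additional requirement (2) singles out the same sidereal time(s) on every day. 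Generically the azimuths coincide at two instants per sidereal day with $\eta$ below $\beta$; condition (2) selects one of them, but the argument is unchanged. This is precisely the assertion of the Proposition.

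The step I expect to need the most care is the passage from ``one rotation of the Earth restores its orientation'' to ``the fixed-star view is literally a function of apparent sidereal time.'' Strictly, apparent sidereal time is the hour angle of the \emph{moving} true vernal equinox, so the stellar day and the sidereal day differ by about $0.008$ s (Section~\ref{sec3a}); over a year this discrepancy, together with the stars' proper motions, accumulates only to a fraction of an arcminute, which is exactly why the Proposition is stated ``for a given year.'' I would make this bound explicit and then treat $\mathrm{Az}_\eta-\mathrm{Az}_\beta$ and $h_\eta-h_\beta$ as day-independent up to that controlled error. A secondary point worth stating cleanly is that condition (2) is genuinely a condition on an instantaneous rate of change, so the conclusion asserts recurrence at the same \emph{value} of $t$, not merely at a geometrically congruent configuration.
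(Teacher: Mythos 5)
Your argument is correct and coincides with the paper's own justification: the paper gives no separate proof but simply invokes the property of sidereal time from Section~\ref{sec3a} (a fixed-star view of the sky recurs at the same apparent sidereal time each day, with drift negligible within one year), and since conditions (1) and (2) involve only $\eta$ Tau, $\beta$ Aur and the fixed horizon system of $P$, the beta event recurs at the same sidereal time. Your translation of the conditions into day-independent functions of sidereal time and your remarks on the stellar/sidereal discrepancy are just a more explicit elaboration of that same reasoning.
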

Therefore, we can make the determination of the beta point for a given year $yr$ on any day of $yr$.
%In section \ref{sec5.2.2} it becomes clear that there is only one beta point in a given year. If the sun is at the beta point in that year, we have {beta day}.

The synchronicity between beta event and sidereal time remains only for a certain time. The time of a beta event and the position of a beta point on the ecliptic depend on
\begin{itemize}
\item the location of the Earth's axis of rotation,
\item the proper motion of the pair of stars under consideration,
\item of the position of the ecliptic.
\end{itemize}
These three things perform slow motions in larger periods of time, which is why the sidereal time of a beta event and the position of a beta point shift with time. In section \ref{sec524} we will see that the sidereal time of a beta event already shows a shift of more than 1 minute after 25 years, while the stellar time of a beta event shifts by more than 1 minute after 500 years. Within one year, however, no shifts occur with certainty.

From Table \ref{tab7a} it follows that for an observation site $P$
there is only one beta point each year, because there is only one
event with $Az_{\eta} = Az_{\beta}$, which takes place above the
horizon. A beta day occurs when the sun is at the beta point during
its apparent course along the ecliptic. Thus also only one beta day happens each year.
\mycomment{
\begin{figure}[!ht]
  \begin{center}
    \includegraphics[width=\textwidth]{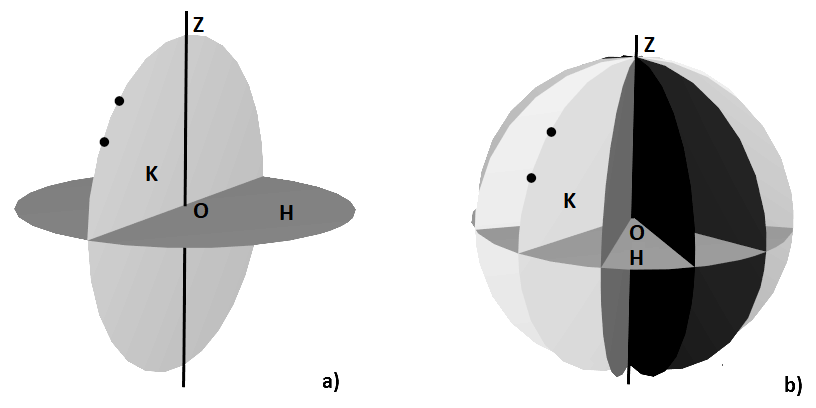}
  %\vspace*{-1.5cm}
\end{center}
\caption{Regarding the zenith criterion in Prop. \ref{prop5.4}.} 
\label{zenitkriterium}
\end{figure}
}

All previous considerations of this section were made only for $P =$ Mittelberg and the pair of stars $\beta$ Aur and $\eta$ Tau (Pleiades). We now clarify to what extent similar phenomena exist for arbitrary observing sites $P$ and arbitrary pairs of stars $S_1$, $S_2$. To do this, we first simplify the way of speaking.

\begin{Def}
  In the propositions \ref{prop5.4} to \ref{prop5.8} we only want to understand by the way of speaking "Two stars $S_1$, $S_2$ appear vertically above each other": $S_1$ and $S_2$ lie on a great circle of the celestial sphere whose plane is perpendicular to the plane of the horizon. 
\end{Def}

This way of speaking also includes the case in which one or both stars are below the horizon and thus the vertical positioning of the stars is not visible at all. Likewise, the case in which both stars lie on a great circle but are on two different sides of the zenith of the observation point. However, this simplifies our subsequent reasoning. One only has to be aware that when applying the propositions \ref{prop5.4} to \ref{prop5.8} one would also have to check whether one really sees the perpendicular positioning of the stars or whether it only appears in the form of the special cases listed here.

\begin{Prop} \label{prop5.4}
  Let $P$ be an observation site on Earth and $S_1$, $S_2$ two stars on the celestial sphere. We put a great circle $K$ of the celestial sphere through $S_1$ and $S_2$. Then it is valid: From $P$ $S_1$ and $S_2$ appear perpendicularly one above the other if and only if the zenith $Z$ of $P$ lies also on $K$.
\end{Prop}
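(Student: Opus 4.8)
The plan is to reduce the statement to elementary linear algebra on planes through the centre of the celestial sphere. I would model the celestial sphere as the unit sphere about the observer $P$ (for the directions involved it is irrelevant whether one uses $P$ or the Earth's centre), so that great circles correspond bijectively to planes through the origin. Write $H$ for the horizon plane of $P$; its unit normal is exactly the unit vector $\mathbf z$ pointing toward the zenith, and $Z$ is the point of the sphere in the direction $\mathbf z$. Let $\Pi$ be the plane of the given great circle $K$, with unit normal $\mathbf n$. For the bulk of the argument I would assume $S_1$ and $S_2$ are neither equal nor antipodal, so that $K$ --- hence $\Pi$ --- is the \emph{unique} great circle through both stars; the two degenerate configurations I would postpone to the end.

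The one geometric fact that does all the work is: two planes through the origin are mutually perpendicular if and only if their unit normals are mutually perpendicular (the angle between the planes equals the angle, taken in $[0,\pi/2]$, between the normals). Applying this to $\Pi$ and $H$ gives the chain
\[
\Pi \perp H \iff \langle \mathbf n,\mathbf z\rangle = 0 \iff \mathbf z \in \Pi \iff Z \in K ,
\]
the middle equivalence being that a vector is orthogonal to the normal of a plane exactly when it lies in that plane, and the last being the correspondence between $\Pi$ and $K$ on the unit sphere.

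Next I would unwind the left-hand side of the proposition using the convention fixed just before the statement: ``$S_1$ and $S_2$ appear vertically above each other'' means, by definition, that $S_1,S_2$ lie on some great circle whose plane is perpendicular to $H$. Since $S_1\neq\pm S_2$, the only great circle through both is $K$ itself, so this condition is literally ``$\Pi\perp H$''. Combined with the displayed chain this gives ``$S_1,S_2$ appear vertically above each other $\iff Z\in K$'', which is the claim. Finally I would dispatch the degenerate cases: if $S_1=S_2$ the assertion is trivial (take $K$ to be a vertical circle through the point); if $S_1=-S_2$ are antipodal, $K$ is not unique and one should read the proposition with $K$ chosen among the great circles through the pair --- there is always a vertical one --- or simply exclude this configuration, as the phrase ``we put a great circle $K$ through $S_1$ and $S_2$'' tacitly does. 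I expect no computational obstacle at all; the only subtlety worth care is exactly this bookkeeping about uniqueness of $K$, together with pinning down the ``perpendicular planes'' criterion, after which the normal-vector descriptions of great circles and of the horizon make everything immediate.
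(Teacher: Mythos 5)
Your proposal is correct and follows essentially the same route as the paper's proof: both reduce the claim to the fact that the plane of $K$ is perpendicular to the horizon plane exactly when the zenith direction (the horizon's normal) lies in that plane, the paper phrasing this via the normal $\vec{n}=\vec{a}\times\vec{b}$ lying in $H$ and $\overrightarrow{OZ}\perp H$, while you phrase it as a chain of equivalences with unit normals. Your extra bookkeeping for the degenerate cases $S_1=\pm S_2$ (which the paper tacitly excludes, since there $\vec{a}\times\vec{b}=\vec{0}$ and $K$ is not unique) is a sensible refinement but does not change the substance of the argument.
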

\begin{proof}
  In the proof we rely on Figure \ref{zenitkriterium}.

  {\bf a)} Let $S_1$ and $S_2$ appear vertically above each other from $P$. Then $\vec{a} = \overrightarrow{OS_1}$, $\vec{b} = \overrightarrow{OS_2}$ span the circular plane of $K$ and the normal vector $\vec{n} = \vec{a} \times \vec{b}$ lies in the horizon plane $H$ of $P$.  Any vector starting at the coordinate origin $O$ and perpendicular to $\vec{n}$ lies in the plane of $K$. This is especially true for the vector $\overrightarrow{OZ}$ from $O$ to the zenith $Z$ of $P$, because it is perpendicular to $H$, thus also to $\vec{n}$. Thus $Z$ lies on the great circle $K$. (See Fig. \ref{zenitkriterium} a.)

{\bf b)}  Let $Z$ lie on $K$. All great circles containing $\overrightarrow{OZ}$ form the set of all altitude circles of the horizon system of $P$. They are all perpendicular to $H$. Since $K$ also contains the vector $\overrightarrow{OZ}$, $K$ is also perpendicular to $H$. Consequently, $S_1$ and $S_2$ appear perpendicular to each other from $P$. (See Fig. \ref{zenitkriterium} b.)
\end{proof}

\mycomment{
\begin{figure}[!ht]
  \begin{center}
    \includegraphics[width=\textwidth]{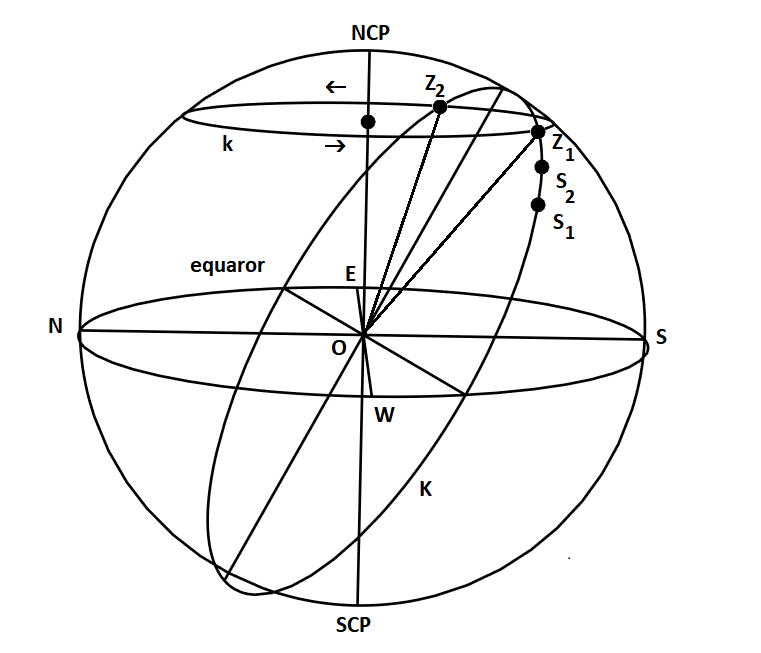}
  %\vspace*{-1.5cm}
\end{center}
\caption{} 
\label{zenit}
\end{figure}
}
\begin{Prop}
  For each observation point $P$, which is not on the earth equator or on one of the earth poles, and each pair of stars $S_1$ and $S_2$ exactly one of the following {\rm 3} possibilities occurs at each earth rotation :
\begin{itemize}
\item $S_1$ and $S_2$ appear from $P$ {\rm 2} times vertically above each other. 
\item$S_1$ and $S_2$ appear from $P$ {\rm 1} times vertically one above the other. 
\item $S_1$ and $S_2$ never appear vertically one above the other from $P$.
\end{itemize}
\end{Prop}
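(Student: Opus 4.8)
The plan is to reduce everything to Proposition \ref{prop5.4} and then to a count of intersection points between two circles on the celestial sphere. First I would dispose of a non-degeneracy point: assume $S_1\neq S_2$ and that the two stars are not antipodal, so that the great circle $K$ through them is unique (otherwise the statement is vacuous or degenerates to the situation in which the stars are always vertically above each other, which Proposition \ref{prop5.4} tacitly excludes). By Proposition \ref{prop5.4}, at any given instant $S_1$ and $S_2$ appear vertically above each other from $P$ if and only if the zenith $Z$ of $P$ lies on $K$. As the Earth performs one full rotation, the vertical direction at $P$ (the direction $\overrightarrow{OP}$) sweeps out, at uniform angular rate, the small circle $C$ on the celestial sphere consisting of all points whose declination equals the latitude $\varphi$ of $P$, each point of $C$ being attained exactly once. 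Hence the number of times during one rotation that $S_1$ and $S_2$ appear vertically above each other equals $|C\cap K|$.

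Next I would check that $C$ is a non-degenerate circle and bound $|C\cap K|$. Normalizing the celestial sphere to radius $1$, the plane of $C$ lies at distance $|\sin\varphi|$ from the centre $O$. Since $P$ is not on the equator, $\varphi\neq 0$, so this distance is positive and $C$ is not a great circle and, in particular, its plane does not pass through $O$; since $P$ is not a pole, $|\varphi|\neq 90^\circ$, so the distance is $<1$ and $C$ is a genuine circle (neither empty nor a single point). The plane of the great circle $K$ does pass through $O$, so the two planes are distinct. Two distinct planes are either parallel, giving $C\cap K=\varnothing$, or meet in a line $\ell$; in the latter case $\ell$ meets the unit sphere in at most two points, and every point of $C\cap K$ lies on both planes, hence on $\ell\cap S^2$. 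Therefore $|C\cap K|\in\{0,1,2\}$.

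Finally I would translate the count back. If $|C\cap K|=2$ the stars appear vertically above each other twice during the rotation; if $C$ is tangent to $K$, so that $|C\cap K|=1$, once; if $|C\cap K|=0$, never. Since the zenith runs around $C$ exactly once per rotation, a tangential intersection is passed exactly once and so is correctly recorded as "$1$ time", not "$0$" or "$2$". The three alternatives are precisely the three mutually exclusive values of the single integer $|C\cap K|$, so exactly one of them holds; and because $S_1$, $S_2$, $K$ and $C$ do not change from one rotation to the next, the same alternative holds at every rotation.

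The statement is elementary, so there is no real obstacle; the only steps needing genuine care are (i) verifying that the two excluded positions of $P$ are exactly what is required for $C$ to be a proper small circle, so that the "$0$, $1$, or $2$" count is available and no further case (such as $C=K$, which would give infinitely many vertical alignments) can occur, and (ii) the remark that a tangency of $C$ and $K$ is met exactly once per rotation and thus contributes the value $1$ to the count. Everything else is routine plane-and-sphere geometry.
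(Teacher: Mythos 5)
Your proof is correct and follows essentially the same route as the paper: invoke Proposition \ref{prop5.4} and count the intersections of the great circle $K$ through $S_1$, $S_2$ with the small circle traced by the zenith of $P$ during one rotation, which can only be $0$, $1$ or $2$. You merely spell out more explicitly why the hypotheses on $P$ make that circle a proper small circle (so $C\neq K$ and the two-plane argument bounds the intersection count), and you flag the degenerate star configurations, which the paper passes over silently.
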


\begin{proof}
  If $P$ is not on the equator or on one of the Earth's poles, then its zenith $Z$ during the daily rotation of the Earth runs along a small circle $k$ whose circular plane is parallel to the equatorial plane (see Fig. \ref{zenit}). Then, if the great circle $K$ is inclined steeply enough, it will intersect the small circle $k$ at exactly two points $Z_1$ and $Z_2$. $S_1$ and $S_2$ appear perpendicularly one above the other from $P$ exactly when the zenith $Z$ of $P$ passes through the points $Z_1$ and $Z_2$.

With a smaller inclination of $K$, $K$ touches the circle $k$ only in a single point. Then $S_1$ and $S_2$ appear vertically one above the other from $P$ only once in 24 hours. If the inclination of $K$ is even smaller, then $K$ and $k$ no longer intersect, i.e. $S_1$ and $S_2$ never appear vertically from $P$.
\end{proof}

\begin{Prop}
Let $P$ be one of the Earth's poles and $S_1$ and $S_2$ be any two fixed stars. Then $S_1$ and $S_2$ either never appear or always appear vertically above each other from $P$.
\end{Prop}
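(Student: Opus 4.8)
The plan is to reduce the statement to the zenith criterion of Proposition~\ref{prop5.4} and then to exploit the one feature that distinguishes the poles from every other observation site: there the zenith does not move during the Earth's rotation. So first I would note that if $P$ is the North (resp. South) pole, the Earth's axis of rotation passes through $P$, hence the local vertical at $P$ points along that axis and the zenith $Z$ of $P$ coincides with the north (resp. south) celestial pole. Unlike the situation in the previous proposition, where $Z$ sweeps out a small circle $k$ parallel to the equatorial plane, here $Z$ is a \emph{fixed} point of the celestial sphere, independent of the time of day (and, on the time scales relevant to this paper, of the date as well).

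Next, given the two fixed stars $S_1, S_2$, I would pass a great circle $K$ of the celestial sphere through $S_1$ and $S_2$. By Proposition~\ref{prop5.4}, $S_1$ and $S_2$ appear vertically above each other from $P$ at a given instant precisely when $Z$ lies on $K$ at that instant. Since none of $Z$, $S_1$, $S_2$ changes its position on the celestial sphere as the Earth turns, the condition ``$Z \in K$'' has the same truth value at every instant; hence $S_1$ and $S_2$ either appear vertically above each other at every instant or at no instant, which is exactly the claim. An equivalent concrete phrasing, which I would include as a remark, is that at a pole the horizon plane is the equatorial plane, the altitude circles are exactly the hour circles, and ``vertically above each other'' means ``equal right ascension'', a property of the fixed stars that manifestly does not depend on time.

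There is essentially no hard part here; the only point needing a word of care is the degenerate case in which $S_1$ and $S_2$ are antipodal on the celestial sphere, so that the great circle $K$ through them is not unique. In that case infinitely many great circles pass through $S_1$ and $S_2$, one of them passes through $Z$, and — under the broad reading of ``vertically above each other'' adopted just before Proposition~\ref{prop5.4}, which permits the two stars to lie on opposite sides of the zenith or below the horizon — the stars then count as appearing vertically above each other at every instant, so the dichotomy still holds. I would simply flag this case and dispose of it in one sentence.
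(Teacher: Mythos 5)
Your proof is correct and follows essentially the same route as the paper: both invoke the zenith criterion of Proposition~\ref{prop5.4} and observe that at a pole the zenith is fixed at the celestial pole, so the condition ``$Z$ lies on the great circle $K$ through $S_1$, $S_2$'' is time-independent, giving the never/always dichotomy. Your extra remarks (equal right ascension, the antipodal degenerate case) are fine additions but not needed for the paper's argument.
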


\begin{proof}
  If the observation point $P$ is located at one of the Earth's poles, then the small circle $k$ becomes the associated celestial pole. If then the great circle $K$ does not pass through the celestial poles, then $S_1$ and $S_2$ never appear vertically above each other from $P$, since $Z$ never lies on $K$. If $K$ passes through the celestial poles, then $S_1$ and $S_2$ always appear vertically one above the other, because $Z$ lies on $K$ during an entire rotation of the Earth.
\end{proof}

\begin{Prop} \label{prop5.8}
  Let $P$ be a point on the Earth's equator and $S_1$ and $S_2$ be any two fixed stars. Then $S_1$ and $S_2$ appear either twice or always vertically one above the other from $P$.
\end{Prop}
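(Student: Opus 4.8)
The plan is to reduce the statement to the zenith criterion of Proposition~\ref{prop5.4} and then to compare the great circle $K$ through $S_1$ and $S_2$ with the curve traced by the zenith of an equatorial observer.

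First I would recall Proposition~\ref{prop5.4}: from $P$ the stars $S_1$ and $S_2$ appear perpendicularly one above the other exactly when the zenith $Z$ of $P$ lies on the great circle $K$ through $S_1$ and $S_2$. Hence it suffices to count how often, during one rotation of the Earth, the zenith $Z$ of $P$ passes through $K$.

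Next I would identify the locus of $Z$. If $P$ lies on the Earth's equator, the local vertical direction at $P$ always lies in the Earth's equatorial plane, so $Z$ always lies on the celestial equator, and in the course of one full rotation of the Earth the zenith runs once around the whole celestial equator. Denote this great circle by $k$. The decisive point — and the only real obstacle in the argument — is that here the zenith locus $k$ is itself a \emph{great} circle, in contrast to the generic case in which it is a proper small circle; consequently $K$ cannot be tangent to $k$, which is exactly what will exclude the ``exactly once'' alternative that occurs for a generic observation point.

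Finally I would apply the elementary dichotomy that two great circles on the sphere either coincide or meet in precisely two antipodal points. If $K=k$, i.e.\ both $S_1$ and $S_2$ lie on the celestial equator, then $Z\in K$ at every instant and the stars always appear perpendicularly above each other. Otherwise $K\cap k=\{Z_1,Z_2\}$ with $Z_1$ and $Z_2$ antipodal, and since $Z$ sweeps out all of $k$ during one rotation it passes through both $Z_1$ and $Z_2$; thus $S_1$ and $S_2$ appear perpendicularly above each other exactly twice per rotation. These two cases exhaust all possibilities, which proves the proposition.
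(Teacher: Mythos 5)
Your argument is correct and follows essentially the same route as the paper: invoke the zenith criterion of Proposition~\ref{prop5.4}, observe that an equatorial observer's zenith sweeps out the celestial equator (a great circle), and conclude from the dichotomy that two great circles either coincide or meet in exactly two points. The only addition is your explicit remark that tangency is impossible, which the paper leaves implicit but which does not change the substance.
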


\begin{proof}
  If $P$ lies on the Earth's equator, then during one revolution of the Earth its zenith also runs on a great circle, namely on the celestial equator. If $K$ is inclined with respect to the celestial equator, then $K$ and the celestial equator intersect at exactly two points and thus $S_1$ and $S_2$ appear vertically above each other twice during one revolution of the Earth. If, on the other hand, $K$ and the celestial equator coincide, then $S_1$ and $S_2$ always appear vertically one above the other, because the zenith $Z$ is always on $K$.
\end{proof}

\subsubsection{Calculation of beta days for the Bronze Age and the Iron Age}
\begin{table}[t] %[h] [b]
\begin{center}
\begin{tabular}{|c|c|c|c|c|c|c|}
\hline
\multicolumn{7}{|l|}{Sun h $= -9^\circ$} \\
\hline
 date (Jul.) & time & \multicolumn{2}{|c|}{$\eta$ Tau (Pleiades)}&\multicolumn{2}{|c|}{$\beta$ Aur} & app. sidereal time \\
\hline
 & & Az & h & Az & h & \\
\hline
 -601-2-28& 18:22:20&$  241^\circ8'$&$   36^\circ38'$&$     240^\circ47'$&$   70^\circ32'$&
 4:26:56\\
-1601-2-28& 18:10:58   &$240^\circ0'$   &$30^\circ44'$     &$240^\circ6'$
&$64^\circ38'$&    3:45:39 \\
-1801-2-28& 18:08:42   &$239^\circ45'$  &$29^\circ32'$     &$239^\circ46'$   &$63^\circ26'$&    3:37:26\\
-1941-2-28& 18:07:09   &$239^\circ35'$  &$28^\circ41'$     &$239^\circ32'$   &$62^\circ36'$&    3:31:44\\
-2101-3-1&  18:06:49   &$240^\circ39'$  &$27^\circ0'$      &$241^\circ3'$    &$60^\circ54'$&    3:30:36\\
\hline
\multicolumn{7}{|l|}{Sun h $= -12^\circ$} \\
\hline
 date (Jul.) & time & \multicolumn{2}{|c|}{$\eta$ Tau (Pleiades)}&\multicolumn{2}{|c|}{$\beta$ Aur} & app. sidereal time \\
\hline
 & & Az & h & Az & h & \\
\hline
 -601-2-25& 18:36:55   &$241^\circ50'$  &$36^\circ15'$     &$241^\circ54'$   &$70^\circ9'$ &    4:29:44\\
-1601-2-24& 18:24:24   &$239^\circ26'$  &$31^\circ2'$      &$239^\circ16'$   &$64^\circ56'$ &   3:43:21\\
-1801-2-24& 18:22:15   &$239^\circ13'$  &$29^\circ50'$     &$238^\circ59'$   &$63^\circ44'$ &   3:35:15\\
-1941-2-25& 18:22:12   &$240^\circ21'$  &$28^\circ15'$     &$240^\circ39'$   &$62^\circ9'$  &   3:34:59\\
-2101-2-25& 18:20:27   &$240^\circ10'$  &$27^\circ17'$     &$240^\circ21'$   &$61^\circ11'$  &  3:28:30\\
\hline
\end{tabular}
\vspace{3mm}
\caption{Julian dates of the beta day. We determined this for the Mittelberg when the sun is
  $-9^\circ$ or $-12^\circ$ below the horizon in the evening. Az and h
  denote the azimuth and the height of a celestial body. ''Time'' is Local Mean Solar Time.}
\label{tab8}
\end{center}
\end{table}
\begin{table}[t] %[h] [b]
\begin{center}
\begin{tabular}{|c|c|}
\hline
 Gregorian dates, sun's height $-9^\circ$ & Gregorian dates, sun's height $-12^\circ$ \\
\hline
 -601-2-21&  -601-2-18\\
-1601-2-14& -1601-2-10\\
-1801-2-12& -1801-2-8\\
-1941-2-11& -1941-2-8\\
-2101-2-11&  -2101-2-7\\
\hline
\end{tabular}
\vspace{3mm}
\caption{Gregorian dates of the beta day. We determined this for the Mittelberg when the sun is
  $-9^\circ$ or $-12^\circ$ below the horizon in the evening.}
\label{tab9}
\end{center}
\end{table}

\begin{table}[t] %[h] [b]
\begin{center}
\begin{tabular}{|c|c|c|c|c|c|}
\hline
 year & -601 & -1601 & -1801 & -1941 & -2101\\
\hline
D & 35 & 28 & 27 & 26 & 24 \\
\hline
\end{tabular}
\vspace{3mm}
\caption{Number D of days after which the heliacal setting with $\sigma = 16^\circ$ follows the beta day with $h=-9^\circ$.}
\label{tab10c}
\end{center}
\end{table}

If we use {\tt Stellarium} \cite{zottihoffmann, zottiwolf}, we can compute
beta days with a procedure similar to the procedure for heliacal
settings in Appendix \ref{appD}. We give the detailed version of this
procedure in Appendix \ref{appC3}. The main steps of the procedure are
as follows:\\*[0.3cm]
We want to determine the beta day for a solar depth $h_0$.
\begin{enumerate}
\item Select any day in the middle of the year for which the beta day
  is to be calculated.
\item Rotate the celestial sphere so that the Pleiades ($\eta$ Tau)
  are perpendicular under $\beta$ Aur (steps (6), (7), (8) in
  \ref{appC3}).
\item With this positioning of the celestial sphere, determine the
  intersection $B$ of the ecliptic with the $h_0$ line in the
  observer's horizon system. $B$ is an approximation of the beta point
  (see Figure \ref{HorEklipt}).
\item Move the sun on the ecliptic by changing the date and time until
  it is above the point $B$. The day when this happens is the beta day.
\end{enumerate}

Using this procedure, described in more detail in \ref{appC3}, we
determined Table \ref{tab8}. Besides the Julian dates and the time of appearance of $\eta$ Tau vertically below $\beta$ Aur we have also given the azimuth and altitude of the two stars, so that one can see that they have approximately the same azimuth (which must be the case for a perpendicular position),
and that both stars are above the horizon.

The solar depth $h = -9^\circ$ is the most likely value for $h$ at
which the Pleiades become visible at nautical twilight in the
evening. This agrees with the $h$-values for solar depth found in my
own observations of the Pleiades (see Appendix \ref{F.2}, Tables
\ref{tab15}, \ref{tab16}, \ref{tab17},
values of the solar depth $h$ in the first observation of $\eta$
Tau). The second part of Table \ref{tab8} shows beta days that would result from an appearance of the Pleiades at $h = -12^\circ$.

Table \ref{tab9} shows the Gregorian dates of the beta days from Table
\ref{tab8}, which we calculated again using Table \ref{tab7} in
Appendix \ref{appA}.

If we compare Tables \ref{tab4} and \ref{tab8}, we can determine the
number of days which must still pass after the beta day with $h = -9^\circ$ until the heliacal setting of the Pleiades with the arcus visionis $\sigma = 16^\circ$ occurs. Table \ref{tab10c} shows the result.

Here $h = -9^\circ$ and $\sigma = 16^\circ$ fit together. Because the Pleiades already disappear in about $7^\circ$ altitude due to extinction at their setting, the sun also has a depth of $h = -9^\circ$ at the heliacal setting of the Pleiades
with $\sigma = 16^\circ$.

The value $h = -12^\circ$ is more consistent with $\sigma = 20^\circ$ (for Schlosser's calculations), because if the Pleiades disappeared at a height of $7^\circ$ the sun would be at a depth of $h = -13^\circ$.

\subsubsection{The second ''Auriga Line'' on the Sky Disc} \label{sec524a}

\mycomment{
\begin{figure}[!ht]
  \begin{center}
    \includegraphics[width=\textwidth]{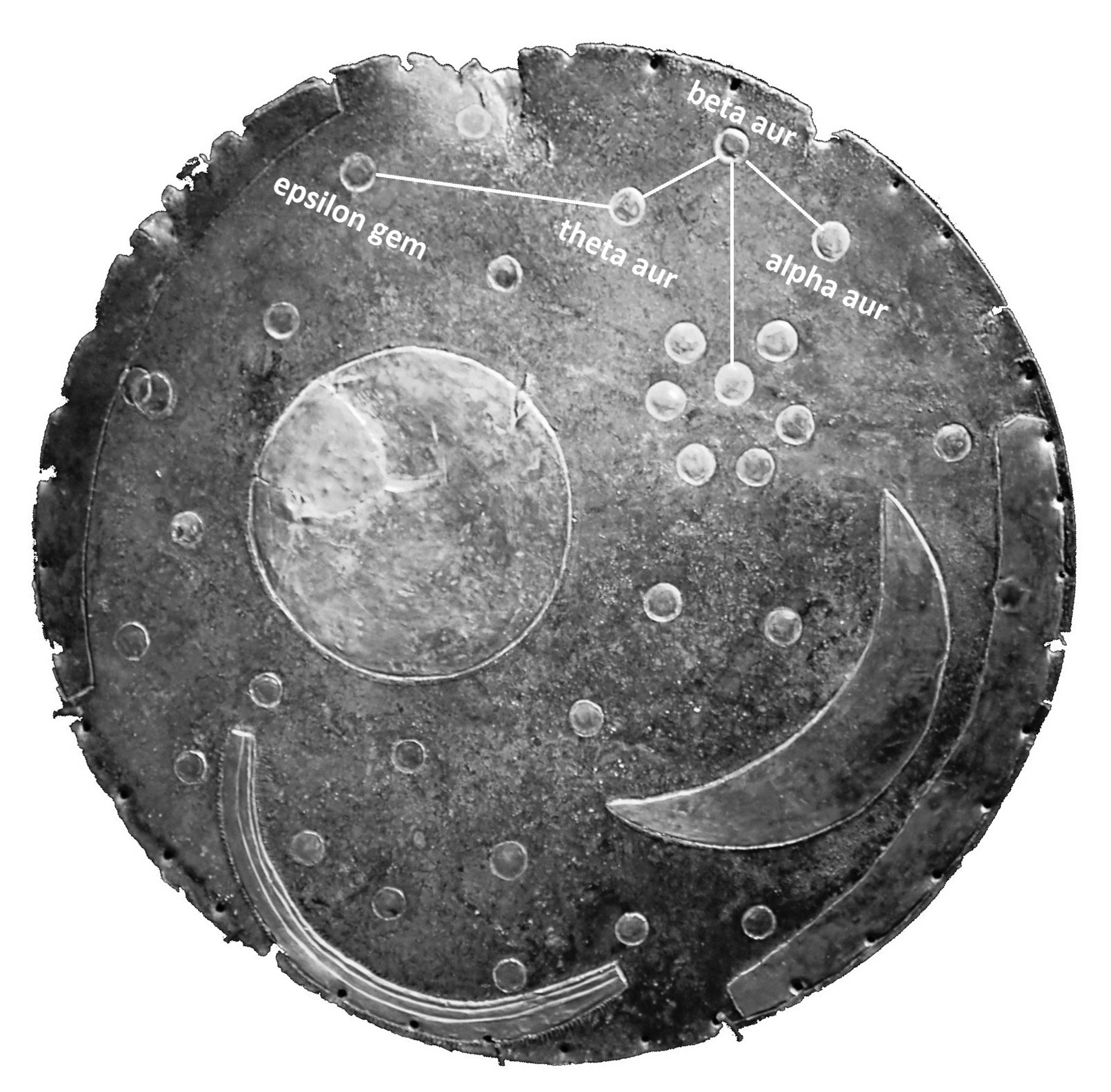}
  %\vspace*{-1.5cm}
\end{center}
\caption{The second Auriga line on the Sky Disc} 
\label{AuxLine4}
\end{figure}
}
\mycomment{
\begin{figure}[!ht]
  \begin{center}
    \includegraphics[width=\textwidth]{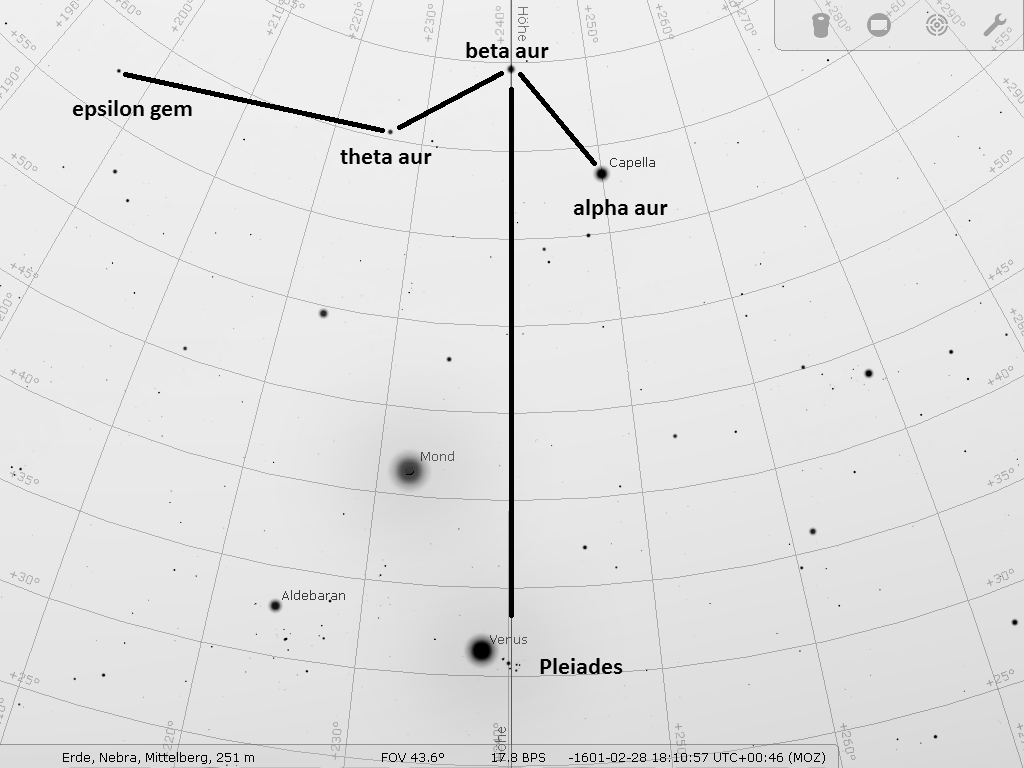}
  %\vspace*{-1.5cm}
\end{center}
\caption{The skewness of the Auriga line when the Pleiades are
  perpendicular to $\beta$ Aur in the Bronze Age on the Mittelberg.} 
\label{vertical}
\end{figure}
}

There is a hint on the Sky Disc that the purpose of the Disc is actually to determine beta days. It appears that the Auriga line is shown a second time on the disc (see Fig. \ref{AuxLine4}).

If we now rotate the Sky Disc so that the middle star of the Pleiades object on the Disc is perpendicular under the star $\beta$ Aur, we get an inclination of this Auriga line which is very similar to that which occurs when the Pleiades are perpendicular under $\beta$ Aur in the Bronze Age on the Mittelberg (see Fig. \ref{vertical}). In Fig. \ref{AuxLine4} we have made such a rotation. To all appearances, the Nebra people mapped the view of the sky that is present on a beta day onto the Sky Disc.

In the second representation of the Auriga line, the distance lengths between the stars also correspond more closely to reality.

We made length and angle measurements in the Figures \ref{AuxLine4} and \ref{vertical} using GIMP \cite{gimp}. Tables \ref{tab11} and \ref{tab12a} show the results. Furthermore, by drawing a test line, from $\epsilon$ Gem to $\alpha$ Aur, one can determine that $\theta$ Aur lies on the line $\epsilon$ Gem -- $\alpha$ Aur in both figures.

\begin{table}[t] %[h] [b]
\begin{center}
\begin{tabular}{|l|l|l|}
\hline
 line & length (pixels) & relative length\\
\hline
$\beta$ Aur -- $\alpha$ Aur & 198,0 & 1,0 \\
$\beta$ Aur -- $\theta$ Aur & 177,0 & 0,89393939 \\
$\epsilon$ Gem -- $\theta$ Aur & 396,0 & 2,0 \\
$\epsilon$ Gem -- $\alpha$ Aur & 691,0 & 3,4898989 \\
\hline
\multicolumn{2}{|l|}{$\sphericalangle(\alpha \text{ Aur} \rightarrow\epsilon\text{ Gem} \,,\, \beta\text{ Aur}\rightarrow\text{center Pleiades})$} & $98,67^\circ$\\
\hline
\end{tabular}
\vspace{3mm}
\caption{Measurements in Figure \ref{AuxLine4}. The symbol $\sphericalangle(A\rightarrow B\,,\,C\rightarrow D)$ means the angle between the vector from A to B and the vector from C to D.}
\label{tab11}
\end{center}
\end{table}

\begin{table}[t] %[h] [b]
\begin{center}
\begin{tabular}{|l|l|l|}
\hline
 line & length (pixels) & relative length\\
\hline
$\beta$ Aur -- $\alpha$ Aur & 139,8 & 1,0 \\
$\beta$ Aur -- $\theta$ Aur &  135,1 &  0,966381 \\
$\epsilon$ Gem -- $\theta$ Aur & 279,2 & 1,997139 \\
$\epsilon$ Gem -- $\alpha$ Aur & 496,0 & 3,547926 \\
\hline
\multicolumn{2}{|l|}{$\sphericalangle(\alpha \text{ Aur} \rightarrow\epsilon\text{ Gem} \,,\, \beta\text{ Aur}\rightarrow\eta\text{ Tau})$} & $101,86^\circ$\\
\hline
\end{tabular}
\vspace{3mm}
\caption{Measurements in Figure \ref{vertical}. The symbol $\sphericalangle(A\rightarrow B\,,\,C\rightarrow D)$ means the angle between the vector from A to B and the vector from C to D.}
\label{tab12a}
\end{center}
\end{table}

The relative lengths on the Disc and on our present-day star map agree amazingly well. Also the deviation of the inclination angle of the line $\epsilon$ Gem -- $\alpha$ Aur in both representations is not very large. This is actually only possible if the distances shown on the Disc were measured in the sky. We therefore formulate:

\begin{Con} \label{conj5.8}
 The Nebra people were able to measure lengths and angles of inclination at the ''Auriga line'' and transfer them to the Sky Disc true to scale.
\end{Con}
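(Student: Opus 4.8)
The plan is to read Conjecture~\ref{conj5.8} not as a deductive claim but as an inference to the best explanation, and to support it by ruling out the only serious competitor: that the relative distances and the inclination recorded on the Disc arose without any measurement. Concretely, I would set up a null hypothesis $H_0$ under which the Nebra artisans placed $\epsilon$~Gem, $\theta$~Aur, $\beta$~Aur, $\alpha$~Aur subject only to the qualitative pattern fixed in Subsection~\ref{subsec2.3} --- that $\epsilon$~Gem, $\theta$~Aur, $\alpha$~Aur be roughly collinear and that $\beta$~Aur sit to one side in the characteristic peak --- while the metric data (the three length ratios normalized to $|\beta\,\mathrm{Aur}-\alpha\,\mathrm{Aur}|$ and the inclination angle $\sphericalangle(\alpha\,\mathrm{Aur}\to\epsilon\,\mathrm{Gem},\ \beta\,\mathrm{Aur}\to\text{center Pleiades})$) are otherwise free, drawn from the kind of distribution a person ``just drawing dots'' produces. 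Conjecture~\ref{conj5.8} is then established if the agreement between Tables~\ref{tab11} and~\ref{tab12a} is shown to be a small-probability event under $H_0$.

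The key steps, in order. (1) Fix a discrepancy statistic, e.g.\ $D=\sum_i\log^2\!\bigl(\rho_i^{\mathrm{Disc}}/\rho_i^{\mathrm{sky}}\bigr)$ over the three independent ratios, together with the absolute angular discrepancy $\Delta\varphi=|\,98.67^\circ-101.86^\circ\,|$, reading off $\rho^{\mathrm{Disc}}\approx(0.894,2.00,3.49)$ and $\rho^{\mathrm{sky}}\approx(0.966,2.00,3.55)$ from the tables. (2) Build an explicit null model: parametrize the four-point configurations modulo similarity that are compatible with the pattern constraints by two length parameters and two angles (say, the foot of $\theta$~Aur on the chord $\epsilon$~Gem--$\alpha$~Aur, the height of the $\beta$~Aur peak, and the relevant inclination), and endow this parameter space with a diffuse but honest prior --- calibrated against the empirical tendencies discussed in this paper (people spread dots far apart and avoid suggested alignments; the non-equidistant placements actually observed elsewhere on the Disc, Subsection~\ref{subsec2.2}). (3) Compute $\Pr_{H_0}[D\le D_{\mathrm{obs}},\ \Delta\varphi\le\Delta\varphi_{\mathrm{obs}}]$, analytically where the geometry permits and by Monte Carlo otherwise. (4) Inflate the resulting $p$-value to account for the garden of forking paths: the identifications of the four stars were themselves chosen partly for their fit, so I would bound the effective number of competing assignments implicitly tried and apply a Bonferroni correction. (5) Conclude: if the corrected $p$-value stays below a conventional threshold, reject $H_0$; since true-to-scale measurement is then the only residual explanation of a simultaneous match of three length ratios and one angle in two independent representations (the second Auriga line of Figure~\ref{AuxLine4} and the sky configuration of Figure~\ref{vertical}), this is exactly the content of the conjecture. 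If the threshold is not met, I would retreat to the precise confidence level the data do license.

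The hard part will be step~(2), the null model. Unlike a theorem there is no canonical chance hypothesis: the distribution over configurations an unmeasuring Bronze-Age artisan would produce cannot be derived, only stipulated, and the final $p$-value is sensitive to that stipulation. I would therefore make the argument robust by running a whole family of priors --- from the maximally diffuse uniform on the compatibility region to priors deliberately biased toward the true sky ratios --- and reporting the worst case; the conjecture is convincingly supported exactly to the degree that even the most $H_0$-favorable prior in the family still assigns small probability to the observed fourfold coincidence. A secondary obstacle is that there are effectively only four independent metric data, which caps how small an honest $p$-value can be and so forces the claim, in the end, to be a high-confidence inference rather than a certainty --- which is precisely why it is, appropriately, labelled a conjecture.
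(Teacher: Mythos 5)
Your proposal is not wrong, but it follows a genuinely different route from the paper, which in fact never attempts a probabilistic argument at all: Conjecture~\ref{conj5.8} is deliberately left as a conjecture, and the paper's support for it is purely metrological. After the quick GIMP measurements of Tables~\ref{tab11} and~\ref{tab12a} (which the paper itself flags as uncertain), the real work is done in Appendix~\ref{appB}: possible photo distortion is bounded by measuring the horizon-arc angle $31$ times and forming a Student-$t$ confidence interval; the star centres are located by least-median fitting circles to eight edge points per gold plate, on two independent photographs; the ``true'' sky values are recomputed directly on the celestial sphere from unit vectors and great circles (Table~\ref{tab29a}), avoiding map-projection distortion; and the resulting deviations (Table~\ref{tab30a}, e.g.\ relative lengths agreeing to a few hundredths and the angle to within about $1.5^\circ$--$3.7^\circ$) are then declared small enough that the Disc values ``can be considered as measured,'' which is what makes Corollary~\ref{cor5.10} ``very likely.'' Your plan instead formalizes the step the paper leaves qualitative --- how improbable such agreement would be under unmeasured placement --- via an explicit null model, a discrepancy statistic, Monte Carlo $p$-values and a multiplicity correction. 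That buys a quantitative statement of evidential strength and makes the inference-to-best-explanation explicit, but at the price you yourself concede: the null distribution of an unmeasuring Bronze-Age artisan cannot be derived, only stipulated, so the headline $p$-value is prior-dependent, whereas the paper's confidence intervals quantify only measurement error, which is derivable. Two practical caveats if you pursue your route: run the test on the refined Appendix~\ref{appB} values (Tables~\ref{tab22}, \ref{tab26}, \ref{tab29a}), not on the rough Table~\ref{tab11}/\ref{tab12a} numbers the paper itself distrusts, and fold the Disc-side measurement error (centre-fitting and distortion uncertainty) into the discrepancy statistic, since with only three ratios and one angle the distinction between measurement scatter and null-model scatter is what the whole argument turns on.
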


\begin{Cor} \label{cor5.10}
  Figure {\rm\ref{AuxLine4}} shows an image of the Pleiades at the moment when they are vertically below $\beta$ Aur. The representation of the Auriga line is to scale.
\end{Cor}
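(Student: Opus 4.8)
The plan is to argue in two stages, one for each sentence of the statement. For the first sentence I would simply record how Figure~\ref{AuxLine4} was produced: it is the picture obtained from the Sky Disc by rotating the Disc until the middle star of the Pleiades object lies on the altitude circle through $\beta$ Aur and the zenith, i.e.\ until the Pleiades appear vertically below $\beta$ Aur in the sense of Proposition~\ref{prop5.4}. So the first assertion holds essentially by construction of the figure, and nothing more is needed there. One should, however, invoke the caveat stated after the definition preceding Proposition~\ref{prop5.4}: ``vertically below'' may in principle include invisible configurations. Here both $\eta$ Tau and $\beta$ Aur are well above the horizon on the day in question (see the altitude columns in Table~\ref{tab8}), so the perpendicular positioning is genuinely visible and the caveat does not bite.

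For the second sentence --- that the Auriga line shown there is to scale --- I would invoke Conjecture~\ref{conj5.8}, which grants that the Nebra people could transfer the lengths and inclination angle of the celestial Auriga line to the Disc faithfully. It then remains only to check that the second Auriga line on the Disc is quantitatively consistent with the true sky configuration of a beta event. For this I would put Table~\ref{tab11} (measurements on Figure~\ref{AuxLine4}) next to Table~\ref{tab12a} (measurements on the actual sky view in Figure~\ref{vertical}) and compare the four relative lengths $\beta$ Aur -- $\alpha$ Aur, $\beta$ Aur -- $\theta$ Aur, $\epsilon$ Gem -- $\theta$ Aur, $\epsilon$ Gem -- $\alpha$ Aur, together with the inclination angle of the line $\epsilon$ Gem -- $\alpha$ Aur; I would also cite the collinearity already noted in the text, namely that $\theta$ Aur lies on the segment $\epsilon$ Gem -- $\alpha$ Aur in \emph{both} figures. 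The ratios agree to within a few percent and the angles to within a few degrees, and this agreement, read through Conjecture~\ref{conj5.8}, forces the conclusion that Figure~\ref{AuxLine4} is a to-scale copy of the sky at the instant the Pleiades stand vertically below $\beta$ Aur.

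The main obstacle is that this is not a deductive argument of the usual mathematical kind: it rests on Conjecture~\ref{conj5.8} and on the reliability of the GIMP pixel measurements. The real work is therefore to make precise what ``to scale'' is being claimed --- the $\beta$ Aur -- $\theta$ Aur ratio differs by about $0.07$ and the inclination angles by about $3^\circ$ between the two tables --- and to argue that these residuals lie within the measurement error and within the drawing precision attainable on a bronze object, so that they do not undermine the identification, and that the match is close enough to be intentional rather than accidental. If one is content with the approximate, archaeological sense of ``to scale'' used throughout Section~\ref{subsec2.2} and here, then, once the altitude check of the previous paragraph is in place, the corollary follows directly from the comparison of Tables~\ref{tab11} and \ref{tab12a} under Conjecture~\ref{conj5.8}.
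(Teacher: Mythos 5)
Your proposal follows essentially the paper's own route: Corollary \ref{cor5.10} is not established deductively but is read off from Conjecture \ref{conj5.8} together with the agreement of the relative lengths and the inclination angle in Tables \ref{tab11} and \ref{tab12a}, exactly the comparison you make (the paper then sharpens these measurements in Appendix \ref{appB} and says this makes the corollary ``very likely''). One small repair: the first sentence does not hold ``by construction'' --- rotating the Disc puts the Pleiades object below $\beta$ Aur no matter what the Disc depicts --- rather, in the paper it rests on the fact that after this rotation the inclination of the Disc's Auriga line matches the sky's inclination at a beta event, which is precisely the angle comparison you already carry out under the second sentence, so your overall argument still covers it.
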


We have deliberately formulated the statement \ref{conj5.8} as a conjecture because there are still uncertainties.
\begin{enumerate}
  \item The measurements for Tables \ref{tab11} and \ref{tab12a} were made quickly. Each quantity was measured only once and the location of the centers of the small star plates was only estimated.
\item  We do not know exactly whether the image of the Sky Disc we used contains distortions, e.g. because the camera was tilted in relation to the Sky Disc.
\end{enumerate}

When we have better image material, we will repeat the above measurements again.
But in our present paper we also make a first attempt in Appendix \ref{appB} to obtain more precise values than those in Tables \ref{tab11} and \ref{tab12a} using better mathematical methods.
Appendix \ref{appB} confirms the values in Tables \ref{tab11} and \ref{tab12a}. In particular, the correctness of Corrolary \ref{cor5.10} becomes very likely.

How could the Nebra people, who had no writing and thus probably no number signs, measure lengths in the sky? One possibility would be to use rods that were as straight as possible. We think that reed stalks, for example, are suitable for this.

To determine the distance between two stars, hold a stalk against the sky with both hands and arms outstretched. The hands should be far enough apart so that both stars fit between them along the length of the stick. Align one end of the stick with one of the stars. Then move the other hand towards the second star until the thumb is at the second star. You must now check that the end of the stick is still at the first star. If it is, then you can mark the position of the second star with the thumbnail and cut the stalk there.

With a little practice, this procedure should work. It is important that the arms are outstretched so that there is always the same distance between the culm and the eyes in all such measurements. And, of course, all measurements on one object should be taken by one and the same person.

When in this way sticks have been obtained for all the distances given in the Tables \ref{tab11}, \ref{tab12a}, one can relate the sticks to each other. One can easily prove that $\epsilon$ Gem -- $\theta$ Aur is twice as long as $\beta$ Aur -- $\alpha$ Aur.

The relative length of $\epsilon$ Gem -- $\alpha$ Aur is obtained by placing a two-stick and two one-sticks next to the stick of $\epsilon$ Gem -- $\alpha$ Aur. On the second stick of ones, which overhangs $\epsilon$ Gem -- $\alpha$ Aur, you mark the end of $\epsilon$ Gem -- $\alpha$ Aur.

Then turn this stick over and place it with the other end on the row of sticks next to $\epsilon$ Gem -- $\alpha$ Aur. The mark on this stick comes to rest again next to the end of $\epsilon$ Gem -- $\alpha$ Aur. Thus, the mark is in the middle of this stick and $\epsilon$ Gem -- $\alpha$ Aur has relative length 3,5.

The Nebra people could probably only estimate the angle of inclination of the line $\epsilon$ Gem - $\alpha$ Aur. However, the results of Appendix \ref{appB} give reason to hope that the Nebra people were also able to measure this angle of inclination.
%\mycomment{
%\begin{figure}[!ht]
%  \begin{center}
%    %\includegraphics[width=\textwidth]{graphics/Hortfund_neu_cr.jpg}
%    \includegraphics[width=\textwidth]{verticalg1.png}
%  %\vspace*{-1.5cm}
%\end{center}
%\caption{Estimate the inclination of the line $\epsilon$ Gem -- $\alpha$ Aur.} 
%\label{vertical1}
%\end{figure}
%}

When creating a reduced image of the ''Auriga line'' one can rely on the fact that all relative lengths 1, 2, 3.5 can be built up from the relative length 0.5. For example, a stick of relative length 3.5 is created by joining 7 sticks of relative length 0.5. For the line $\theta$ Aur -- $\beta$ Aur one could use a stick of relative length 1, which was shortened a little.

The Nebra people only had to find such a length for a stick of relative length 0.5 that all the longer sticks formed from it and the image of the Auriga line composed of them fit on the Disc. Then they could make the sticks and put the image of the Auriga line on the Disc.

\subsubsection{With what accuracy can a person estimate that a star is perpendicular to another star?}
\begin{table}[t] %[h] [b]
\begin{center}
\begin{tabular}{|c|c|c|c|c|c|}
\hline
 time & \multicolumn{2}{|c|}{$\alpha$ Tau} &
 \multicolumn{2}{|c|}{$\beta$ Tau} & $\text{Az}_{\alpha} - \text{Az}_{\beta}$ in degrees\\
\hline
 & Az & h & Az & h & \\
\hline
\multicolumn{6}{|c|}{2020-3-24, Seebenischer Str.} \\
\hline
22:12 CET&     $275^\circ51'$&  $16^\circ39'$&    $275^\circ26'$&  $33^\circ24'$&   275,85 - 275,4 = 0,45\\
23:29 CET&     $290^\circ27'$&  $4^\circ56'$&     $289^\circ12'$&  $21^\circ39'$&   290,45 - 289,2 = 1,25\\
23:46 CET&     $293^\circ43'$&  $2^\circ28'$&     $292^\circ11'$&  $19^\circ9'$&    293,717 - 292,183 = 1,533\\
\hline
\multicolumn{6}{|c|}{2020-4-5, Sch\"onauer Str} \\
\hline
20:47 CEST&  $256^\circ8'$&   $31^\circ52'$&   $255^\circ4'$&   $48^\circ36'$&   256,133 -
255,067 = 1,067 \\
21:14 CEST& $261^\circ53'$&  $27^\circ43'$&   $261^\circ16'$&  $44^\circ28'$&   261,883 - 261,267 = 0,617\\
22:02 CEST& $271^\circ29'$&  $20^\circ13'$&   $271^\circ6'$&   $36^\circ58'$&   271,483 - 271,1 = 0,383\\
\hline
\multicolumn{6}{|c|}{2020-4-26, Seebenischer Str.} \\
\hline
21:47 CEST&   $284^\circ20'$&  $9^\circ45'$&   $283^\circ32'$&  $26^\circ30'$&   284,333 - 283,533 = 0,8\\
22:36 CEST&   $293^\circ40'$&  $2^\circ30'$&   $292^\circ8'$&   $19^\circ11'$&   293,667 - 292,133 = 1,533\\
\hline
\end{tabular}
\vspace{3mm}
\caption{Azimuth differences between $\alpha$ Tau (Aldebaran) and $\beta$ Tau during our own observations.}
\label{tab10a}
\end{center}
\end{table}
\begin{table}[t] %[h] [b]
\begin{center}
\begin{tabular}{|c|c|c|c|c|c|}
\hline
 time & \multicolumn{2}{|c|}{$\eta$ Tau} &
 \multicolumn{2}{|c|}{$\alpha$ Aur} & $\text{Az}_{\eta} - \text{Az}_{\alpha}$ in degrees\\
\hline
 & Az & h & Az & h & \\
\hline
\multicolumn{6}{|c|}{2020-3-24, Seebenischer Str.} \\
\hline
22:12 CET&     $289^\circ58'$&  $15^\circ6'$&      $294^\circ0'$&   $43^\circ11'$&   289,967 - 294,0 = -4,033\\
22:50  CET&    $296^\circ53'$&  $9^\circ38'$&      $299^\circ10'$&  $37^\circ52'$&   296,883 - 299,167 = -2,283\\
22:56  CET&    $297^\circ59'$&  $8^\circ48'$&      $299^\circ59'$&  $37^\circ3'$&    297,983 - 299,983 = -2,0\\
\hline
\multicolumn{6}{|c|}{2020-4-5, Sch\"onauer Str} \\
\hline
21:14 CEST&  $277^\circ2'$&   $25^\circ52'$&   $284^\circ12'$&  $53^\circ39'$&   277,033 - 284,2 = -7,167\\
22:02 CEST&  $285^\circ50'$&  $18^\circ30'$&   $290^\circ53'$&  $46^\circ29'$&   285,833 - 290,883
= -5,05\\
\hline
\multicolumn{6}{|c|}{2020-4-26, Seebenischer Str.} \\
\hline
21:47 CEST&   $298^\circ7'$&   $8^\circ42'$&      $300^\circ5'$&   $36^\circ57'$&  298,117 - 300,083 = -1,967\\
21:57 CEST&   $299^\circ58'$&  $7^\circ20'$&      $301^\circ27'$&  $35^\circ36'$&  299,967
- 301,45 = -1,483$^*$\\
\hline
\end{tabular}
\vspace{3mm}
\caption{Azimuth differences between $\eta$ Tau (Alcyone/Pleiades) and
  $\alpha$ Aur (Capella) during our own observations. $^*$ This value was not seen in observations, since by this time the Pleiades had already set.}
\label{tab10b}
\end{center}
\end{table}
\begin{table}[t] %[h] [b]
\begin{center}
\begin{tabular}{|c|c|c|c|c|}
\hline
 date (Jul.) & time & Az $\eta$ Tau (Pleiades) & Az $\beta$ Aur &
 $\text{Az}_{\eta} - \text{Az}_{\beta}$ in degrees\\
\hline
-1601-2-19&  18:16:59    &$232^\circ37'$       &$228^\circ9'$&   4,467\\
-1601-2-20&  18:18:28    &$234^\circ1'$                &$230^\circ35'$&  3,433\\
-1601-2-21&  18:19:57    &$235^\circ24'$               &$232^\circ54'$&  2,5 \\  
{\bf -1601-2-22}&  18:21:26  &$236^\circ46'$
&$235^\circ7'$&     {\bf 1,65}\\
{\bf -1601-2-23}&  18:22:54    &$238^\circ7'$
&$237^\circ14'$&  {\bf 0,883}\\
{\bf -1601-2-24}&  18:24:24    &$239^\circ26'$
&$239^\circ16'$&  {\bf 0,167}\\
{\bf -1601-2-25}&  18:25:52    &$240^\circ45'$
&$241^\circ13'$&  {\bf -0,467}\\
{\bf -1601-2-26}&  18:27:21    &$242^\circ3'$
&$243^\circ5'$&   {\bf -1,033}\\
{\bf -1601-2-27}&  18:28:51  &$243^\circ19'$
&$244^\circ52'$&    {\bf -1,55}\\
-1601-2-28&  18:30:21    &$244^\circ35'$               &$246^\circ36'$&  -2,017\\
-1601-3-1&   18:31:50    &$245^\circ50'$             &$248^\circ16'$& -2,433\\
\hline
\end{tabular}
\vspace{3mm}
\caption{Azimuth deviations between $\eta$ Tau and $\beta$ Aur on the
  days around -1601-2-24 Julian on the Mittelberg, when the sun is
  $-12^\circ$ below the horizon in the evening. ''Time'' is Local Mean Solar Time.}
\label{tab12}
\end{center}
\end{table}

We now turn to the question of how accurately the Nebra people could determine whether or not the Pleiades are perpendicular under $\beta$ Aur. Although we saw in the previous section that the Nebra people probably also made measurements, we will assume that they only estimated this, since the Disc contains no evidence that measuring devices were used for this purpose.

Two stars $S_1$ and $S_2$ appear to an observer at a time $t_0$ as standing vertically one above the other if they have the same azimuth angle $$\alpha_1 = \alpha_2$$ in the observer's horizon system at time $t_0$. Based on our own observations, we can say something about the accuracy with which a person can determine whether $\alpha_1 = \alpha_2$ .

We made 2020 on the days 24.3., 5.4. and 26.4. with the naked eye observations of the surroundings of the Pleiades and the constellations Taurus and Auriga (see Tables \ref{tab18}, \ref{tab19}, \ref{tab20}). During all these observations we always had the impression that
\begin{enumerate}
\item $\alpha$ Tau is always vertically below $\beta$ Tau,
\item $\eta$ Tau (Pleiades) is always to the left of the perpendicular of $\alpha$ Aur on the horizon.
\end{enumerate}
At the time of these observations we did not know that we would have to judge how accurately one can estimate whether stars are vertically aligned or not.

From the observation times given in Tables \ref{tab18}, \ref{tab19}, \ref{tab20} we have now calculated the azimuth angles of the respective celestial bodies and the azimuth differences using Stelarium \cite{zottihoffmann, zottiwolf}. Tables \ref{tab10a} and \ref{tab10b} show the results.

From the Tables \ref{tab10a}, \ref{tab10b} and statements (1), (2) about the observed stars follows immediately
\begin{Prop} \label{prop5.10}
Let $S_1$ and $S_2$ be two stars, which at a time $t_0$ have the
azimuths $\alpha_1$ and $\alpha_2$ and $S_2$ stand at least $15^\circ$
higher than $S_1$. Then there holds true:
\begin{enumerate}
\item To a person, $S_1$ appears to be perpendicular below $S_2$ if
$$|\alpha_1 - \alpha_2| < 1.6^\circ\;.$$
\item To a person, $S_1$ does not appear to be perpendicular below
  $S_2$ if
$$|\alpha_1 - \alpha_2| > 2^\circ\;.$$
\end{enumerate}
\end{Prop}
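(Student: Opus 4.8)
The plan is to read this off directly from the observational record: Proposition~\ref{prop5.10} is an empirical statement, so its ``proof'' amounts to extracting the two numerical thresholds from Tables~\ref{tab10a} and~\ref{tab10b} together with the qualitative impressions (1) and (2) recorded during those observations. The tacit modelling assumption — which is what permits the step from the specific pairs $(\alpha\text{ Tau},\beta\text{ Tau})$ and $(\eta\text{ Tau},\alpha\text{ Aur})$ to an arbitrary pair $(S_1,S_2)$ — is that, once the height separation is comfortably large, whether a viewer perceives vertical alignment depends only on $|\alpha_1-\alpha_2|$.

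First I would check that the data lie in the regime of the proposition: in every row of both tables the upper star stands more than $15^\circ$ above the lower one (about $16.7^\circ$ for $\beta$ Tau over $\alpha$ Tau, about $28^\circ$ for $\alpha$ Aur over $\eta$ Tau), so the hypothesis ``$S_2$ at least $15^\circ$ higher than $S_1$'' holds throughout. For part~(1): by impression~(1), in \emph{all} observations of Table~\ref{tab10a} the star $\alpha$ Tau looked perpendicular under $\beta$ Tau, and the largest azimuth difference occurring there is $1.533^\circ$; so a gap of that size is still perceived as vertical, and the round value $1.6^\circ$ just above it is taken as the cut-off, which yields the criterion $|\alpha_1-\alpha_2|<1.6^\circ$ of part~(1). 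For part~(2): by impression~(2), in all observations of Table~\ref{tab10b} the Pleiades looked clearly to the left of the vertical through $\alpha$ Aur, i.e.\ \emph{not} perpendicular below it; after discarding the asterisked row (that configuration was never actually seen, since the Pleiades had already set), the smallest azimuth difference in absolute value among the genuine observations is $1.967^\circ$, so a gap of that size already destroys the impression of verticality, and the round value $2^\circ$ serves as the cut-off, yielding the criterion $|\alpha_1-\alpha_2|>2^\circ$ of part~(2).

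The main obstacle is conceptual rather than computational. The argument is an inductive generalization from a handful of naked-eye observations, so the honest reading is that the proposition is \emph{supported} by the data, not deduced from first principles; in particular the bounds $1.6^\circ$ and $2^\circ$ are round numbers chosen just outside the observed values $1.533^\circ$ and $1.967^\circ$, and the open band $1.6^\circ\le|\alpha_1-\alpha_2|\le 2^\circ$ is left undetermined on purpose. The claim that for a sufficiently large height gap only $|\alpha_1-\alpha_2|$ matters is an assumption about human perception, not a theorem, and is the real load-bearing step. Finally one must be careful with the asterisked entry of Table~\ref{tab10b}: if it were kept, the part-(2) threshold would fall below $1.6^\circ$ and clash with part~(1), so the mutual consistency of the two halves of the statement rests on that single observational caveat.
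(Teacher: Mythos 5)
Your proposal matches the paper's own treatment: the paper states that the proposition ``follows immediately'' from Tables \ref{tab10a} and \ref{tab10b} together with the recorded impressions (1) and (2), i.e.\ the bounds $1.6^\circ$ and $2^\circ$ are read off as round values just outside the extreme observed differences $1.533^\circ$ and $1.967^\circ$, with the asterisked entry excluded exactly as you do. Your remarks on the empirical/inductive character of the statement and on the height condition are a fair gloss on the same argument, not a different route.
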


We now apply Prop. \ref{prop5.10} to the observation situation of the Nebra people in the year -1601. We assume that the Pleiades always became visible at sun depth $h = -12^\circ$ at nautical twilight. Then the values of the azimuths of $\beta$ Aur, $\eta$ Tau and the associated azimuth differences given in Table \ref{tab12} result from {\tt Stellarium} calculations. The Pleiades will thus have appeared vertically below $\beta$ Aur to the Nebra people on the days February 22 to February 27 (Julian) when they became visible in the nautical twilight. A similar result is obtained at sun depth $h = -9^\circ$.

\subsubsection{Does the beta point have an approximately constant
  distance to the vernal equinox?} \label{sec524}

We now investigate the question whether the beta day provides a date
which is a suitable starting point for the determination of a start
date of the agricultural year. This would be the case if the
associated beta point had a distance to the vernal equinox that is
constant or changes only very slowly.

\begin{table}  %[t]
\rotatebox{90}{\parbox{\textheight}{
\hspace{1cm}\hspace{40pt}
\begin{tabular}{|c|c|c|c|c|c|c|}
\hline
date (Jul.) & $\lambda$ of $B$  & Az $\beta$ Aur & Az $\eta$ Tau
&$\Delta \lambda$/year & App. sidereal time & Hour angle $\alpha$ Cyg  \\
\hline
-2101-08-01   08:01:24 &  322.6127$^\circ$ &         239.7388$^\circ$  &
239.7380$^\circ$ &           &         3:26:44.0 & 9:03:36.71\\
-2076-08-01   07:58:33 &  322.7624$^\circ$ &          239.7271$^\circ$ &       239.7265$^\circ$ &   21.5568" &          3:27:34.3 & 9:03:36.91\\
-2051-08-01   07:59:38 &  322.9123$^\circ$ &          239.7153$^\circ$ &       239.7152$^\circ$ &   21.5856" &          3:28:24.8 & 9:03:37.20\\
-2026-08-01   08:00:44 &  323.0653$^\circ$ &          239.7083$^\circ$ &       239.7079$^\circ$ &   22.0320" &          3:29:16.3 & 9:03:38.50\\
-2001-08-01   08:01:49 &  323.2183$^\circ$ &          239.7006$^\circ$ &       239.7004$^\circ$ &   22.0320" &          3:30:07.8&  9:03:39.80\\
-1976-08-01   07:59:00 &  323.3740$^\circ$ &          239.6973$^\circ$ &       239.6965$^\circ$ &   22.4208" &          3:31:00.2 & 9:03:42.01\\
-1951-08-01   08:00:07 &  323.5270$^\circ$ &          239.6880$^\circ$ &       239.6889$^\circ$ &   22.0320" &          3:31:51.8 & 9:03:43.32\\
-1926-08-01   08:01:14 &  323.6859$^\circ$ &          239.6894$^\circ$ &       239.6891$^\circ$ &   22.8816" &          3:32:45.3 & 9:03:46.61\\
-1901-08-01   08:02:23 &  323.8450$^\circ$ &          239.6902$^\circ$ &       239.6893$^\circ$ &   22.9104" &          3:33:38.8 & 9:03:49.95\\
-1876-08-01   07:59:34 &  324.0027$^\circ$ &          239.6875$^\circ$ &       239.6876$^\circ$ &   22.7088" &          3:34:31.9 & 9:03:52.82\\
-1851-08-01   08:00:43 &  324.1629$^\circ$ &          239.6887$^\circ$ &       239.6890$^\circ$ &   23.0688" &          3:35:25.9 & 9:03:56.50\\
-1826-08-01   08:01:52 &  324.3249$^\circ$ &          239.6928$^\circ$ &       239.6930$^\circ$ &   23.3280" &          3:36:20.4 & 9:04:00.83\\
-1801-08-01   08:03:01 &  324.4870$^\circ$ &          239.6961$^\circ$ &       239.6968$^\circ$ &   23.3424" &          3:37:15.0&  9:04:05.16\\
-1776-08-01   08:00:15 &  324.6518$^\circ$ &          239.7039$^\circ$ &       239.7041$^\circ$ &   23.7312" &          3:38:10.5 & 9:04:10.40\\
-1751-08-01   08:01:24 &  324.8169$^\circ$ &          239.7114$^\circ$ &       239.7118$^\circ$ &   23.7744" &          3:39:06.1 & 9:04:15.74\\
-1726-08-01   08:02:36 &  324.9850$^\circ$ &          239.7240$^\circ$ &       239.7233$^\circ$ &   24.2064" &          3:40:02.7&  9:04:22.09\\
-1701-08-01   08:03:45 &  325.1502$^\circ$ &          239.7298$^\circ$ &       239.7307$^\circ$ &   23.7888" &          3:40:58.4 & 9:04:27.43\\
-1676-08-01   08:01:02 &  325.3211$^\circ$ &          239.7462$^\circ$ &       239.7458$^\circ$ &   24.6096" &          3:41:55.9&  9:04:34.69\\
-1651-08-01   08:02:13 &  325.4892$^\circ$ &          239.7562$^\circ$ &       239.7570$^\circ$ &   24.2064" &          3:42:52.6 & 9:04:41.05\\
-1626-08-01   08:03:25&   325.6604$^\circ$ &          239.7714$^\circ$ &       239.7722$^\circ$ &   24.6528" &          3:43:50.2 & 9:04:48.41\\
-1601-08-01   08:04:38&   325.8346$^\circ$ &          239.7917$^\circ$ &       239.7913$^\circ$ &   25.0848" &          3:44:48.9 & 9:04:56.78\\
\hline
\end{tabular}
}}
\vspace{-0.8cm}\vspace{-40pt}
\caption{The ecliptic longitude $\lambda$ of the beta point $B$.
We determined this Table for the Mittelberg on August 1 (Julian) of each listed year.
 The time is Local Mean Solar Time. $\lambda$ is determined in the ecliptical system of the day. The hour angle of $\alpha$ Cyg (Deneb) is equal to the stellar time with Deneb as reference point.}
\label{tab10}
\end{table}
The calculation of the ecliptic longitude $\lambda$ of the beta point $B$
using Stelarium \cite{zottihoffmann, zottiwolf} can be performed with
a procedure similar to the procedure in Appendix \ref{appC3} for the
calculation of beta days. We give this procedure in Appendix
\ref{appC4}.

However, for the calculation of the coordinates of beta points we need
a greater accuracy than for the calculation of beta days. We have
therefore made the following modifications to the procedure in
Appendix \ref{appC3} to obtain the procedure in Appendix \ref{appC4}.

\begin{itemize}
\item We added an interval nesting method to find a time interval of 1
  second length in which the moment lies in which $\eta$ Tau
  (Pleiades) is exactly perpendicular under $\beta$ Aur (steps (8) -
  (14) in App. \ref{appC4}).
\item We increased the magnification of the sky image to its maximum
  value in several steps to obtain the best possible approximation of
  the beta point (steps (15) - (18) in App. \ref{appC4}).
\end{itemize}
Determining the exact time when $\eta$ Tau is perpendicular under
$\beta$ Aur is unfortunately not possible with {\tt Stellarium}. The
software {\tt Stellarium} produces the image of the sky at a given date and
time. This is its main purpose. The input of time units smaller than 1
second is not intended here. Therefore, for a point of time, which one
would like to determine, one can find at most an interval of 1 second
length, in which this point of time lies. However, since we only want
to reconstruct observations of the Nebra people , which were only made
with the naked eye and where the vertical positioning of $\eta$ Tau
under $\beta$ Aur was only estimated, an accuracy of 1 second is quite
sufficient.

Table \ref{tab10} was determined using the procedure in Appendix
\ref{appC4}. We calculated the ecliptic longitude $\lambda$ of the beta points at 25-year
intervals. August 1 (Julian) was chosen each year as the day on which the beta point was determined. The time given in Table \ref{tab10} is the time when on August 1 (Julian) $\eta$ Tau is perpendicular under $\beta$ Aur and moves from right to left under $\beta$ Aur.
$\Delta\lambda$/year is calculated according to the formula
\[ \Delta\lambda/\text{year} = (\lambda_\text{current line} -
\lambda_\text{previous line})/25\;. \]
The columns for Az $\beta$ Aur and Az $\eta$ Tau were included to
check that both stars are approximately perpendicular to each
other. The last two columns can be used to check that the
perpendicular positioning of the two stars occurs at approximately the
same sidereal time or stellar time. As expected the apparent sidereal
time already shows a shift of about 1 minute after 25 years.

We measure the stellar time by the hour angle of the star $\alpha$ Cyg
(Deneb). $\alpha$ Cyg has only a very small proper motion (2.8 mas/yr)
and thus changes its position only by 1.4" in 500 years. It is
therefore a good reference point. When $\alpha$ Cyg has its transit
(i.e. is at the local meridian), it is 0 o'clock stelar time.

Using stellar time, there are only shifts in the range of seconds in 25 years, which only add up to a shift of about 1 minute in 500 years.

\mycomment{
\begin{figure}[!ht]
  \begin{center}
    \includegraphics[width=\textwidth]{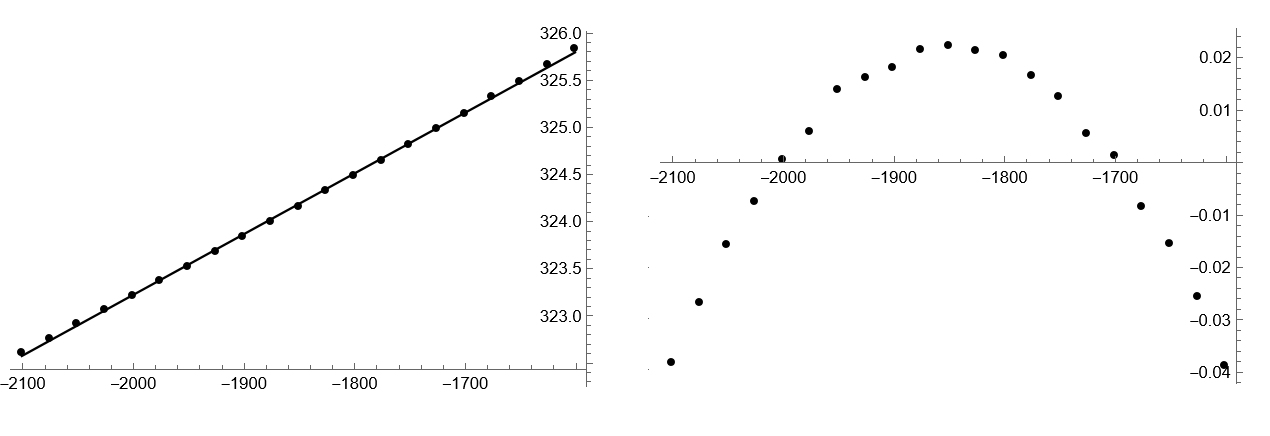}
  %\vspace*{-1.5cm}
\end{center}
\caption{The fit (\ref{5.1}) (left) and its residuals (right).} 
\label{ausgleich1}
\end{figure}
}
\mycomment{
\begin{figure}[!ht]
  \begin{center}
    \includegraphics[width=\textwidth]{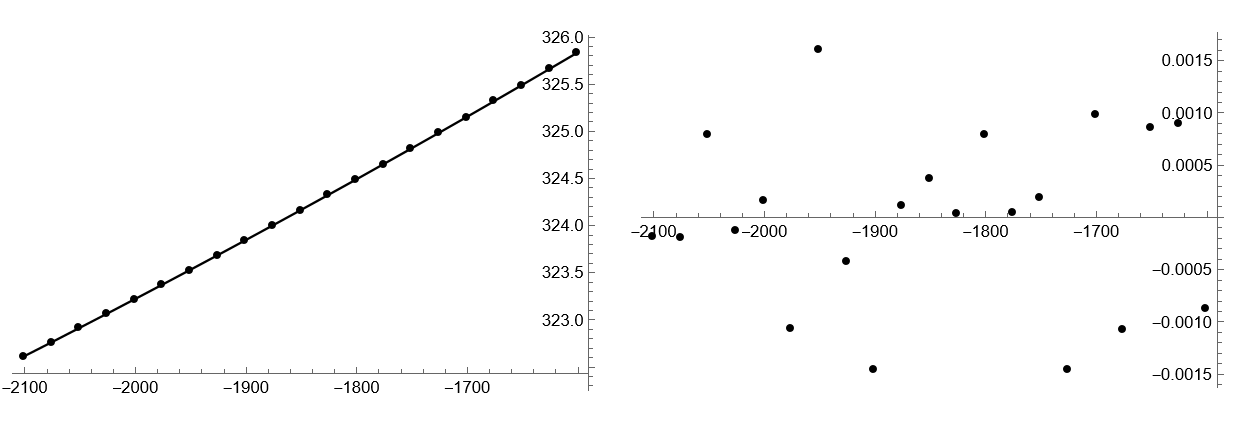}
  %\vspace*{-1.5cm}
\end{center}
\caption{The fit (\ref{5.2}) (left) and its residuals (right).} 
\label{ausgleich2}
\end{figure}
}
\mycomment{
\begin{figure}[!ht]
  \begin{center}
    \includegraphics[width=\textwidth]{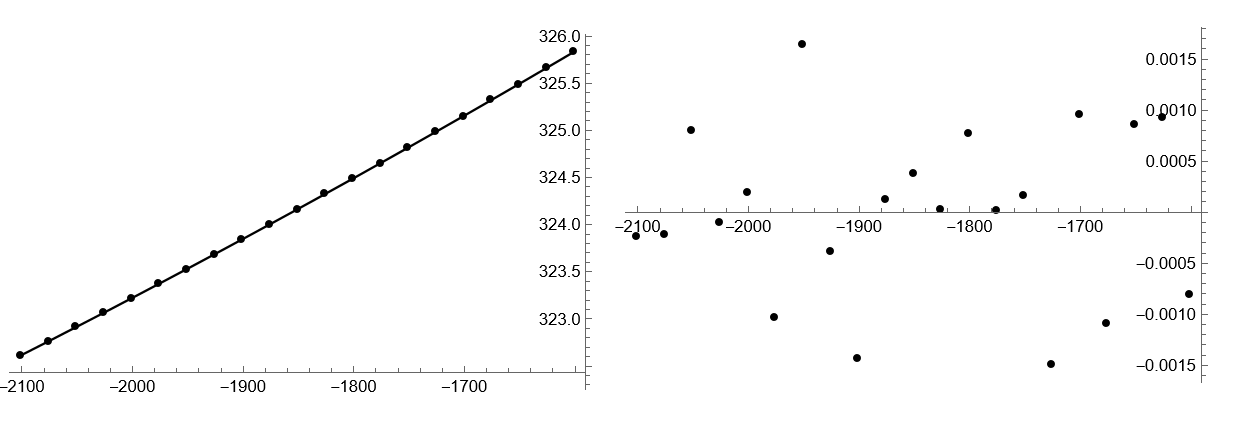}
  %\vspace*{-1.5cm}
\end{center}
\caption{The fit (\ref{5.3}) (left) and its residuals (right).} 
\label{ausgleich3}
\end{figure}
}

We can find even more precise functional relationships between the data from Table \ref{tab10} applying the least squares method. We use {\tt Mathematica} \cite{mma} for this.

For a functional relationship between the year $y$ and the ecliptic length $\lambda$ of the beta point $B$, we first form the set {\tt data} of points $(y_i, \lambda_i)$, where $y_i$ is the year in the $i$th row and $\lambda_i$ is the $\lambda$-value in the $i$th row of Table \ref{tab10}. Then we apply the Mathematica tool {\tt Fit} and calculate fitting polynomials of orders 1, 2 and 3 using {\tt Fit[data,\{1,x\},x]}, {\tt Fit[data,\{1,x, x\^{ }2\}, x]}, {\tt Fit[data,\{1,x, x\^{ }2, x\^{ }3\},x]}. The results are
\begin{eqnarray}
  \lambda &=& 336.11014952554126 + 0.006442425974025995 \cdot y \label{5.1}\\
  \lambda &=& 339.36915458854673 + 0.009987482913605685\cdot y \label{5.2}\\
  & & +\;9.576058723878078\cdot 10^{-7}\cdot y^2 \nonumber
\end{eqnarray}
\begin{eqnarray}
  \lambda &=& 339.44114470090994 + 0.010105105571940302\cdot y \label{5.3}\\
  & & +\; 1.0214066074646323\cdot 10^{-6}\cdot y^2 \nonumber\\
  & & +\; 1.1489417445834214\cdot 10^{-11}\cdot y^3 \nonumber
\end{eqnarray}
The Figures \ref{ausgleich1}, \ref{ausgleich2} and \ref{ausgleich3} show graphics of these fits and its residuals
\begin{equation}
  r_i = f(y_i) - \lambda_i\;.
\end{equation}

From Figure \ref{ausgleich1} we see that the first-order fit polynomial (\ref{5.1}) does not yet provide the best possible fit of the function to the points in {\tt data}. At the ends of the function line, the points from {\tt data} are not on the line but slightly above the line. Furthermore, the residuals are still relatively large and do not scatter but appear to lie on a parabolic curve.

In contrast, in the Figure \ref{ausgleich2} of the 2nd order fit polynomial (\ref{5.2}), all points from {\tt data} lie on the function curve. The residuals are smaller by the factor $\frac1{10}$ than that of the function (\ref{5.1}). Thus, the fit of the function curve to the data has improved.

When changing to the fit polynomial (\ref{5.3}), however, the residuals no longer decrease (see Figure \ref{ausgleich3}). We do not achieve any further improvement here and can therefore stop at (\ref{5.2}).
\begin{Prop}
Formula {\rm (\ref{5.2})} is the best possible fit in the form of a polynomial to the points from {\tt data}.
\end{Prop}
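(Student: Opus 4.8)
The plan is to establish the proposition by the standard reasoning of least-squares curve fitting: to show that the quadratic model (\ref{5.2}) is the polynomial of least degree whose residuals $r_i = f(y_i) - \lambda_i$ carry no remaining systematic trend, and that raising the degree further brings no genuine reduction of the residuals. First I would make precise what ``best possible polynomial fit'' is to mean here, the data set {\tt data} being finite and containing scatter: among the polynomial models, the optimal one is the model of least degree for which (a) the residuals scatter about zero without visible structure, and (b) no higher-degree model decreases the residual spread. With this reading the proposition becomes the conjunction of three assertions about the fits (\ref{5.1})--(\ref{5.3}), each of which is already visible in Figures \ref{ausgleich1}--\ref{ausgleich3}.

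Next I would dispose of the linear model and verify the quadratic one. From Figure \ref{ausgleich1} the residuals of (\ref{5.1}) are not scattered randomly but are positive at both ends of the fitted interval and negative in the middle, i.e.\ they lie along a parabolic arc; this is exactly the signature of an unmodelled quadratic term, and together with their comparatively large magnitude it shows that (\ref{5.1}) cannot be optimal. Passing to (\ref{5.2}), Figure \ref{ausgleich2} shows that to the eye every point of {\tt data} lies on the fitted curve, that the residuals are smaller than those of (\ref{5.1}) by roughly a factor $\tfrac1{10}$, and that they no longer display the parabolic pattern or any other obvious systematic behaviour. Hence the systematic part of the $(y,\lambda)$ dependence has been captured already by a degree-two polynomial.

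It then remains to show that the cubic fit (\ref{5.3}), and a fortiori every polynomial of still higher degree, yields no real improvement. The decisive observation, read off from Figure \ref{ausgleich3}, is that the residuals of (\ref{5.3}) are essentially the same size as those of (\ref{5.2}): introducing a cubic term does not shrink the residual spread, so the extra parameter is not warranted by the data. This is corroborated by the fitted cubic coefficient in (\ref{5.3}) being only of order $10^{-11}$, hence statistically insignificant against the scatter already present in {\tt data}. A polynomial of even higher degree would merely begin to fit this scatter rather than the underlying $(y,\lambda)$ relationship --- i.e.\ over-fit --- so no such polynomial does better either. Combining this with the previous paragraph, (\ref{5.2}) is the polynomial of least degree with structureless residuals, and no polynomial of higher degree fits {\tt data} better, which is precisely the claim of the proposition.

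The main obstacle is conceptual rather than computational: this is not a theorem in the strict mathematical sense but a model-selection statement, and its weakest link is the step from ``the cubic does not help'' to ``no polynomial of any higher degree helps''. One can make that step more convincing with a quantitative criterion --- an $F$-test on the decrease of the residual sum of squares as the degree increases, or the observation that the reduced $\chi^2$ levels off at degree two --- but since Table \ref{tab10} is given without error bars on the individual values $\lambda_i$, such a criterion can be applied only heuristically. In the present context the residual analysis of Figures \ref{ausgleich1}--\ref{ausgleich3} is therefore the natural, and for the intended purpose sufficient, justification.
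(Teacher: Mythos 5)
Your argument is correct and follows essentially the same route as the paper: the paper likewise justifies the proposition by comparing the residuals of the fits (\ref{5.1})--(\ref{5.3}) in Figures \ref{ausgleich1}--\ref{ausgleich3}, noting the parabolic, comparatively large residuals of the linear fit, their reduction by roughly a factor $\frac1{10}$ for the quadratic fit, and the absence of any further decrease for the cubic fit. Your added remarks on the negligible cubic coefficient and on the heuristic nature of the step to arbitrary higher degrees only make explicit what the paper leaves implicit.
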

From (\ref{5.2}) we can now more precisely calculate the annual movement of the beta point on the ecliptic. Starting from $\lambda = f(y)$ according to (\ref{5.2}) we calculate using {\tt Mathematica}
\begin{eqnarray}
  k(y) := f(y+1) - f(y) &=& 0.009988440519478073 \label{5.5a}\\
  & & + 1.9152117447761297\cdot 10^{-6}\cdot y\;. \nonumber
\end{eqnarray}
Since the factor of $y$ is positive in (\ref{5.5a}), $k(y)$ is a strictly increasing function. We calculate the values of $k$ at the ends of the time interval $-2101 \leq y \leq -1601$ with {\tt Mathematica}.
\begin{eqnarray}
  k(-2101) &=& 0.0059645806437034245^\circ = 21.47249031733233'' \\
  k(-1601) &=& 0.006922186516091489^\circ = 24.91987145792936''
\end{eqnarray}
\begin{Prop} \label{prop5.12}
Between $-2101$ and $-1601$ the annual change in the ecliptic longitude $\lambda$ of the beta point increases from $21.4725''$ to $24.9199''$.
\end{Prop}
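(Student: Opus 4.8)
The plan is to read the conclusion off the finite‑difference function $k(y)=f(y+1)-f(y)$ that is attached to the fitting polynomial (\ref{5.2}). First I would observe that, since $f$ in (\ref{5.2}) is a polynomial of degree $2$, the difference $k$ is a polynomial of degree $1$ in $y$; this is precisely the closed form (\ref{5.5a}), which I take as already established. An affine function on $\bbR$ is completely determined by its slope and one of its values, so the whole statement reduces to inspecting the coefficient of $y$ in (\ref{5.5a}) together with the two endpoint evaluations.

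Next I would note that the coefficient of $y$ in (\ref{5.5a}), namely $1.9152117447761297\cdot 10^{-6}$, is strictly positive. Hence $k$ is a strictly increasing and continuous function on all of $\bbR$, in particular on the closed interval $-2101\le y\le -1601$. Consequently the image $k\big([-2101,-1601]\big)$ is exactly the closed interval $[\,k(-2101),\,k(-1601)\,]$: $k$ attains its minimum on this range at the left endpoint $y=-2101$ and its maximum at the right endpoint $y=-1601$, and it passes monotonically through every intermediate value.

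Then I would evaluate $k$ at the two endpoints, which is the only computation involved and is the one already recorded just before the statement: $k(-2101)=0.0059645806437034245^{\circ}$ and $k(-1601)=0.006922186516091489^{\circ}$. Converting degrees to arcseconds via $1^{\circ}=3600''$ gives $k(-2101)=21.47249\ldots''$ and $k(-1601)=24.91987\ldots''$, i.e. the rounded figures $21.4725''$ and $24.9199''$ appearing in the statement. Combining this with the monotonicity established in the previous step yields the assertion that the annual change of $\lambda$ increases from $21.4725''$ at $y=-2101$ to $24.9199''$ at $y=-1601$.

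I do not expect a genuine mathematical obstacle here: once one accepts, via the preceding proposition, that (\ref{5.2}) is the best polynomial description of the data in Table \ref{tab10}, the argument is the elementary remark that the first difference of a quadratic is affine and that the sign of its slope dictates monotonicity, the remainder being arithmetic with the unit conversion $1^{\circ}=3600''$. The only delicate point is the modelling step of identifying the true annual motion of the beta point on the ecliptic with the finite difference of the fitted polynomial rather than with the raw differences in the $\Delta\lambda/\text{year}$ column of Table \ref{tab10}; this is exactly where the previously quoted proposition on the optimality of the fit (\ref{5.2}) is used, and no further justification is needed.
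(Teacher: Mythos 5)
Your proposal is correct and follows essentially the same route as the paper: form the first difference $k(y)=f(y+1)-f(y)$ of the quadratic fit (\ref{5.2}) to get the affine function (\ref{5.5a}), note its positive slope gives strict monotonicity, and evaluate at $y=-2101$ and $y=-1601$ with the conversion $1^\circ = 3600''$. Nothing essential is missing or different.
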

In the ecliptic coordinate system, the vernal equinox always has the length $\lambda = 0^\circ$ or $\lambda = 360^\circ$ because the measurement of $\lambda$ starts at the vernal equinox. As the length $\lambda$ of the beta point increases from $\lambda = 322.6127^\circ$, the beta point approaches the vernal equinox. However, due to the values given in Prop. \ref{prop5.12}, its movement is very slow, so that it is a suitable characteristic for the start of the agricultural year, as described at the beginning of this section.

However, the actual motion is not such that the vernal equinox rests on the ecliptic and the beta point moves towards the vernal equinox. Rather, the vernal equinox moves counter to the Sun's motion on the ecliptic at about $50''$ per year. The beta point moves in the same direction as the vernal equinox, but more slowly than the vernal equinox. As a result, the spring equinox is slowly catching up with him.

As the distance between the beta point and the vernal equinox decreases, these points will eventually coincide and the vernal equinox will overtake the beta point. We can calculate when that will happen. Table \ref{tab17a} shows the calculation for the Mittelberg.

To find the year in which $\lambda = 360^\circ$ applies to the beta point, we consider the function $\lambda = f(y)$ from (\ref{5.3}) and solve using {\tt Mathematica} the equation $f(y) = 360^\circ$. We choose (\ref{5.3}) instead of (\ref{5.2}) because the year we are looking for will probably be well outside the range $-2101 \leq y \leq -1601$ for which the functions (\ref{5.1}) -- (\ref{5.3}) were actually calculated. The summand with $y^3$ in (\ref{5.3}) will perhaps provide a little more precision than using (\ref{5.2}).

As a solution of $f(y) = 360^\circ$ we get the value $y = 1727.13$, i.e. the year $y = 1727$. Using {\tt Stellarium} \cite{zottihoffmann, zottiwolf} and the procedure from Appendix \ref{appC4} we now determine the ecliptic length $\lambda$ of the beta point in the year $y = 1727$. The result is $\lambda = 366.0944^\circ$. The vernal equinox has thus already overtaken the beta point and the beta point is now on the other side of the vernal equinox, which is always at $\lambda = 360^\circ$.

To find a year where the beta point is closer to $\lambda = 360^\circ$, we compute a new best fit polynomial by adding the point $(1727, 366.0944)$ to the set {\tt data} and then use the {\tt Mathematica} tool {\tt Fit} to determine fitting polynomials for {\tt data} as above. We gradually increase the order of these polynomials until we get a polynomial with the smallest possible residuals. In this way we get the polynomial
\begin{eqnarray}
  \lambda &=& 339.8692126709598 + 0.011041655112973253\cdot y \label{5.4}\\
  & & +\; 1.7875651660479225\cdot 10^{-6}\cdot y^2 \nonumber \\
  & & +\; 2.892455333353818\cdot 10^{-10} \cdot y^3 \nonumber \\
  & & +\; 3.765172453929443\cdot 10^{-14} \cdot y^4\;. \nonumber
\end{eqnarray}
With the polynomial (\ref{5.4}), the equation $f(y) = 360^\circ$ yields the value $y = 1412.66 $. In 1412, according to {\tt Stellarium}, the beta point has the ecliptic longitude $\lambda = 358.7039^\circ$. We have bounced back to the other side of the vernal equinox but only a good degree away from it.

We again add the point $(1412, 358.7039^\circ)$ to the set {\tt data} and again determine a fitting polynomial with the smallest possible residuals. We get the polynomial of order 5
\begin{eqnarray}
  \lambda &=& 339.1447310507547 + 0.009887022255267337\cdot y \label{5.5}\\
  & & + \; 1.420159266771764\cdot 10^{-6}\cdot y^2 \nonumber \\
  & & + \; 5.693052077566582\cdot 10^{-10}\cdot y^3 \nonumber \\
  & & +\; 2.422839250033822\cdot 10^{-13}\cdot y^4 \nonumber \\
  & & +\; 3.5898514817170673\cdot 10^{-17}\cdot y^5\;. \nonumber
\end{eqnarray}
With $f(y)$ according to (\ref{5.5}) it follows from the equation $f(y) = 360^\circ$ $y = 1473.06$ and {\tt Stellarium} delivers $\lambda = 359.9044^\circ $ for the year 1473. The vernal equinox and the beta point are only separated by $\frac1{10}$ degrees. We now determine an even more precise year by means of linear extrapolation.
\begin{eqnarray*}
  (1473-1412):(359.9044 - 358.7039) &=& x:(360- 358.7039) \\
  61:1.2005 &=& x:1.2961 \\
1.2005\cdot x &=& 79.0621 \\
x = 79.0621/1.2005 &=& 65.857642648896293211162015826739
\end{eqnarray*}
This $x$ value then results in the new year
\begin{eqnarray*}
  y &=& 1412 + 65.857642648896293211162015826739 \\
  &=& 1477.8576426488962932111620158267 \sim 1478\;.
\end{eqnarray*}
In 1478, according to {\tt Stellarium}, the beta point has an ecliptic length $\lambda = 360.0071^\circ$.
\begin{Prop}
In a year close to the year {\rm 1478} the beta point and the vernal equinox coincide. After that they lie on the ecliptic in the reverse order compared to the time before their coincidence.
\end{Prop}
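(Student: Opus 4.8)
The plan is to obtain this proposition as a short corollary of the monotonicity of the beta point's ecliptic longitude together with the numerical anchor points already assembled above, via the intermediate value theorem. First I would record the two structural facts established in this section: by definition of ecliptic longitude the vernal equinox sits permanently at $\lambda = 0^\circ \equiv 360^\circ$; and, writing $\lambda_B(y)$ for the ecliptic longitude of the beta point in year $y$ (in the ecliptic system of that year, on the Mittelberg), the function $\lambda_B$ is continuous and increasing on the range in play, rising from $322.6127^\circ$ in $-2101$ (Table~\ref{tab10}) through the {\tt Stellarium}-computed values $358.7039^\circ$ in $1412$, $359.9044^\circ$ in $1473$, $360.0071^\circ$ in $1478$, and on to $366.0944^\circ$ in $1727$.

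Next I would set $g(y) := \lambda_B(y) - 360^\circ$ and read off from these values $g(1473) = -0.0956^\circ < 0$ and $g(1478) = +0.0071^\circ > 0$. Since $g$ is continuous, the intermediate value theorem produces a year $y^\ast \in (1473,1478)$ with $g(y^\ast) = 0$, i.e.\ $\lambda_B(y^\ast) = 360^\circ$; the linear extrapolation carried out just before the statement pins $y^\ast$ to a value near $1478$ (consistent with the {\tt Stellarium} check $\lambda_B(1478) = 360.0071^\circ$ lying a hair past $360^\circ$). At $y = y^\ast$ the beta point and the vernal equinox have equal ecliptic longitude and therefore coincide on the ecliptic, which is the first assertion.

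For the order statement I would argue on the ecliptic as an oriented circle, the orientation being that of increasing $\lambda$ (the sense of the Sun's annual motion). For $y < y^\ast$ one has $0 < 360^\circ - \lambda_B(y)$ small, so travelling from the beta point in the direction of increasing longitude one reaches the vernal equinox (at $\lambda = 360^\circ$) after only a tiny arc; for $y > y^\ast$ one has $\lambda_B(y) \bmod 360^\circ$ small and positive, so that same short arc now runs from the vernal equinox to the beta point instead. Hence the cyclic order of the pair (beta point, vernal equinox) along the ecliptic is reversed as $y$ crosses $y^\ast$, giving the second assertion.

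The one point needing care — and the main obstacle to a fully rigorous write-up — is justifying that $\lambda_B$ is genuinely monotone across the whole stretch from the Bronze Age to $1478$, with no spurious extra crossing of $360^\circ$ lurking between the sampled years: Prop.~\ref{prop5.12} only certifies monotonicity on $[-2101,-1601]$, and the fitting polynomials (\ref{5.2})--(\ref{5.5}) were fitted on, or only slightly beyond, that range, so they cannot by themselves be taken as proof of behaviour near $1478$. I would handle this by appealing directly to the {\tt Stellarium} spot-checks at $1412$, $1473$, $1478$ and $1727$, which are strictly increasing and leave no room for a second crossing on $[1473,1478]$ or for $\lambda_B$ to fall back below $360^\circ$ afterwards (it is already $366.0944^\circ$ by $1727$); if more rigour were wanted, a finer grid of {\tt Stellarium} evaluations in a neighbourhood of $y^\ast$ would settle it definitively. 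Everything else in the argument is a one-line invocation of continuity and of the numbers already tabulated.
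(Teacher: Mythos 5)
Your proposal is correct and rests on essentially the same ground as the paper: the same Stellarium-computed anchor values ($\lambda = 358.7039^\circ$ in 1412, $359.9044^\circ$ in 1473, $360.0071^\circ$ in 1478, $366.0944^\circ$ in 1727) obtained from the successive polynomial refits and the linear extrapolation, with the coincidence located where $\lambda$ crosses $360^\circ$ and the order reversal read off from the sign change of $\lambda - 360^\circ$. You merely recast the paper's numerical procedure as an explicit intermediate-value/cyclic-order argument and make the (implicit) monotonicity assumption visible, which is a fair formalization rather than a different route.
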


\begin{table}[t] %[h] [b]
\begin{center}
\begin{tabular}{|c|c|l|c|c|c|c|c|}
\hline
 n & $\lambda = f(y)$ & $y$ from & date/time & $\lambda$ of B & new point\\
  & & $f(y) = 360^\circ$ & & & \\
 \hline
 1 & (\ref{5.3})& 1727.13& 1727.08.01  & $366.0944^\circ$ & $(1727, 366.0944^\circ)$\\
 & & & 10:55:06 & & \\
 \hline
  2 & (\ref{5.4})& 1412.66 & 1412.08.01  & $358.7039^\circ$ & $(1412, 358.7039^\circ)$\\
 & & & 9:36:13 & & \\
 \hline
 3 & (\ref{5.5})& 1473.06 & 1473.08.01  & $359.9044^\circ$ & \\
 & & & 9:42:12 & & \\
 \hline
\end{tabular}
\vspace{3mm}
\caption{Intermediate results for the calculation of the year in which the beta point and vernal equinox coincide. The data was determined for the Mittelberg. Time is the local time when the beta event occurs on August 1st of the given year.}
\label{tab17a}
\end{center}
\end{table}
\mycomment{
\begin{figure}[!ht]
  \begin{center}
    \includegraphics[width=\textwidth]{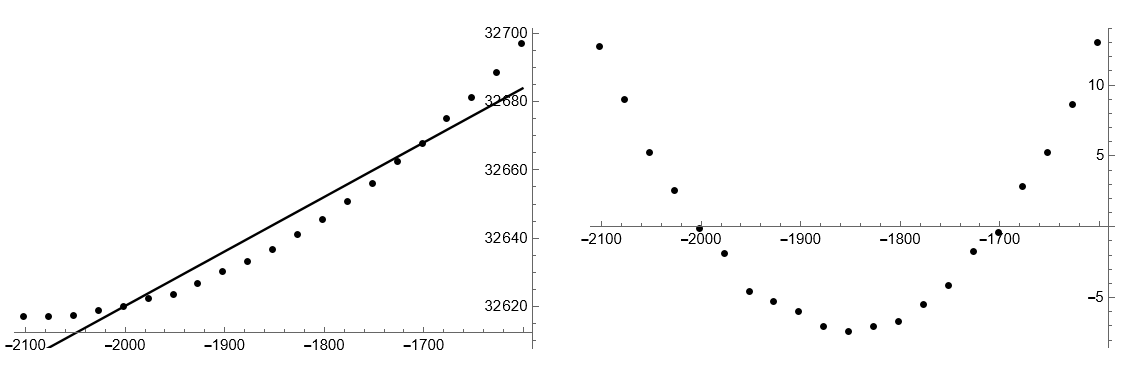}
  %\vspace*{-1.5cm}
\end{center}
\caption{The stellar time of the beta event between -2101 and -1601. The fit (\ref{pol1}) (left) and its residuals (right).}
\label{stellar1} 
\end{figure}
}
\mycomment{
\begin{figure}[!ht]
  \begin{center}
    \includegraphics[width=\textwidth]{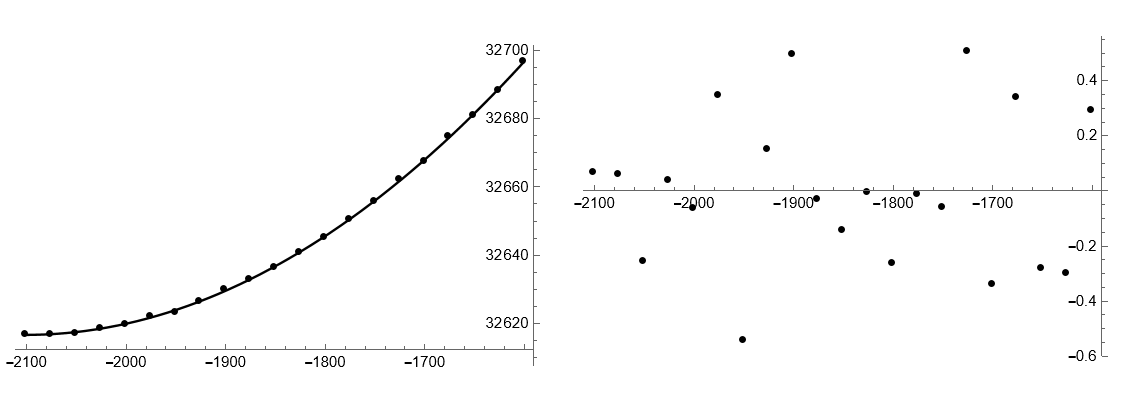}
  %\vspace*{-1.5cm}
\end{center}
\caption{The stellar time of the beta event between -2101 and -1601. The fit (\ref{pol2}) (left) and its residuals (right).} 
\label{stellar2}
\end{figure}
}
\mycomment{
\begin{figure}[!ht]
  \begin{center}
    \includegraphics[width=\textwidth]{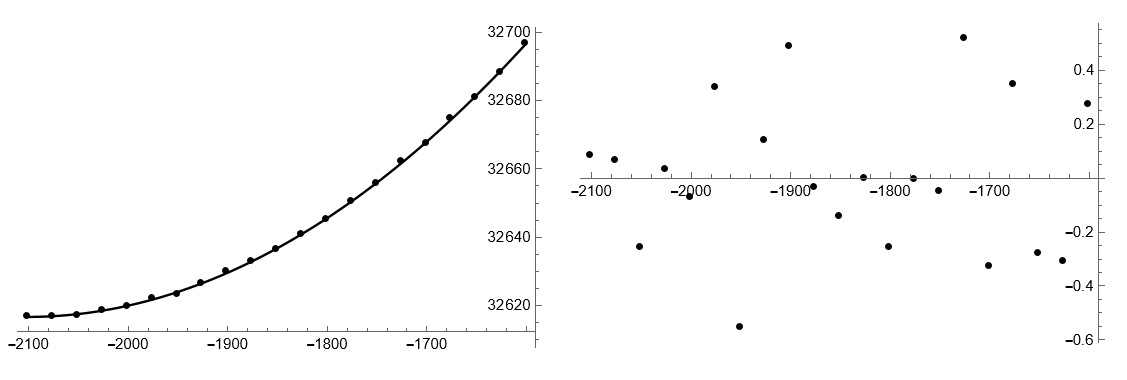}
  %\vspace*{-1.5cm}
\end{center}
\caption{The stellar time of the beta event between -2101 and -1601. The fit (\ref{pol3}) (left) and its residuals (right).} 
\label{stellar3}
\end{figure}
}

We now derive formulas for the shift of the stellar time $t^*$ at which a beta event occurs.

For this purpose we again form a set {\tt data} of points $(y_i, t^*_i)$, where $y_i$ is the year in the $i$th row and $t^*_i$ is the $t^*$-value (in seconds) in the $i$th row of Table \ref{tab10}. Then we again apply the Mathematica tool {\tt Fit} and calculate fitting polynomials of orders 1, 2 and 3 using {\tt Fit[data,\{1,x\},x]}, {\tt Fit[data,\{1,x, x\^{ }2\}, x]}, {\tt Fit[data,\{1,x, x\^{ }2, x\^{ }3\},x]}. The results are

\begin{eqnarray}
  t^* &=& 32939.53019861475 + 0.15968779220780813\cdot y \label{pol1}\\
  t^* &=& 34024.35350287913 + 1.3397290344431394\cdot y \label{pol2}\\
  & & +\;0.00031875776397496643\cdot y^2 \nonumber
\end{eqnarray}
\begin{eqnarray}
  t^* &=& 34046.437165024734 + 1.3758109207196094\cdot y \label{pol3}\\
  & & +\; 0.00033832925555652007\cdot y^2 \nonumber \\
  & & +\; 3.524489749962895\cdot 10^{-9}\cdot y^3 \nonumber
\end{eqnarray}

If we plot the fits (\ref{pol1}), (\ref{pol2}), (\ref{pol3}) and their residuals we obtain the Figures \ref{stellar1}, \ref{stellar2}, \ref{stellar3}. We see that (\ref{pol1}) still has relatively large residuals, which lie approximately on a parabolic curve and correspond to a missing term of the fitting polynomial we are looking for. Fit (\ref{pol2}) contains this missing term, which leads to much smaller residuals. However, a transition to fit (\ref{pol3}) does not result in any further reduction of the residuals. This gives us

\begin{Prop}
Formula {\rm (\ref{pol2})} is the best possible fit in the form of a polynomial to the set of points $(y_i , t^*_i)$.
\end{Prop}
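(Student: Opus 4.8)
The plan is to argue exactly as in the proof of the earlier Proposition that singled out formula (\ref{5.2}), namely by inspecting the residuals $r_i = f(y_i) - t^*_i$ of the three candidate fits (\ref{pol1}), (\ref{pol2}), (\ref{pol3}) displayed in Figures \ref{stellar1}, \ref{stellar2} and \ref{stellar3}. First I would record that the linear fit (\ref{pol1}) is inadequate: its residuals are comparatively large and, crucially, are not randomly scattered but lie along a parabola. This is the characteristic signature that a quadratic term is still missing from the fitting polynomial, so order $1$ cannot be optimal.

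Next I would pass to the quadratic fit (\ref{pol2}). Here the residuals drop by roughly a factor of $\tfrac{1}{10}$ and, more importantly, lose their systematic parabolic shape, so that in Figure \ref{stellar2} all data points $(y_i, t^*_i)$ essentially lie on the fitted curve. This shows that the previously missing $y^2$-term has now been supplied and that no further systematic deviation remains to be explained. Finally I would check the cubic fit (\ref{pol3}) and observe from Figure \ref{stellar3} that its residuals have the same magnitude and the same (now unstructured) character as those of (\ref{pol2}); the extra $y^3$-term produces no visible reduction. Since enlarging the degree no longer decreases the residuals, the fitting procedure terminates at order $2$, and (\ref{pol2}) is the best polynomial fit for the $(y_i, t^*_i)$.

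The main obstacle is conceptual rather than computational: "best possible" is not a fully formalized notion, and strictly speaking one could always interpolate the $21$ data points exactly with a polynomial of degree $20$. The honest reading — and the one used consistently throughout this section — is the pragmatic criterion of increasing the degree only as long as the residuals keep shrinking and stopping as soon as they stabilize; the residual comparison above establishes precisely that, for this data set, that stopping point is degree $2$, which is what the Proposition asserts.
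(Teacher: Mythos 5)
Your argument matches the paper's own reasoning: the paper likewise observes that the residuals of (\ref{pol1}) are large and lie on a parabolic curve indicating a missing quadratic term, that (\ref{pol2}) supplies this term and yields much smaller residuals, and that passing to (\ref{pol3}) brings no further reduction, hence (\ref{pol2}) is taken as the best polynomial fit. Your closing remark on the pragmatic meaning of ``best possible'' is a fair gloss on the same stopping criterion the paper uses implicitly.
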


%\subsubsection{A {\it Mathematica} model}
%\mycomment{
%\begin{figure}[!ht]
%  \begin{center}
%    %\includegraphics[width=\textwidth]{graphics/Hortfund_neu_cr.jpg}
%    \includegraphics[width=368pt]{ausgleich2.png}
%  %\vspace*{-1.5cm}
%\end{center}
%\caption{xxx} 
%\label{ausgleich2}
%\end{figure}
%}

\section{On the interpretation of the large moon symbols on the Sky Disc}
Actually, neither a full Moon nor a new Moon is needed to determine
heliacal sets or beta days. Then why do Moon symbols appear on the Sky Disc? On
this question it is said in the interpretation of W. Schlosser \cite{schlosser3}
that the Moon symbols indicate spring or autumn. In the vicinity of
the heliacal setting (evening setting) of the Pleiades in spring there is a conjunction
between the Pleiades and a narrow waxing Moon, in autumn, near the
cosmic setting (morning setting) of the Pleiades, there is a conjunction of the Pleiades
and a almost full Moon.

There is no doubt that these conjunctions exist. However, there are some objections to the use of the conjunction in autumn to indicate the fall or harvest season.

In Appendix \ref{appE2} we show that two conjunctions of the Pleiades with a nearly full moon often occur one month apart (Tables \ref{tab28} and \ref{tab29}). The Pleiades-Moon conjunction would therefore not be clearly determined.

Even more serious is the objection that conjunctions of the Pleiades with a nearly full moon cannot be seen with the naked eye. If the moon is only a few degrees away from the Pleiades in the sky with a surface illuminated by at least 90\%, then its bright light outshines the Pleiades and the Pleiades cannot be seen. The Nebra people could therefore not have known on the night of the conjunction that a conjunction was taking place. We have verified the non-visibility of the Pleiades by our own observations (Appendix \ref{appE2}, Tables \ref{tab30} and \ref{tab30b}).

If the lunar symbols (full moon and waxing crescent) are ruled out as memograms for determining spring and autumn, what significance do they have? The moon's phases provide a division of the year into both months and weeks. The Nebra people will have used this division option, where several possibilities are conceivable.

\subsection{A lunisolar calendar that is harmonized with the solar year} \label{subsec6.1}

If one sets the rule
\begin{Rule} \label{rule6.1}
A lunar year (common year) consists of 12 months, each beginning with a new moon.
Determine the beta day and then start the new lunar year on the first new moon after the beta day.
\end{Rule}
then a lunisolar calendar is created in which a leap month is automatically inserted when the difference between the solar and lunar year becomes too large.

The synodic month has a length of 29.53 days. The lunar year is therefore 354.36 days long, $10.89 \sim 11$ days shorter than the solar year of 365.25 days.

The beta day is a phenomenon of the solar year. In a period of a few years, the distance between two beta days is always the length of a solar year, because according to Proposition \ref{prop5.12} the position of the beta point hardly changes within a few years.

If we now consider a lunar year that we start within 29 days of the beta day, then the start of the next lunar year moves 11 days closer to the beta day from year to year. After a short time there will be a lunar year that begins before the beta day. However, since Rule \ref{rule6.1} requires that lunar years must always start after the beta day, an additional month, a leap month, is inserted in such a case. Whenever there is a lunar year that begins before the beta day, a leap month will automatically occur if Rule \ref{rule6.1} is followed. No special mathematical knowledge is required for this.

R. Hansen also proposed such a leap rule in \cite{hansen1, hansen2}, where he assumed that it might have been imported from the Babylonians. If they knew about the beta day, the Nebra people could have figured it out on their own.

However, rule \ref{rule6.1} would only suggest depicting the waxing crescent moon on the disk. So why also depict the full moon?

One could also consider lunar years in which the months always begin with a full moon. Rule \ref{rule6.1} would have to be replaced by

\begin{Rule} \label{rule6.2}
A lunar year (common year) consists of 12 months, each beginning with a full moon.
Determine the beta day and then start the new lunar year on the first full moon after the beta day.
\end{Rule}
But this new rule would then only suggest depicting the full moon and not the waxing crescent moon.

And it seems unlikely that the Nebra people wanted to make a memogram for both possibilities.

\subsection{A week count for the farming year} \label{subsec6.2}
It is possible that the Nebra people did not yet have a complete lunisolar calendar, but merely used the course of the moon to divide the farming year into weeks and to calculate important dates for agriculture by counting the weeks that had passed since sowing.

A later example of a list of agricultural dates is the list of dates by Hesiod given in Table \ref{tab17b}, which can be found in Ideler \cite[p.245-247]{ideler1}. The Gregorian dates given in Table \ref{tab17b} are based on the Julian dates calculated by Ideler for the astronomical phenomena, which we have simply converted into Gregorian dates.

The Nebra people may not have known as many rises and sets as Hesiod, but they would have had similar dates and could determine them by counting the weeks. In order for such a counting to work, it is important to schedule the sowing at a suitable phase of the moon so that the following weeks can be counted from the time of sowing.
In detail, this could have happened as follows:
\begin{Def}
We introduce the common term {\it round moon phase} for new moon and full moon.
\end{Def}
Then we use the following rule to determine a sowing date:
\begin{Rule} \label{rule6.4}
Determine the beta day. Then start sowing on the second round moon phase following the beta day.
  \end{Rule}

\begin{table}[t] %[h] [b]
\begin{center}
\begin{tabular}{|l|l|l|}
\hline
 Astronomical phenomenon & Gregorian date & Agricultural activity/season\\
 \hline
 morning setting Pleiades & Oct 27 & winter sowing time \\
 morning setting Hyades & Oct 31 & winter sowing time \\
 morning setting Orion & Nov 8 & winter sowing time \\
 evening rising Arctur & Feb 17 & beginning of spring \\
 evening setting Pleiades & Mar 29 & announcer morn. ris. Pleiades \\
 morning rising Pleiades & May 12 & harvest \\
 morning rising Orion & Jul 2 & threshing time \\
 morning rising Arrctur & Sep 11 & grape harvest\\
 \hline
\end{tabular}
\vspace{3mm}
\caption{Agricultural dates according to Hesiod \cite[p.245 - 247]{ideler1}.}
\label{tab17b} 
\end{center}
\end{table}

In which period would a sowing date determined according to Rule \ref{rule6.4} fall? To answer this question, we must first clarify what time intervals can be expected between round moon phases.

For the mean length of the synodic month, i.e. the time between a lunar phase and the return of this lunar phase, the formula
\begin{equation}
M = 29^d.5305881 - 0^d.0000002\cdot\tau
\end{equation}
is given in \cite[p.66]{ahnert2}. Here, $\tau$ is the time (in centuries) between 1974 (the year \cite{ahnert2} was published) and the time for which we want to determine $M$. For the past, $\tau$ is positive, for the future it is negative.

For -1600, this gives $\tau = 36$ and $M = 29.5305809$. The mean synodic month of the Bronze Age differs little from ours.

The actual length of the synodic month differs from the mean length by considerable fluctuations. In \cite[p.224]{weigzimm} $\pm 6$ hours are mentioned, Meeus \cite[p.354]{meeus} even provides examples with $\pm 7$ hours.

7 hours are in fractions of a day $7/24 = 0.292 \sim 0.3$ This gives us the minimum and maximum values for $M$
\begin{equation}
M_{\text{min}} = 29.23\quad,\quad M_{\text{max}} = 29.83\,.
\end{equation}

The earliest time for the second round moon phase after the beta day is when the first round moon phase after the beta day occurs immediately after the start of the day after the beta day. The second round moon phase after the beta day can then occur at the earliest at time
\begin{equation}
\text{beta day} + M_{\text{min}}/2 = \text{beta day} + 14.615 \text{days.}
\end{equation}

The latest possible time for the second round lunar phase after the beta day is obtained by adding $M_{\text{max}}$ to the beta day. This would put the start of sowing in the interval
\begin{equation}
[\text{beta day} + 14.615 \text{days}\;,\;\text{beta day} + 29.83 \text{days}]\;. \label{6.4}
  \end{equation}
With a beta day on February 12th (Gregorian) (see Table \ref{tab9}), the start of sowing would fall between February 26th and March 14th (Gregorian), if there is no leap year of the Gregorian calendar.

If the Nebra people only estimated the vertical positioning of the Pleiades under $\beta$ Aur and thus in an interval
$$[\text{beta day - 3days}\;,\;\text{beta day + 3days}]$$
the Pleiades appeared to them vertically under $\beta$ Aur (as in Table \ref{tab12}), then in Rule \ref{rule6.4} we must use the last day on which the Pleiades are vertically under $\beta$ Aur instead of the beta day itself. The interval (\ref{6.4}) would change to
\begin{equation}
[\text{beta day} + 17.615 \text{days}\;,\;\text{beta day} + 32.83 \text{days}]\;.
  \end{equation}

With this procedure, the sowing date sometimes falls on a new moon and sometimes on a full moon. Likewise, the week count sometimes begins on a new moon and sometimes on a full moon. If one sees this procedure as the background for the Disc image, then it becomes understandable why both phases of the moon were shown on the Disc.

We can well imagine that this type of week counting was already used when there was no astronomical criterion for a sowing date, but biological criteria (flowering of certain plants, return of migratory birds) were used to start sowing, because there is the same need to connect the signal to start sowing to a week count.

Round moon phases are so good for starting the week count because they cannot be confused as easily as the first and last quarter would be.

If one wants to count also the months that have passed since sowing, one only has to remember until autumn which phase of the moon was present when sowing. One could do this by leaving a black or white stone in a special place, for example.

The method given in Rule \ref{rule6.4} for determining a sowing date may in some years mean that sowing does not take place until mid-March (Gregorian).
Can such a delay be a problem?

During a river cruise\footnote{PLANTOURS cruise from Nancy to Bonn, MS SAN SOUCI, 14.07-21.07.2022.} we had the opportunity to discuss with a farmer, Mr. Th. Geiger\footnote{Thomas Geiger, Heuberger Hof 1, D-72108 Rottenburg /N, Germany.}, the question of whether delayed sowing could lead to a poor harvest in the autumn.

Mr. Geiger told us that a delay of 14 days is not a serious problem. Nature would catch up. And that would apply to the Bronze Age just as much as it does to our time today.

The main factors that can lead to a crop failure are
\begin{itemize}
\item Water shortage due to lack of rainfall,
  \item Nitrogen deficiency in the soil due to insufficient fallowing.
\end{itemize}
Compared to these factors, delayed sowing has a small impact.

Mr. Geiger also told us that one doesn't actually need an astronomical criterion to determine the right time to harvest. In the Bronze Age, grain was harvested at the "hard dough" stage of maturity.
The grain is then hard and can no longer be squeezed out, but can be bitten with good teeth. Fingernail impressions are irreversible. Any experienced farmer can determine these properties from the grains on the stalk. He doesn't need astronomy to determine the harvest time.

So it is not so problematic that in our new interpretation of the lunar symbols on the Sky Disc the criterion for "autumn" or "harvest time" has been eliminated.

Mr. Geiger also told us that, just as with harvesting, no astronomical criteria are actually required to determine the time for sowing. There are many biological characteristics that can be used to determine the time for sowing. This is also the basis for our view that a similar week counting system to the one described here could have been used before the use of astronomical characteristics. When the biological characteristics for sowing had occurred, people waited until the next round phase of the moon to sow so that the week counting system could begin with sowing.

\section{How could the astronomical things shown on the sky disc have been discovered by the Nebra people?} \label{sec7}

We now describe a possible way in which the Nebra people might have discovered the astronomical knowledge that we believe we have identified on the Sky Disc.

From our investigations in Appendix \ref{appB} it follows that with probability 1 the Disc image Figure \ref{AuxLine4} is a scaled representation of the Auriga line. Since the angle of inclination of the line $\alpha$ Aur -- $\epsilon$ Gem to the perpendicular from $\beta$ Aur to the horizon deviates only slightly from the actual value of this angle, Figure \ref{AuxLine4} shows the moment when the Pleiades are perpendicular to $\beta$ Aur.

This shows that W. Schlosser's hypothesis that the large symbol of 7 stars represents the Pleiades is correct. The Pleiades were therefore known to the Nebra people.

According to W. Schlosser's calculations and our own calculations in Table \ref{tab5}, the heliacal setting of the Pleiades in the period -2101 to -1601 occurred on a date that is suitable as a seeding time (even when using biological characteristics to determine the seeding time).

This often results in the Pleiades being visible in the sky before sowing. After sowing, however, they have disappeared. The Nebra people will have noticed this too. At some point they will have made a series of observations to explain the Pleiades' disappearance. This is how they discovered the heliacal setting of the Pleiades.

They probably used the heliacal setting of the Pleiades as an indicator for the sowing for some time. Since they could not predict the heliacal setting, they had to start observing the Pleiades long before the sowing and had to accept a long series of observations. In these observations they observed the apparent movement of the Pleiades below the stars of the Auriga line and the beta event.

They now only needed to start observing the Pleiades from the beta day and could thus shorten the observation series to 24 to 28 days (Table \ref{tab10c}).

However, they probably soon realized that they could also arrive at a usable sowing date by using round phases of the moon, to which they could then even add a week count. This is all the more likely if the Nebra people had already used a week count before using an astronomical sowing criterion (see Subsection \ref{subsec6.2}).

On the beta day, the Nebra people could also have built a self-harmonizing lunisolar calendar (see Subsection \ref{subsec6.1}). Whether they actually did that, we cannot prove at the moment.

All the procedures described in Section \ref{sec7} can be obtained by simply observing nature. No complicated mathematics or knowledge imported from another culture in the Mediterranean region is required.

\section{Are there still beta days today?}
To answer this question, we used the procedure described in Appendix \ref{appE} and Stelarium \cite{zottihoffmann,zottiwolf} to calculate another Table like Table \ref{tab7a} for January 1 and 2, 2020 (Gregorian) and the Mittelberg. The result is summarized in Table \ref{tab13}.

\begin{table}[t] %[h] [b]
\begin{center}
\begin{tabular}{|c|c|c|c|c|}
\hline
 date & time & $\text{Az}_{\eta}$ & app. sid. time & description \\
\hline
2020-01-01& 21:18:46 &$180^\circ0'$&    3:48:40  &         transit Pleiades\\
2020-01-01& 23:26:46 &$234^\circ24'$&   5:57:01  &
$\text{Az}_{\eta} = \text{Az}_{\alpha}$\\
2020-01-02&  1:45:46  &$268^\circ26'$&   8:13:23  &         hardly any
relative movement\\
&&&& to $\beta$ Aur (begin)\\
                  &&&&  $\text{Az}_{\beta}-\text{Az}_{\eta} = -52'$ \\                  %              Az_betaAur - Az_etaTau =267°34'-268°26'=267°34'-267°86'=-52' \\
2020-01-02&  2:23:46  &$275^\circ45'$&   8:54:30 &          hardly any
relative movement\\ 
&&&& to $\beta$ Aur (end)\\
 &&&&  $\text{Az}_{\beta}-\text{Az}_{\eta} = -59'$ \\
%                 &&&&                                     Az_betaAur - Az_etaTau = 274°46'-275°45'=274°46'-274°105'=-59'\\
2020-01-02&  5:04:46  &$305^\circ20'$&  11:35:57 &
$\text{Az}_{\eta} = \text{Az}_{\alpha}$\\
&&&&not visible, h$_\eta = 3^\circ33'$ \\
%Gleiches Azimut wie alpha Aur (Capella) h=3°33' Plejaden\\
2020-01-02&  5:33:32  &$310^\circ53'$&  12:04:47 &          set Pleiades\\
2020-01-02& 13:00:03   &$49^\circ7'$&   19:32:33 &           rise Pleiades\\
2020-01-02& 20:07:03  &$147^\circ53'$&   2:40:42 &          hardly any
relative movement\\
&&&& to $\alpha$ Aur (begin)\\
2020-01-02& 21:07:03  &$176^\circ6'$&    3:40:52 &          hardly any
relative movement\\
&&&& to $\alpha$ Aur (end)\\
2020-01-02& 21:14:50  &$180^\circ0'$&    3:48:40  &         transit Pleiades\\
\hline
\end{tabular}
\vspace{3mm}
\caption{Important positions of the Pleiades during a full azimuth
  cycle on 2020-1-1 Greg. and 2020-1-2 Greg. on Mittelberg. ''Time''
  is CET. $\text{Az}_{\eta}$ = azimuth of $\eta$ Tau (Alcyone
/ Pleiades), h$_\eta$ = height of $\eta$ Tau  (Alcyone
/ Pleiades), $\text{Az}_{\alpha}$ = azimuth of $\alpha$ Aur
  (Capella). ''app. sid. time'' means ''apparent sidereal time''.}
\label{tab13}
\end{center}
\end{table}

We see that the movement of $\eta$ Tau (the brightest star of the Pleiades), which should pass from right to left under $\beta$ Aur, stops at about 1:45:46, $52'$ before the perpendicular from $\beta$ Aur to the horizon. At about 2:23:46 $\eta$ Tau starts moving again, but in a retrograde direction from left to right. This means that $\eta$ Tau no longer reaches a position that is exactly perpendicular to $\beta$ Aur.

To check whether at least one of the other 6 stars of the Pleiades is ever perpendicular to $\beta$ Aur, we used Stelarium to determine the azimuth angles of these stars for January 2, 2020 and for the times 1:45:46 and 2:23:46 (CET). Table \ref{tab19a} shows the results.

\begin{table}[t] %[h] [b]
\begin{center}
\begin{tabular}{|r|l|l|}
\hline
star & 1:45:46 & 2:23:46 \\
\hline
$\beta$ Aur & $267^\circ 34'$ & $274^\circ 46'$ \\
\hline
$\eta$ (25) Tau &  $268^\circ 26'$ & $275^\circ 45'$\\
27 Tau & $268^\circ 4'$ & $275^\circ 24'$ \\
17 Tau & $268^\circ 58'$ &  $276^\circ 15'$ \\
20 Tau & $268^\circ 59'$ & $276^\circ 16'$ \\
23 Tau & $268^\circ 23'$ & $275^\circ 51'$ \\
19 Tau & $269^\circ 11'$ & $276^\circ 27'$ \\
28 Tau & $268^\circ 8'$ & $275^\circ 27'$ \\
\hline
\end{tabular}
\vspace{3mm}
\caption{The azimuths of $\beta$ Aur and the 7 stars of the Pleiades on the Mittelberg on January 2nd, 2020 at the times 1:45:46 CET and 2:23:46 CET.}
\label{tab19a}
\end{center}
\end{table}

\begin{table}[t] %[h] [b]
\begin{center}
\begin{tabular}{|l|l|l|l|l|}
\hline
date & CET & Az $\eta$ Tau & Az $\beta$ Aur & Az$_\eta$ - Az$_\beta$ \\
\hline
2020-1-1 & 2:18:00 &  273.9220 & 273.0557 & 0.8663 \\
2045-1-1 & 2:18:00 &  274.4048 & 273.3941 & 1.0107 \\
2070-1-1 & 2:18:00 &  274.1358 & 273.0152 & 1.1206 \\
2095-1-1 & 2:18:00 &  273.8710 & 272.6460 & 1.2250 \\
2120-1-1 & 2:18:00 &  272.8558 & 271.5371 & 1.3187 \\
\hline
\end{tabular}
\vspace{3mm}
\caption{Azimuths of $\eta$ Tau and $\beta$ Aur (in degrees) and their differences between 2020 and 2120 on the Mittelberg at a time when there was no apparent relative motion of the Pleiades with respect to $\beta$ Aur.}
\label{tab20a}
\end{center}
\end{table}

We see that at both times all 7 stars of the Pleiades are to the right of the perpendicular of $\beta$ Aur. This means that there will be no beta event on the Mittelberg on January 2, 2020. The reason for this is that the circles k and K in Figure \ref{zenit} no longer intersect. The values of the azimuth differences Az$_\eta$ - Az$_\beta$ in Table \ref{tab20a} show that the distance to which the Pleiades can approach the vertical below $\beta$ Aur will continue to grow over the next 100 years.

At observation locations further south, however, the beta event can still be observed today. We checked this using Stellarium for Munich, Vienna and Athens. And even on the Mittelberg, the deviation from the perpendicular through $\beta$ Aur is so small that one can still get a good impression of what the beta event looks like when observing with the naked eye (see Prop. \ref{prop5.10}).

\section{The mathematical and astronomical abilities of the Nebra people}
We now want to compile the mathematical and astronomical abilities of the Nebra people that became apparent in our investigation. We found:
\begin{enumerate} 
\item The Nebra people were able to measure distances between stars in the sky and probably also angles (see Sec. \ref{sec524a} and App. \ref{appB}).
\item They were able to transfer star constellations to scale on the Sky Disc (Sec. \ref{sec524a}, App. \ref{appB}).
\item They were able to handle simple rational numbers such as $1, 2, 3\frac12, \frac12$ sufficiently to produce scaled images of measured figures. However, it cannot be said whether they had already calculated with such numbers using the four basic arithmetic operations.
\item It is also not possible to say whether they knew digits and calculated with digits. This is probably not the case, since they did not have any writing either.
\item The Nebra people had a certain idea of the mathematical concept of ''similarity''. They knew that proportions had to be preserved if one wanted to construct a scaled image of a figure.
  \item Their knowledge of similarity is also evident in the specification of horizon arcs. They must have determined the angle range swept by the sun over the course of a year using a gnomon. They knew that the angle traversed by the shadow is the same as the angle traversed by the sun. However, they were not yet able to represent this angle independently of the measurement method used. They recreated the situation on a gnomon on the Disc. Therefore, an observer must take a bearing along the intersecting lines in Fig. \ref{horizontwinkel} if he wants to check whether the sun has reached the position of the winter solstice.
\item It is noteworthy that the Nebra people knew about the gnomon, as did the designers of the ring sanctuaries of Goseck, P\"ommelte and Sch\"onebeck. According to Ideler \cite[p.247]{ideler1}, the Greeks did not know about the gnomon for a long time. It was brought to them by Anaximander of Miletus (610-547 BC), who had learned about it from the Babylonians.
\item Perhaps the Nebra people have already used a plumb bob to determine whether the Pleiades are perpendicular to $\beta$ Aur. However, using a measuring device at night also requires solving the following problems:
\begin{itemize}
\item One needs a shield against the wind.
\item One needs a weak lighting for the measuring device that does not blind the observer.
\end{itemize}
\item In \cite{schlosser3} W. Schlosser showed that many of the larger arcs on the sky disk are constructed circles. He was able to determine the centers and radii of these circular arcs.
  \end{enumerate}
   
\appendix
\section*{appendices}

\section{Conversion of Julian dates to Gregorian dates}
\label{appA}

\begin{table}[t] %[h] [b]
\begin{center}
\begin{tabular}{|c|c|}
\hline
 Sun at the spring equinox & $\Delta$ \\
\hline
   -1-3-23 Jul.  &   2 \\
   -601-3-28 Jul.  &   7 \\
   -800-3-28 Jul. & 7 \\
-1001-3-31 Jul.  &  10 \\
\hline
-1601-4-4 Jul.   &  14 \\
-1801-4-6 Jul.   &  16 \\
-1941-4-7 Jul.   &  17 \\
-2101-4-8 Jul.   &  18 \\
\hline
\end{tabular}
\vspace{3mm}
\caption{The values of $\Delta$ for converting Julian dates to Gregorian dates.}
\label{tab7}
\end{center}
\end{table}

For dates before October 15, 1582, Stellarium uses the Julian calendar. Furthermore, Stellarium applies the astronomical year count, which also includes a year zero. All dates that Stellarium provides us for the periods we are interested in are subject to these conditions.

If we want to estimate the seasonal conditions on such a Julian date, we have to convert it into a Gregorian date. We do this by using Stellarium to determine the Julian date $d$ of the day on which the sun is at the vernal equinox for the year $y$ in question. In the Gregorian calendar, the sun is at the vernal equinox on March 21st. We calculate the number of days $\Delta$ between $d$ and March 21st and can then convert a Julian date of the year $y$ into a Gregorian date of the year $y$ using the formula
$$\text{Gregorian date} = \text{Julian date} - \Delta .$$
Table \ref{tab7} lists $\Delta$-values for the years we are interested in.

This simple procedure can sometimes lead to deviations of $\pm 1$ day from Gregorian dates determined in other ways, since the sun sometimes stands at the vernal equinox on Gregorian March 20th. However, it is sufficient for estimating seasonal conditions.

\section{More precise determination of distances and angles in the second Auriga line} \label{appB}
\subsection{Distances and angles for the second Auriga line on the Sky Disc}\label{appB1}
For measurements on the Sky Disc we use 2 photos of the Disc. On the one hand, we use the image {\tt Nebra\_Scheibe\_white.jpg} from Wikipedia\\*[2mm]
\centerline{\url{https://commons.wikimedia.org/wiki/File:Nebra_Scheibe_white.jpg}}\\*[2mm]
%\vspace*{2mm}
which is also the basis for all images of the Disc in this paper. This picture was taken during an exhibition of the Disc in Basel in 2006. It could contain slight distortions that would reduce the accuracy of the measurements.

On the other hand, we use the photo %\\*[2mm]
\begin{center}
{\tt ``Die Himmelsscheibe von Nebra, , 979x936px, \textcopyright\; LDA Sachsen-Anhalt (Foto Juraj Lipt\'ak).jpg'' 
}
\end{center}
%$\;$\\*[2mm]
from the UNESCO World Documentary Heritage website \cite{unesco}. This photo was taken by the State Office for Monument Preservation and Archeology in Saxony-Anhalt. With it, distortions are to be expected to a lesser extent. We cannot show this photo for copyright reasons, but we can provide our measurement results here.

In order to determine distances and angles on the Sky Disc with greater accuracy than in Section \ref{sec524a}, we must resolve the following two questions:
\begin{enumerate}
\item Does the photo of the Sky Disc used contain distortions because the Sky Disc was tilted relative to the camera when it was taken?
  \item How can one determine the centers of the stars on the Sky Disc?
\end{enumerate}

\subsubsection{Measurements in the image {\tt Nebra\_Scheibe\_white.jpg}}\label{appB1.1}
\paragraph{a) Check for possible image distortions}
To check for possible distortions of an image of the Sky Disc, we use the angle $\alpha$ of the horizon arcs (see Figure \ref{horizontwinkel}). In \cite{schlosser2} W. Schlosser stated that this angle lies between $82^\circ$ and $83^\circ$ and that a more precise value is $82.7^\circ$. The same values are given in \cite[p.59]{melmich1} and \cite[p.60]{melmich1}, \cite[p.35]{melmich2}.

To exclude distortions, we measure this angle $\alpha$ on the image using the image processing program {\tt GIMP} \cite{gimp} and check whether it is sufficiently close to the values given by Schlosser, Meller and Michel.

\mycomment{
\begin{figure}[!ht]
  \begin{center}
    \includegraphics[width=\textwidth]{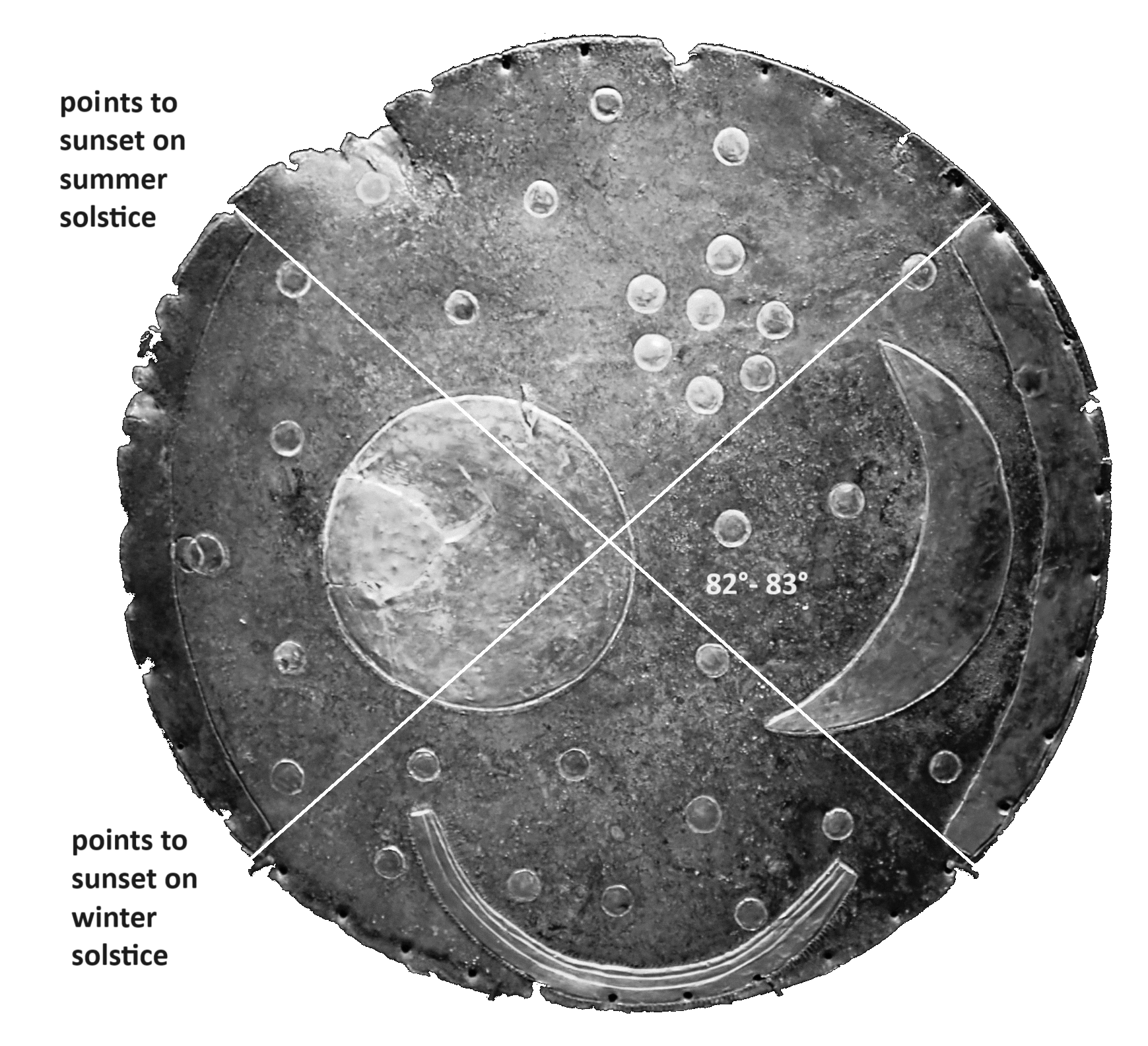}
  %\vspace*{-1.5cm}
\end{center}
\caption{The angle spanned by the horizon arcs. $82^\circ$-$83^\circ$ is the value of W. Schlosser.} 
\label{horizontwinkel}
\end{figure}
}

In order to assess measurement errors, we measure the angle $\alpha$ $n$ times and and calculate the arithmetic mean $\bar{a}$ of the measured values $a_i\,,\,(i=1,\cdots ,n)$, which we regard as the value for the horizon angle.

Then we determine a confidence interval for the mean $\bar{\alpha}$ using the method given by Sachs \cite[p.150, (4.25)]{sachs}.

Sachs' method can only be used if the arithmetic mean $\bar{\alpha}$ is normally distributed. However, according to Sachs \cite[p.151]{sachs} this is the case for $n\ge 30$ in a sufficiently good approximation. We therefore measure $\alpha$ $n=31$ times. Then Sachs' method can be applied. 

Since {\tt GIMP} cannot measure the angle $\alpha$ directly but only the inclination of a straight line relative to the horizontal, we first determine the partial angles $\alpha_1$ and $\alpha_2$ (see Figure \ref{horizontwinkel2}) .

\mycomment{
\begin{figure}[!ht]
  \begin{center}
    \includegraphics[width=\textwidth]{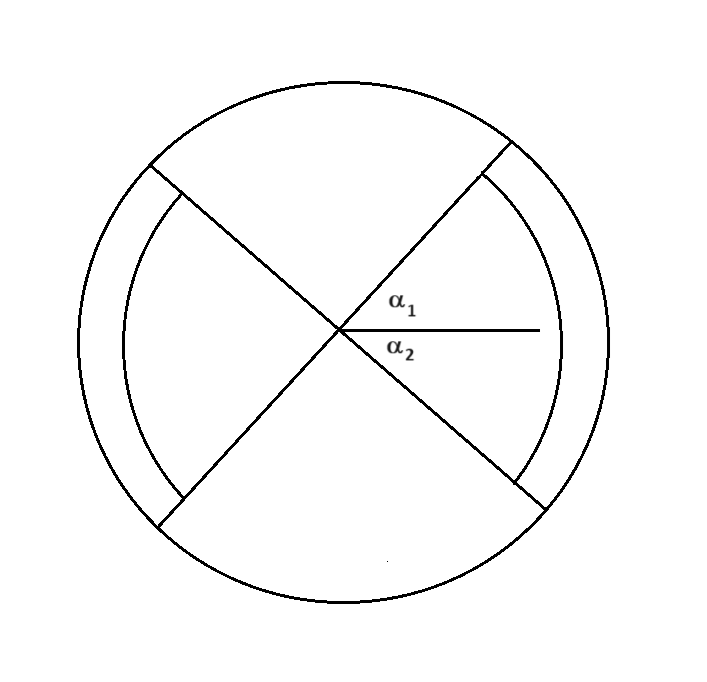}
  %\vspace*{-1.5cm}
\end{center}
\caption{The partial angles for the horizon arcs.} 
\label{horizontwinkel2}
\end{figure}
}

\begin{table}[t] %[h] [b]
\begin{center}
\begin{tabular}{|l|}
    \hline
    \multicolumn{1}{|c|}{Measurements $a_{1,i}$ for $\alpha_1$}\\
\hline
41.80, 41.54, 41.93, 41.78, 41.89, 41.63, 41.60, 41.77, 41.46, 41.89, 41.77,\\
41.76, 41.82, 41.79, 41.84, 41.96, 41.81, 41.80, 41.68, 41.82, 41.77, 41.79,\\
42.00, 41.76, 41.54, 41.85, 41.98, 41.83, 41.99, 41.77, 41.59 \\
   \hline
  \end{tabular}
\\*[3mm]
\begin{tabular}{|l|}
    \hline
    \multicolumn{1}{|c|}{Measurements $a_{2,i}$ for $\alpha_2$}\\
\hline
41.38, 41.38, 41.15, 41.53, 41.26, 41.16, 41.59, 41.25, 41.72, 41.49, 41.32,\\
41.29, 41.41, 41.44, 41.58, 41.50, 41.27, 41.46, 41.19, 41.51, 41.15, 41.48,\\
41.50, 41.37, 41.18, 41.38, 41.46, 41.53, 41.65, 41.64, 41.40\\
   \hline
  \end{tabular}
\\*[3mm]
\begin{tabular}{|l|}
    \hline
    \multicolumn{1}{|c|}{Values $a_i$ for $\alpha = \alpha_1 + \alpha_2$}\\
\hline
83.18, 82.92, 83.08, 83.31, 83.15, 82.79, 83.19, 83.02, 83.18, 83.38, 83.09,\\
83.05, 83.23, 83.23, 83.42, 83.46, 83.08, 83.26, 82.87, 83.33, 82.92, 83.27,\\
83.5, 83.13, 82.72, 83.23, 83.44, 83.36, 83.64, 83.41, 82.99\\
   \hline
   $\bar{a} = 83.1881$\\
   $s = 0.216601$\\
   $s_{\bar{a}} = 0.0389027$\\
\hline
  \end{tabular}
\vspace{3mm}
\caption{Data for determining the confidence interval (\ref{B7}) from {\tt Nebra\_Scheibe\_white.jpg}. All quantities in degrees.}
\label{tab33}
\end{center}
\end{table}

The results $a_{1,i}\, ,\,  a_{2,i}$ of these 31 measurements are shown in Table \ref{tab33}. By adding the $\alpha_1$ and $\alpha_2$ values of each pair of measurements, we calculate the $\alpha$ values $a_i$ and then the arthmetic mean $\bar{a}$, the empirical standard deviation $s$ of the individual measurement and the empirical standard deviation $s_{\bar{a}}$ of the mean according to

\begin{equation}
\bar{a} = \frac1n \sum_{i=1}^n a_i \quad,\quad s = \sqrt{\frac1{(n-1)} \sum_{i=1}^n (a_i - \bar{a})^2} \quad , \quad s_{\bar{a}} = \frac{s}{\sqrt{n}} \label{b1}
\end{equation}

Now we apply Sachs' method.
We choose a confidence level $1-p = 0.95$ and determine a constant $c$ from Student's t-distribution $F_{n-1}$ with $n-1 = 30$ degrees of freedom, so that $F_{n-1 }(c) = 1-p/2$. We obtain
\begin{eqnarray}
  p &=& 0.05 \\
  1-p/2 &=& 0.975 \\
  c &=& 2.04227 \label{b4}
\end{eqnarray}
The $c$ value was calculated using
\[
\text{Solve[CDF[StudentTDistribution[30],cc] == 0.975,cc][[1]]}
\]
with Mathematica \cite{mma}.

If one calculates
\begin{eqnarray}
  k &=& c\cdot s_{\bar{a}} \label{b5}\\
  k &=& 0.07945^\circ \label{B6}
\end{eqnarray}
then $\bar{a} - k \le \bar{\alpha} \le \bar{a} + k$ is the confidence interval we are looking for.

The value (\ref{B6}) results in the confidence interval
\begin{equation}
83.1086^\circ \le\bar{\alpha}\le 83.2675^\circ \label{B7}
\end{equation}
The true value of the mean value $\bar{\alpha}$ of the $\alpha$ values will lie in the confidence interval (\ref{B7}) with a probability of 95\%. Since the confidence interval is close to $83^\circ$, the image cannot contain large distortions.\\*[2mm]
\paragraph{b) Determination of distances and angles in the second Auriga line on the Sky Disc}

In order to calculate distances and angles in the second Auriga line with good accuracy, we must first find a method by which we can determine the centers of the small gold plates that represent stars. We solve this problems by fitting circles.

We loaded the photo of the Disc into the Mathematica program \cite{mma}, opened the coordinate tool by clicking on the image of the Disc and placed 8 points on the edge of each small star plate of the second Auriga line.

Mathematica provided a list {\tt points} of the coordinates of these 8 points of the star plates, measured in the photo's pixel coordinate system. We inserted these lists one after the other into the Mathematica tool {\tt RegionFit} and used
\begin{equation}
  {\tt RegionFit[points, "Circle", Method->"LMEDS"]} \label{B8}
\end{equation}
to calculate a fitting circle for each list, i.e. a circle that has the smallest distances from the 8 points.

Figure \ref{kreisfit1} shows the fitting circle and the 8 points of the star $\alpha$ Aur in very high magnification.

\begin{table}[p] %[t] %[h] [b]
\begin{center}
  \begin{tabular}{|r|r|r|}
    \hline
    \multicolumn{3}{|c|}{Circle around $\alpha$ Aur}\\
\hline
  x & y & d \\
 \hline
 957.9310 & 1344.9352 & 0.3598\\
 958.9655 & 1380.1076 & 0.0\\
 999.3103 & 1338.7283 & 3.7309\\
 999.3103 & 1381.1421 & 1.2715\\
 951.7241 & 1360.4525 & 0.0052\\
 978.6206 & 1334.5904 & 0.0\\
 1005.5172 & 1365.6249 & 0.0 \\
 980.6896 & 1390.4525 & 1.8091 \\
   \hline
   \multicolumn{3}{|l|}{$m = (978.7516 , 1361.6493)$}\\
   \multicolumn{3}{|l|}{$r = 27.0592$}\\
\hline
  \end{tabular}
  
  \vspace{3mm}
    \begin{tabular}{|r|r|r|}
    \hline
    \multicolumn{3}{|c|}{Circle around $\beta$ Aur}\\
\hline
  x & y & d \\
 \hline
 770.6753 & 1443.4349 & 0.4051\\
 769.6450 & 1406.3436 & 1.9750\\
 809.8273 & 1407.3739 & 0.0\\
 807.7666 & 1445.4955 & 0.0553\\
 765.5237 & 1422.8286 & 1.1911\\
 792.3119 & 1399.1314 & 0.0\\
 812.9182 & 1426.9499 & 4.0527\\ 
 787.1603 & 1451.6774 & 0.0\\
   \hline
   \multicolumn{3}{|l|}{$m = (790.6052 , 1425.4896)$}\\
   \multicolumn{3}{|l|}{$r = 26.4134$}\\
\hline
\end{tabular}
\hspace*{5mm}
  \begin{tabular}{|r|r|r|}
    \hline
    \multicolumn{3}{|c|}{Circle around $\theta$ Aur}\\
\hline
  x & y & d \\
 \hline
 674.7001 & 1302.5227 & 0.0989\\
 668.5665 & 1267.7654 & 0.0\\
 708.4351 & 1260.6095 & 0.0\\
 707.4129 & 1303.5450 & 0.0\\
 664.4774 & 1283.0995 & 0.2501\\
 690.0342 & 1258.5650 & 3.7708\\
 715.5910 & 1281.0550 & 2.9941\\ 
 687.9897 & 1312.7454 & 4.3185\\
   \hline
   \multicolumn{3}{|l|}{$m = (691.6422 , 1281.6896)$}\\
   \multicolumn{3}{|l|}{$r = 26.9513$}\\
\hline
\end{tabular}

  \vspace{3mm}
    \begin{tabular}{|r|r|r|}
    \hline
    \multicolumn{3}{|c|}{Circle around $\epsilon$ Gem}\\
\hline
  x & y & d \\
 \hline
 294.5875 & 1178.9275 & 1.0123\\
 296.6621 & 1145.7345 & 0.0918\\
 336.0788 & 1143.6600 & 0.0\\
 331.9296 & 1182.0394 & 2.6110\\
 290.4384 & 1162.3310 & 0.0\\
 311.1840 & 1137.4363 & 0.7674\\
 344.3770 & 1164.4056 & 0.3416\\
 313.2586 & 1189.3003 & 0.0\\
   \hline
   \multicolumn{3}{|l|}{$m = (317.2595 , 1162.7755)$}\\
   \multicolumn{3}{|l|}{$r = 26.8248$}\\
\hline
\end{tabular}
\hspace*{5mm}
  \begin{tabular}{|r|r|r|}
    \hline
    \multicolumn{3}{|c|}{Center star of the Pleiades Rosette}\\
\hline
  x & y & d \\
 \hline
 919.1412 & 1138.0797 & 0.0\\
 918.0761 & 1101.8679 & 0.0\\
 960.6783 & 1091.2173 & 8.4551\\
 958.5481 & 1138.0797 & 0.0140\\
 911.6858 & 1118.9087 & 0.0\\
 939.3772 & 1086.9571 & 5.2529\\
 970.2637 & 1115.7136 & 4.4499\\
 938.3121 & 1146.6001 & 0.0546\\
   \hline
   \multicolumn{3}{|l|}{$m = (938.8543 , 1119.3783)$}\\
   \multicolumn{3}{|l|}{$r = 27.1726$}\\
\hline
\end{tabular}
  \vspace{3mm}
  %\caption{zeile1\\
    %zeile2}
\caption{Data of the stars of the Auriga line and the center of the Pleiades rosette: $x\,,\,y$ coordinates of the 8 edge points of the star plate, $d$ distance of the relevant edge point from the fitting circle of the star, $m\,,\, r$ center and radius of the fitting circle. Everything measured in pixels.}
\label{tab21a}
\end{center}
\end{table}

Table \ref{tab21a} lists for each star of the second Auriga line and for the center star of the Pleiades rosette the coordinates $x\,,\,y$ of the 8 points, the distance $d$ of each of these points from its fitting circle and the center $m$ and the radius $r$ of the fitting circle of the respective star.

All of these quantities are measured in pixels. $m$ and $r$ are the output of (\ref{B8}).

In order to be able to assess the quality of the fitting circles, we had Mathematica draw all 5 fitting circles into the photo of the Sky Disc (see Figure \ref{kreisfit}). The circles around the 4 stars of the Auriga line fit very well. Only the circle around the center star of the Pleiades rosette shows a larger deviation from the edge of the star.

\mycomment{
\begin{figure}[!ht]
  \begin{center}
    \includegraphics[width=8cm]{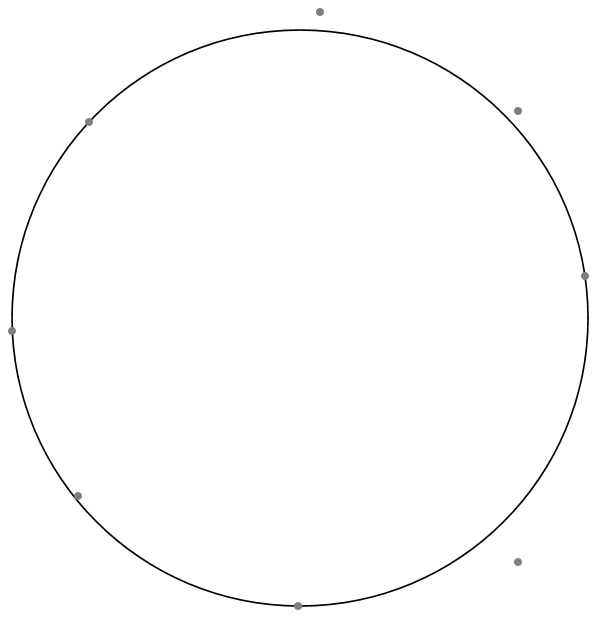}
  %\vspace*{-1.5cm}
\end{center}
\caption{The best-fitting circle for $\alpha$ Aur.} 
\label{kreisfit1}
\end{figure}
}

\mycomment{
\begin{figure}[!ht]
  \begin{center}
    \includegraphics[width=\textwidth]{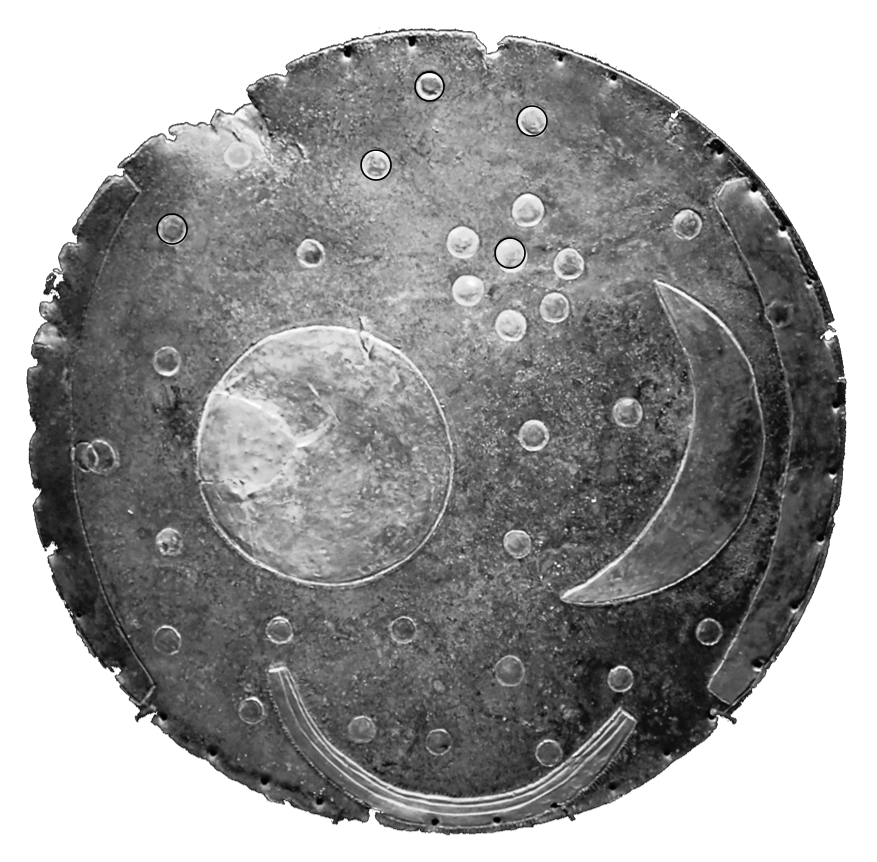}
  %\vspace*{-1.5cm}
\end{center}
\caption{The best-fitting circles around the central star of the Pleiades and the 4 stars of the Auriga line.} 
\label{kreisfit}
\end{figure}
}

\begin{table}[t] %[h] [b]
\begin{center}
\begin{tabular}{|l|l|l|}
\hline
 line & length (pixels) & relative length\\
\hline
$\beta$ Aur -- $\alpha$ Aur & 198.682 & 1.0 \\
$\beta$ Aur -- $\theta$ Aur & 174.563 & 0.87860166\\
$\epsilon$ Gem -- $\theta$ Aur & 392.814 & 1.97709658\\
$\epsilon$ Gem -- $\alpha$ Aur & 690.741 & 3.47660818\\
\hline
dist($\theta$ Aur , $\alpha$ Aur -- $\epsilon$ Gem) & 6.08887 & 0.0306463\\
\hline
\multicolumn{2}{|l|}{$\sphericalangle(\alpha \text{ Aur} \rightarrow\epsilon\text{ Gem} \,,\, \beta\text{ Aur}\rightarrow\text{center Pleiades})$} & $99.10772^\circ$\\
\hline
\end{tabular}
\vspace{3mm}
\caption{Distances and angles on the Sky Disc}
\label{tab22}
\end{center}
\end{table}

We denote the midpoints $m$ given in Table \ref{tab21a} by $m_\alpha$, $m_\beta$, $m_\theta$, $m_\epsilon$, $m_\text{plei}$ and see them as the centers of the corresponding star tile. ($m_\text{plei}$ represents the center star of the Pleiades Rosette.)

We then calculate the length of a line between two points $m_u = (u_1, u_2)$ and $m_v = (v_1, v_2)$ using the elementary formula
\begin{equation}
  d(m_u , m_v) = \sqrt{(v_1 - u_1)^2 + (v_2 - u_2)^2}\,.
  \end{equation}
Table \ref{tab22} shows the results for the center points considered above. We calculate the relative distances by dividing all absolute distances by the distance 198.682 pixels from $\alpha$ Aur and $\beta$ Aur.

We must also show that the star $\theta$ {Aur} lies, to a sufficient approximation, on the line through $\alpha$ {Aur} and $\epsilon$ {Gem}. Only then is the structure formed from the 4 stars of the Auriga line uniquely determined.

For this we use the Hessian normal form of a line
\begin{equation}
\frac{a x + b y - c}{\sqrt{a^2 + b^2}} = 0\;. \label{b10a}
\end{equation}
From the direction vector from $m_\alpha$ to $m_\epsilon$
\begin{equation*}
m_\epsilon - m_\alpha = (-661.492,-198.874)
\end{equation*}
we determine the normal vector of the line through $m_\alpha$ and $m_\epsilon$
\begin{equation}
\vec{n} = (-198.874,661.492)\quad , \quad\sqrt{198.874^2+661.492^2} = 690.741 \label{b11}
  \end{equation}
as well as the quantity $c$ in (\ref{b10a})
\begin{equation}
  c = \vec{n}\cdot m_\alpha = 706072.29 \label{b12}
\end{equation}
With (\ref{b11}), (\ref{b12}) we get for the Hessian normal form (\ref{b10a})
\begin{equation}
-0.287914 x +  0.957656 y - 1022.20 = 0. \label{b13}
\end{equation}
If we insert $m_\theta$ into (\ref{b13}), we obtain for the distance from $m_\theta$ to the line through $m_\alpha$, $m_\epsilon$
\begin{equation}
  d = 6.08887 \text{pixel}\quad, \quad d/198.682 = 0.0306463\;.
  \end{equation}
Thus, $\theta$ Aur lies to a good approximation on the line through $\alpha$ Aur and $\epsilon$ Gem.

To determine the angle between the vector $\alpha\text{ Aur}\rightarrow\epsilon\text{ Gem}$ and the vector $\beta\text{ Aur}\rightarrow\text{center Pleiades}$, we first calculate the unit vectors
\begin{eqnarray}
  \vec{a} &=& \frac{m_\epsilon - m_\alpha}{|m_\epsilon - m_\alpha|} \;=\; (-0.957656,-0.287914) \label{b10}\\
  \vec{b} &=& \frac{m_\text{plei} - m_\beta}{|m_\text{plei} - m_\beta|} \;=\; (0.435872,-0.900008)
\end{eqnarray}
Then we determine the angle $\gamma$ between $\vec{a}$ and $\vec{b}$ from the scalar product
\begin{eqnarray}
  \vec{a}\cdot\vec{b} &=& -0.158291 \\
  \gamma &=& \arccos(\vec{a}\cdot\vec{b}) \;=\; 99.10772^\circ\;.
\end{eqnarray}

If we compare the relative lengths in the Tables \ref{tab11} and \ref{tab22}, we see
\begin{Prop}
  If we round the relative lengths in Tables {\rm\ref{tab11}} and {\rm\ref{tab22}} to $2$ digits after the decimal point, the corresponding numbers agree up to $\pm 0.02$.
\end{Prop}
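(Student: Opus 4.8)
The plan is simply to perform the indicated rounding and compare the four pairs of relative lengths; the statement is a direct numerical verification. First I would read off from Table~\ref{tab11} the relative lengths of the lines $\beta$~Aur\,--\,$\alpha$~Aur, $\beta$~Aur\,--\,$\theta$~Aur, $\epsilon$~Gem\,--\,$\theta$~Aur, $\epsilon$~Gem\,--\,$\alpha$~Aur, namely $1$, $0.89393939$, $2.0$, $3.4898989$, and round each to two digits after the decimal point to obtain
\[
1.00,\qquad 0.89,\qquad 2.00,\qquad 3.49 .
\]
Next I would read off from Table~\ref{tab22} the relative lengths of the same four lines, namely $1$, $0.87860166$, $1.97709658$, $3.47660818$, and round them to obtain
\[
1.00,\qquad 0.88,\qquad 1.98,\qquad 3.48 .
\]

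Then I would compute the four absolute differences of the rounded values in the order above:
\[
|1.00-1.00| = 0.00,\quad |0.89-0.88| = 0.01,\quad |2.00-1.98| = 0.02,\quad |3.49-3.48| = 0.01 .
\]
Each of these is at most $0.02$, which is precisely the assertion of the proposition.

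The only subtlety --- and it is the point that makes the statement worth recording --- is that the two tables must be compared \emph{after} rounding and \emph{row by row} in the matching order; both tables are already normalized so that $\beta$~Aur\,--\,$\alpha$~Aur has relative length $1$, so no rescaling is needed. Note that without rounding the pair $2.0$ versus $1.97709658$ differs by about $0.023 > 0.02$, so the rounding step is genuinely essential (the other three pairs do agree to within $0.02$ already before rounding). Hence there is no real obstacle: the result follows by inspection of Tables~\ref{tab11} and~\ref{tab22}, and its significance is that it shows the quick GIMP measurements of Section~\ref{sec524a} and the circle-fit measurements of Appendix~\ref{appB1.1} to be mutually consistent, reinforcing Conjecture~\ref{conj5.8} and Corollary~\ref{cor5.10}.
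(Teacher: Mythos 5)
Your verification is correct and is exactly what the paper does: the proposition is stated as an immediate consequence of inspecting Tables \ref{tab11} and \ref{tab22}, with no further argument given. Your additional remark that the pair $2.0$ versus $1.97709658$ only satisfies the bound after rounding is a sound and worthwhile observation, but it does not change the fact that both you and the paper settle the claim by the same direct row-by-row numerical comparison.
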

This is a good confirmation of the relative lengths in Table \ref{tab11} by the values in Table \ref{tab22} determined using a more precise method.

The values of the angle $\sphericalangle(\alpha \text{ Aur} \rightarrow\epsilon\text{ Gem} \,,\, \beta\text{ Aur}\rightarrow\text{center Pleiades})$  of both tables are also close to each other. However, since the best fitting circle of the center star of the Pleiades rosette is not as good as the fitting circles of the four stars of the Auriga line, we determine this angle again using a second method.
\mycomment{
\begin{figure}[!ht]
  \begin{center}
    \includegraphics[width=\textwidth]{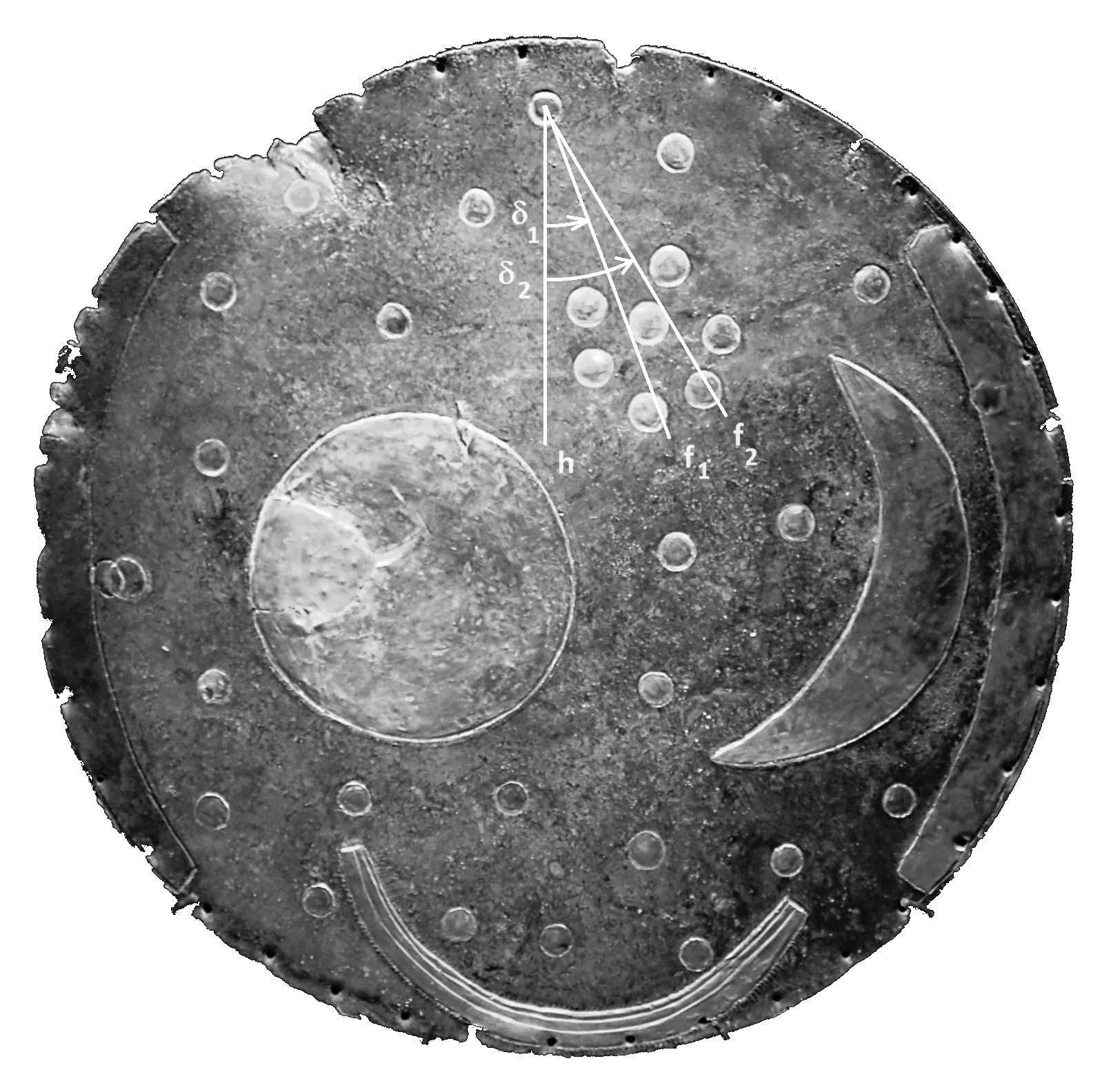}
  %\vspace*{-1.5cm}
\end{center}
\caption{For the second determination of the direction from $\beta$ Aur to the center of the Pleiades Rosette.} 
\label{betaplej}
\end{figure}
}

In the photo of the Sky Disk in GIMP \cite{gimp} we draw the two tangents $f_1$, $f_2$ to the center star of the Pleiades rosette from the center of $\beta$ Aur (see Figure \ref{betaplej}). We construct the center of the center star in GIMP as follows: We determined the coordinates (in pixel) $m_{\beta} = (790.6052 , 1425.4896)$ for this center (see Table \ref{tab21a}). From this we calculate the coordinates in GIMP
\begin{eqnarray}
  x &=&790.6052 \;\sim\; 791 \text{pixel} \\
  y &=&1584 - 1425.4896 = 158,5104 \;\sim\; 159 \text{pixel}\,.
\end{eqnarray}
The transformation of the y-coordinate is necessary because in GIMP the coordinate origin is not in the lower left corner of the image, as in Mathematica, but in the upper left corner of the image. Furthermore, the image used has 1584 pixel lines.

Now we draw a horizontal guide line to the y-value 159 and a vertical guide line to the x-value 791. Their intersection is the center point we are looking for. We mark it with the GIMP drawing pen (see Figure \ref{betabigg}).
\mycomment{
\begin{figure}[!ht]
  \begin{center}
    \includegraphics[width=8cm]{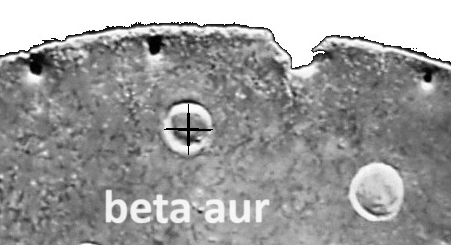}
  %\vspace*{-1.5cm}
\end{center}
\caption{The center of the star $\beta$ Aur.} 
\label{betabigg}
\end{figure}
}

We now draw the tangents $f_1$, $f_2$ and use GIMP to determine the angles $\delta_1$, $\delta_2$ of these tangents to the image vertical $h$. We calculate from $\delta_1$, $\delta_2$
\begin{eqnarray}
\delta &=& \frac{\delta_1 + \delta_2}2\,.
\end{eqnarray}
$\delta$ is also a good value for the angle of direction from $\beta$ Aur to the center star of the Pleiades.

We want to determine a confidence interval for the arithmetic mean $\bar{\delta}$ of $\delta$-values. As with the creation of Table \ref{tab33}, we carry out the measurement and calculation of $\delta_1$, $\delta_2$ and $\delta$ 31 times. Table \ref{tab23} shows the results $d_{1,i}$, $d_{2,i}$ and $d_i$.

\begin{table}[t] %[h] [b]
\begin{center}
\begin{tabular}{|l|}
    \hline
    \multicolumn{1}{|c|}{Measurements $d_{1,i}$ for $\delta_1$}\\
\hline
21.180, 20.980, 20.630, 20.730, 20.960, 20.790, 21.270, 21.000, 21.050, 20.870,\\
20.910, 21.180, 20.930, 21.140, 20.830, 21.110, 20.900, 20.700, 20.850, 21.080,\\
21.030, 21.100, 20.990, 20.900, 20.910, 20.980, 21.000, 21.190, 20.850, 20.950,\\ 20.780 \\
   \hline
  \end{tabular}
\\*[3mm]
\begin{tabular}{|l|}
    \hline
    \multicolumn{1}{|c|}{Measurements $d_{2,i}$ for $\delta_2$}\\
\hline
30.520, 30.860, 30.480, 30.380, 30.370, 30.720, 30.700, 30.620, 31.010, 30.600,\\
30.750, 30.610, 30.490, 30.440, 30.580, 30.790, 30.460, 30.220, 30.510, 30.870,\\
30.690, 30.780, 30.750, 30.750, 30.790, 30.870, 30.290, 30.600, 30.660, 30.580,\\ 30.600\\
   \hline
  \end{tabular}
\\*[3mm]
\begin{tabular}{|l|}
    \hline
    \multicolumn{1}{|c|}{Values $d_i$ for $\delta = (\delta_1 + \delta_2)/2$}\\
    \hline
    25.850, 25.920, 25.555, 25.555, 25.665, 25.755, 25.985, 25.810, 26.030, 25.735,\\
    25.830, 25.895, 25.710, 25.790, 25.705, 25.950, 25.680, 25.460, 25.680, 25.975,\\
    25.860, 25.940, 25.870, 25.825, 25.850, 25.925, 25.645, 25.895, 25.755, 25.765,\\
    25.690\\
   \hline
   \rule{0pt}{12pt}$\bar{d} = 25.7921$\\
   $s = 0.137433$\\ 
   $s_{\bar{d}} = 0.0246838$\\
\hline
  \end{tabular}
\vspace{3mm}
\caption{Data for determining the confidence interval (\ref{b18}) from {\tt Nebra\_Scheibe\_white.jpg}. All quantities in degrees.}
%\caption{Alles in Grad}
\label{tab23}
\end{center}
\end{table}
From this we calculate again $\bar{d}$, $s$ and $s_{\bar{d}}$ using (\ref{b1}).

We again choose a confidence level $1-p = 0.95$. This leads to the same constant $c = 2.04227$ as in (\ref{b4}), since $c$ only depends on $n$ and $p$.

If we now calculate $k = c\cdot s_{\bar{d}}$ according to (\ref{b5}), we get
\begin{equation}
k \;=\; 0.050411^\circ \,. \label{b22}
\end{equation}
From this follows for the confidence interval $\bar{d}-k \le \bar{\delta}\le\bar{d}+k$
\begin{equation}
  25.7417^\circ \le\bar{\delta}\le 25.8425^\circ\,. \label{b18}
\end{equation}

Now we want to calculate the limits within which
\begin{equation}
  \gamma := \sphericalangle(\alpha\text{ Aur}\rightarrow\epsilon\text{ Gem}\,,\,\beta\text{ Aur}\rightarrow\text{ center Pleiades}) \label{b18a}
  \end{equation}
moves when $\bar{\delta}$ moves in the confidence interval (\ref{b18}).

First, we determine the angle between the unit vector $\vec{a}$ according to (\ref{b10}) and the downward pointing image vertical.
Obviously, this angle is calculated by
\begin{equation}
  \arccos(\vec{a}\cdot (0, -1)) = 73.2669^\circ \label{b119}
\end{equation}
From Figure \ref{betaplej} we immediately see that we only have to add the angle (\ref{b119}) to the limits of the interval (\ref{b18}) to obtain the desired interval (\ref{b120}) for $\gamma$. This gives

\begin{Prop} \label{propb3}
  If the angle $\bar{\delta}$ moves in the confidence interval {\rm (\ref{b18})}, then the angle $\gamma$ moves in the interval
\begin{equation}
99.0086^\circ\le\gamma\le 99.1094^\circ \;,\label{b120}
\end{equation}
whereby the movements of both angles are in the same direction. Furthermore, the midpoint $\bar{d}$ of the confidence intervall corresponds to the midpoint
\begin{equation}
\gamma = 99.058999^\circ\;. \label{b121}
\end{equation}
of {\rm (\ref{b120})}.
\end{Prop}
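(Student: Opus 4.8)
The plan is to show that $\gamma$ is obtained from $\bar{\delta}$ by a single additive shift, and then simply to substitute the endpoints of the confidence interval (\ref{b18}).

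First I would pin down the geometry underlying Figure \ref{betaplej}. The quantity $\bar{\delta}$ is the averaged inclination of the tangents $f_1$, $f_2$ drawn from the center of $\beta$ Aur to the center star of the Pleiades rosette, measured against the downward image vertical $h$; to the accuracy in play here it is simply the angle that the direction $\beta\text{ Aur}\to\text{center Pleiades}$ makes with $h$. From the coordinates in Table \ref{tab21a} the vector $m_\text{plei}-m_\beta$ has positive first component and negative second component, so this direction points below and to the right of $h$. On the other hand, the unit vector $\vec{a}$ of (\ref{b10}) has negative first and second components, so the direction $\alpha\text{ Aur}\to\epsilon\text{ Gem}$ points below and to the left of $h$, and (\ref{b119}) already records that it makes the angle $73.2669^\circ$ with $h$. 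Since the two directions lie on opposite sides of the downward vertical, the angle between them is the sum of the two individual angles:
\begin{equation}
  \gamma = 73.2669^\circ + \bar{\delta}\,. \label{gammadelta}
\end{equation}

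Given (\ref{gammadelta}) the rest is routine. The right-hand side is strictly increasing in $\bar{\delta}$, so as $\bar{\delta}$ sweeps the interval (\ref{b18}) the angle $\gamma$ sweeps the interval obtained by adding $73.2669^\circ$ to each endpoint, and the two quantities move in the same sense. Numerically $73.2669^\circ + 25.7417^\circ = 99.0086^\circ$ and $73.2669^\circ + 25.8425^\circ = 99.1094^\circ$, which is exactly the interval (\ref{b120}); adding $73.2669^\circ$ to the midpoint $\bar{d} = 25.7921^\circ$ of (\ref{b18}) yields $\gamma = 99.058999^\circ$, the midpoint (\ref{b121}) of (\ref{b120}). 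This is the assertion of the proposition.

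The only delicate point is the ``opposite sides'' claim behind (\ref{gammadelta}): one must confirm, from the sign pattern of the two direction vectors --- equivalently, from the configuration drawn in Figure \ref{betaplej} --- that the two angles add rather than partially cancel; had $\beta\text{ Aur}\to\text{center Pleiades}$ fallen on the same side of $h$ as $\vec{a}$, one would instead get $\gamma = |73.2669^\circ - \bar{\delta}|$. Once that sign issue is settled the monotonicity and midpoint statements follow immediately from the linearity of (\ref{gammadelta}), with no further computation required.
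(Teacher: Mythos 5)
Your proof is correct and follows essentially the same route as the paper: compute the angle $73.2669^\circ$ between $\vec{a}$ and the downward image vertical via (\ref{b119}), observe that $\gamma=73.2669^\circ+\bar{\delta}$ because the two directions lie on opposite sides of the vertical, and add this constant to the endpoints and midpoint of (\ref{b18}). Your explicit sign-pattern check from the coordinates in Table \ref{tab21a} merely makes rigorous what the paper reads off directly from Figure \ref{betaplej}, so nothing further is needed.
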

In the interval (\ref{b120}) $\gamma$ will lie with probability of 95\%. Since (\ref{b120}) has a very small length and the value of $\gamma$ from Table \ref{tab22} is also in (\ref{b120}), we can say that both methods of determining $ \gamma$ led to approximately the same results.

\subsubsection{Measurements in the image {\tt ``Die Himmelsscheibe von Nebra, , \\
979x936px, \textcopyright\;LDA Sachsen-Anhalt (Foto Juraj Lipt\'ak).jpg''
}} \label{appB1.2}

In this section we carry out the same investigations on the image mentioned in the title as on the image Nebra\_Scheibe\_white.jpg in Section \ref{appB1.1}. However, as mentioned earlier, we cannot show the image in our paper for copyright reasons. To shorten the long name of the picture, we will also call it {\it Lipt\'ak's photo}.\\*[2mm]
\paragraph{a) Check for possible image distortions}

\begin{table}[t] %[h] [b]
\begin{center}
\begin{tabular}{|l|}
    \hline
    \multicolumn{1}{|c|}{Measurements $a_{1,i}$ for $\alpha_1$}\\
\hline
37.57, 37.84, 37.82, 38.05, 37.84, 37.73, 37.76, 38.02, 37.85, 37.92, 37.80, \\
38.02, 37.96, 37.97, 38.04, 37.73, 37.79, 37.80, 37.72, 37.90, 37.90, 37.62, \\
37.85, 37.64, 37.81, 37.93, 37.63, 37.89, 37.73, 37.54, 37.73\\
   \hline
  \end{tabular}
\\*[3mm]
\begin{tabular}{|l|}
    \hline
    \multicolumn{1}{|c|}{Measurements $a_{2,i}$ for $\alpha_2$}\\
\hline
43.96, 43.79, 43.96, 43.87, 44.24, 43.88, 44.08, 43.89, 43.99, 43.76, 44.00, \\
43.88, 43.93, 43.77, 44.06, 44.06, 43.95, 43.78, 44.03, 43.85, 43.97, 43.98, \\ 44.21, 43.83, 43.90, 43.85, 43.98, 43.89, 43.66, 44.07, 43.85\\
   \hline
  \end{tabular}
\\*[3mm]
\begin{tabular}{|l|}
    \hline
    \multicolumn{1}{|c|}{Values $a_i$ for $\alpha = \alpha_1 + \alpha_2$}\\
\hline
81.53, 81.63, 81.78, 81.92, 82.08, 81.61, 81.84, 81.91, 81.84, 81.68, 81.80,\\
81.90, 81.89, 81.74, 82.10, 81.79, 81.74, 81.58, 81.75, 81.75, 81.87, 81.60,\\
82.06, 81.47, 81.71, 81.78, 81.61,81. 78, 81.39, 81.61, 81.58 \\
   \hline
   $\bar{a} = 81.7523$\\
   $s = 0.171069$\\
   $s_{\bar{a}} = 0.0307250$\\
\hline
  \end{tabular}
\vspace{3mm}
\caption{Data for determining the confidence interval (\ref{b30}) from {\tt Lipt\'ak's photo}. All quantities in degrees.}
%\caption{Alles in Grad}
\label{tab23a}
\end{center}
\end{table}
First, we determine a confidence interval for the angle of the horizon arcs (Figure \ref{horizontwinkel}) to check that the image does not contain any major distortions. To do this, we measure each of the angles $\alpha_1$ and $\alpha_2$ (Figure \ref{horizontwinkel2}) $n = 31$ times and calculate the values $a_i$ for the angle $\alpha = \alpha_1 + \alpha_2$ from these measured values . We then use (\ref{b1}) to determine the arithmetic mean $\bar{a}$, the standard deviation $s$ and the standard deviation $s_{\bar{a}}$ to the $a_i$. Table \ref{tab23a} shows the results.

We again choose a confidence level $1-p = 0.95$. This means that we also get the same value (\ref{b28}) for the constant $c$ as in (\ref{b4}).
\begin{equation}
  c = 2.04227 \label{b28}
\end{equation}
We now calculate from $c$ and $s_{\bar{a}}$
\begin{equation}
k = c\cdot s_{\bar{a}} = 0.0627488^\circ\;.
\end{equation}
This results in the limits for the confidence interval $\bar{a}-k\le\bar{\alpha}\le\bar{a}+k$ for $\bar{\alpha}$
\begin{equation}
  81.6895^\circ \le \bar{\alpha} \le 81.8150^\circ\;. \label{b30}
\end{equation}

\mycomment{
\begin{figure}[!ht]
  \begin{center}
    \includegraphics[width=\textwidth]{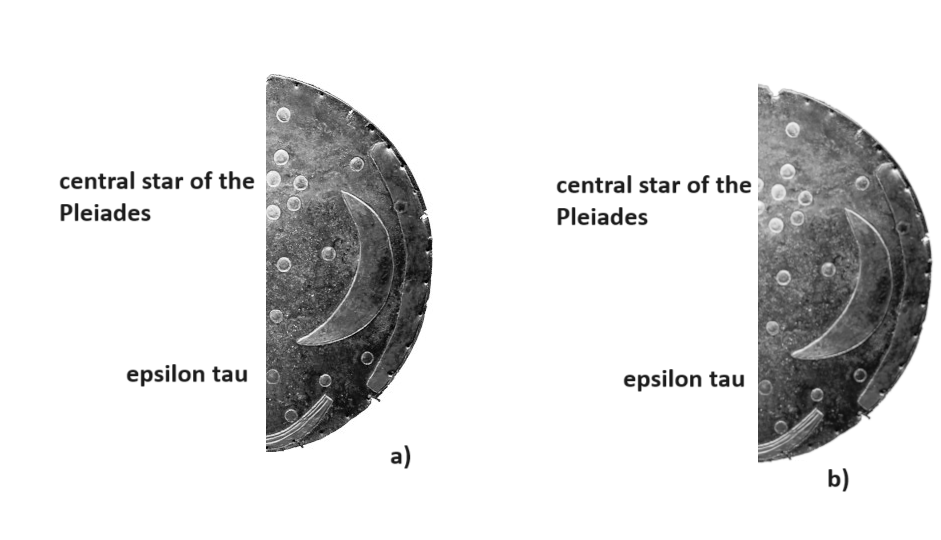}
  %\vspace*{-1.5cm}
\end{center}
\caption{The slope of the line $\epsilon$ Tau $\rightarrow$ central star of the Pleiades a) in Nebra\_Scheibe\_white.jpg and b) in the photo by Lipt\'ak, represented here by the image {\tt Nebra\_Scheibe\_white} rotated by $+3.47^\circ$.} 
\label{verdrehung}
\end{figure}
}Since $\bar{\alpha} = 81.7523^\circ$ is also close to the interval $82^\circ \le\alpha\le 83^\circ$, the photo considered in the current section cannot contain any major distortions.

Nevertheless, we want to investigate where the deviation between the $\bar{a}$-values in Table \ref{tab33} and Table \ref{tab23a} could come from.  For this purpose we want to measure the vertical and horizontal diameters of the two disc images using GIMP \cite{gimp}.

Before we can do this, however, we need to rotate Lipt\'ak's image slightly. In Figure \ref{verdrehung} a) one can see that in Nebra\_Scheibe\_white.jpg the star $\epsilon$ Tau and the central star of the Pleiades touch a vertical line, while in Lipt\'ak's picture the central star of the Pleiades deviates from a perpendicular tangent applied to $\epsilon$ Tau (see Figure \ref{verdrehung} b).
                                                                                                                                                                                                                               
To rotate Lipt\'ak's photo using GIMP, we first place a vertical guide line from leftfa to $\epsilon$ Tau. The guide remains fixed when the image is rotated. We click on the point where the star disk touches the auxiliary line, read the coordinates of the point of contact and enter them as the pivot point in the GIMP rotation tool. Then we rotate the disk photo with the mouse until the central stern of the Pleiades also touches the auxiliary line. This results in a rotation angle of $-3.47^\circ$.

After the orientation of the two disc images is the same, we cut off the edges of both images so that the new horizontal and vertical edges touch the image of the Sky Disc. Table \ref{tab24} shows the dimensions of the resulting photos.
\begin{table}[h] %[t] [b]
\begin{center}
\begin{tabular}{|l|c|}
\hline
 image & dimensions (pixels) \\
\hline
Nebra\_Scheibe\_white.jpg & $1510\times 1474$
\\
Lipt\'ak's photo & $977\times 934$\\
\hline
\end{tabular}
\vspace{3mm}
\caption{The dimensions of the cut photos.}
\label{tab24}
\end{center}
\end{table}

We calculate from Table \ref{tab24}
\begin{eqnarray}
  1474/1510 &=& 0.976159 \nonumber\\
  934/977   &=& 0.955988 \nonumber
\end{eqnarray}
The photo Nebra\_Scheibe\_white.jpg appears to be stretched a little more in the vertical direction than Lipt\'ak's photo. This explains why Nebra\_Scheibe\_white.jpg has a slightly larger horizon arc angle.\\*[0.2cm]
\paragraph{b) Determination of distances and angles in the second Auriga line on the Sky Disc}
The image considered in the present section has $979\times 936$ pixels, while the image in the previous Section \ref{appB1.1} has a size of $1611\times 1584$ pixels. We will therefore obtain completely different values for the coordinates of image points than in Section \ref{appB1.1}.

\begin{table}[p] %[t] %[h] [b]
\begin{center}
  \begin{tabular}{|r|r|r|}
    \hline
    \multicolumn{3}{|c|}{Circle around $\alpha$ Aur}\\
\hline
  x & y & d \\
 \hline
 616.6182 & 823.7906 & 0.0\\
 617.9266 & 842.7635 & 0.0\\
 638.8600 & 821.1700 & 0.1140\\
 641.4800 & 841.4600 & 0.2726\\
 612.0400 & 833.6000 & 1.8763\\
 625.7800 & 816.5900 & 1.0861\\
 644.1000 & 832.9500 & 0.0\\
 630.3571 & 850.6143 & 3.0822\\
   \hline
   \multicolumn{3}{|l|}{$m = (628.9931, 832.4688)$}\\
   \multicolumn{3}{|l|}{$r = 15.1145$}\\
\hline
  \end{tabular}
  
  \vspace{3mm}
    \begin{tabular}{|r|r|r|}
    \hline
    \multicolumn{3}{|c|}{Circle around $\beta$ Aur}\\
\hline
  x & y & d \\
 \hline
 502.0500 & 872.1800 & 0.0 \\
 508.4700 & 888.2300 & 0.0047\\
 524.5200 & 883.7300 &,0.0 \\
 523.2400 & 868.3300 & 0.1315\\
 499.4800 & 880.1000 & 2.1028\\
 513.6100 & 864.0500 & 0.0 \\
 529.0200 & 876.8900 & 2.4122\\
 516.8174 & 890.3684 & 1.2208\\
   \hline
   \multicolumn{3}{|l|}{$m = (513.9038, 876.7512)$}\\
   \multicolumn{3}{|l|}{$r = 12.7046$}\\
\hline
\end{tabular}
\hspace*{5mm}
  \begin{tabular}{|r|r|r|}
    \hline
    \multicolumn{3}{|c|}{Circle around $\theta$ Aur}\\
\hline
  x & y & d \\
 \hline
 437.9800 & 787.7800 & 1.2719\\
 437.9788 & 806.0241 & 0.0\\
 458.1800 & 785.1700 & 0.2771\\
 458.8349 & 802.7653 & 2.3719\\
 448.4068 & 781.2574 & 0.0\\
 432.1130 & 795.5960 & 1.8384\\
 449.0586 & 810.5864 & 0.0\\
 463.3972 & 793.6408 & 0.2873\\
   \hline
   \multicolumn{3}{|l|}{$m = (448.6167, 795.9245)$}\\
   \multicolumn{3}{|l|}{$r = 14.6686$}\\
\hline
\end{tabular}

  \vspace{3mm}
    \begin{tabular}{|r|r|r|}
    \hline
    \multicolumn{3}{|c|}{Circle around $\epsilon$ Gem}\\
\hline
  x & y & d \\
 \hline
 199.6600 & 754.4800 & 0.0\\
 194.5288 & 729.4381 & 0.1868\\
 215.0731 & 726.8701 & 1.9691\\
 222.7772 & 744.2043 & 1.3301\\
 191.9607 & 743.5623 & 0.3997\\
 203.5169 & 723.6600 & 0.0\\
 222.7772 & 732.0061 & 0.0\\
 211.8600 & 757.6900 & 1.4580\\
   \hline
   \multicolumn{3}{|l|}{$m = (207.9336, 739.8641)$}\\
   \multicolumn{3}{|l|}{$r = 16.7952$}\\
\hline
\end{tabular}
\hspace*{5mm}
  \begin{tabular}{|r|r|r|}
    \hline
    \multicolumn{3}{|c|}{Center star of the Pleiades Rosette}\\
\hline
  x & y & d \\
 \hline
 583.5900 & 701.4000 & 0.0\\
 582.9400 & 677.0100 & 0.0\\
 605.4147 & 673.1553 & 0.5807\\
 611.8348 & 696.9096 & 0.0\\
 578.4500 & 690.4900 & 0.0074\\
 597.0700 & 671.2300 & 0.0322\\
 612.4768 & 682.7854 & 0.2032\\
 597.0700 & 707.1799 & 0.6495\\
   \hline
   \multicolumn{3}{|l|}{$m = (596.0763, 688.8636)$}\\
   \multicolumn{3}{|l|}{$r = 17.6938$}\\
\hline
\end{tabular}
  \vspace{3mm}
  %\caption{zeile1\\
    %zeile2}
\caption{Data of the stars of the Auriga line and the center of the Pleiades rosette: $x\,,\,y$ coordinates of the 8 edge points of the star plate, $d$ distance of the relevant edge point from the fitting circle of the star, $m\,,\, r$ center and radius of the fitting circle. Everything measured in pixels.}
\label{tab25}
\end{center}
\end{table}
We now carry out the same calculations for Lipt\'ak's photo as in Section \ref{appB1.1} for the image Nebra\_Scheibe\_white.jpg. First we determine the centers of the star plates we are interested in using fitting circles. Table \ref{tab25} shows the results. It is interesting to note that in Lipt\'ak's photo the fitting circle of the center star of the Pleiades fits much better to the 8 control points than in Nebra\_Scheibe\_white.jpg. This can be clearly seen in the residuals $d$ of this circle in the Tables \ref{tab25} and \ref{tab21a}.

We then calculate the distances of these centers, the distance of the star $\theta$ Aur from the line $\alpha$ Aur -- $\epsilon$ Gem and the angle between the lines $\alpha$ Aur $\rightarrow \epsilon$ Gem and $\beta$ Aur $\rightarrow$ center Pleiades using the same elementary geometric means as in Section \ref{appB1.1}. The results are listed in Table \ref{tab26}. Furthermore, we calculate all relative distances by dividing all absolute distances determined so far by the distance 123.315 pixels from $\alpha$ Aur and $\beta$ Aur.

\begin{table}[t] %[h] [b]
\begin{center}
\begin{tabular}{|l|l|l|}
\hline
 line & length (pixels) & relative length\\
\hline
$\beta$ Aur -- $\alpha$ Aur & 123.315 & 1.0 \\
$\beta$ Aur -- $\theta$ Aur & 103.901 & 0.84256623\\
$\epsilon$ Gem -- $\theta$ Aur & 247.126 & 2.00402568\\
$\epsilon$ Gem -- $\alpha$ Aur & 431.123 & 3.49611899\\
\hline
dist($\theta$ Aur , $\alpha$ Aur -- $\epsilon$ Gem) & 3.05339 & 0.0247609\\
\hline
\multicolumn{2}{|l|}{$\sphericalangle(\alpha \text{ Aur} \rightarrow\epsilon\text{ Gem} \,,\, \beta\text{ Aur}\rightarrow\text{center Pleiades})$} & $101.2184^\circ$\\
\hline
\end{tabular}
\vspace{3mm}
\caption{Distances and angles on the Sky Disc}
\label{tab26}
\end{center}
\end{table}

Finally, we calculate a second value for the angle $\gamma = \sphericalangle(\alpha \text{ Aur} \rightarrow\epsilon\text{ Gem} \,,\, \beta\text{ Aur}\rightarrow\text{center Pleiades})$ over a confidence interval. To do this, we measure each of the angles $\delta_1$ and $\delta_2$ (see Figure \ref{betaplej}) 31 times and calculate 31 values for $\delta = (\delta_1 + \delta_2)/2$ from these measured values (see Table \ref{tab27}). For the 31 $\delta$-values $d_i$, we use (\ref{b1}) to determine the arithmetic mean $\bar{d}$, the variance $s$ and the variance $s_{\bar{d}}$ of the mean. These values are also given in Table \ref{tab27}.

\begin{table}[t] %[h] [b]
\begin{center}
\begin{tabular}{|l|}
    \hline
    \multicolumn{1}{|c|}{Measurements $d_{1,i}$ for $\delta_1$}\\
\hline
20.910, 21.370, 21.100, 20.860, 21.360, 21.020, 21.270, 21.460, 21.110, 21.080, \\
20.960, 21.250, 21.010, 21.280, 21.320, 21.380, 21.320, 21.080, 21.290, 21.250, \\
20.860, 21.130, 21.210, 20.990, 21.130, 21.110, 21.070, 20.990, 21.260, 21.130, \\
21.340\\
   \hline
  \end{tabular}
\\*[3mm]
\begin{tabular}{|l|}
    \hline
    \multicolumn{1}{|c|}{Measurements $d_{2,i}$ for $\delta_2$}\\
\hline
30.610, 30.940, 30.580, 31.030, 30.760, 30.680, 30.730, 30.830, 30.520, 30.670,\\
30.250, 30.730, 30.990, 30.710, 30.920, 30.670, 31.100, 30.680, 30.510, 30.660,\\
30.560, 30.870, 31.060, 30.620, 30.630, 30.710, 30.500, 30.890, 30.680, 30.730,\\
30.730\\
   \hline
  \end{tabular}
\\*[3mm]
\begin{tabular}{|l|}
    \hline
    \multicolumn{1}{|c|}{Values $d_i$ for $\delta = (\delta_1 + \delta_2)/2$}\\
    \hline
25.760, 26.155, 25.840, 25.945, 26.060, 25.850, 26.000, 26.145, 25.815, 25.875,\\
25.605, 25.990, 26.000, 25.995, 26.120, 26.025, 26.210, 25.880, 25.900, 25.955,\\
25.710, 26.000, 26.135, 25.805, 25.880, 25.910, 25.785, 25.940, 25.970, 25.930,\\
26.035\\
   \hline
   \rule{0pt}{12pt}$\bar{d} = 25.9427$\\
   $s = 0.137476$\\ 
   $s_{\bar{d}} = 0.0246915$\\
\hline
  \end{tabular}
\vspace{3mm}
\caption{Data for determining the confidence interval (\ref{b37}) from {\tt Lipt\'ak's photo}. All quantities in degrees.}
%\caption{Alles in Grad}
\label{tab27}
\end{center}
\end{table}
We again choose the confidence level $1-p = 0.95$. This leads us in this calculation instead of (\ref{b22}) to
\begin{equation}
  k = 0.0504267
\end{equation}
and to the confidence intervall
\begin{equation}
  25.8923^\circ\le\bar{\delta}\le 25.9932^\circ\;. \label{b37}
\end{equation}
For the vector (\ref{b10}) we get
\begin{equation}
\vec{a} = (-0.976658, -0.214799)\;. \label{b38}
\end{equation}
We determine the angle between this unit vector $\vec{a}$ and the downward pointing image vertical.
\begin{equation}
  \arccos(\vec{a}\cdot (0, -1)) = 77.5963^\circ \;.
\end{equation}
We add this angle again to the limits of (\ref{b37}).
Thus, Proposition \ref{propb3} takes the form
\begin{Prop} \label{propb4}
  If the angle $\bar{\delta}$ moves in the confidence interval {\rm (\ref{b37})}, then the angle $\gamma$ according to {\rm (\ref{b18a})} moves in the interval
\begin{equation}
103.489^\circ\le\gamma\le 103.589^\circ \label{b40}
\end{equation}
whereby the movements of both angles are in the same direction. Furthermore, the midpoint $\bar{d}$ of the confidence intervall corresponds to the midpoint
\begin{equation}
\gamma = 103.539006^\circ\;. \label{b41}
\end{equation}
of {\rm (\ref{b40})}.
\end{Prop}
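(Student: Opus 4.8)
The plan is to repeat the argument behind Proposition~\ref{propb3} almost verbatim, feeding in the numbers coming from Lipt\'ak's photo rather than from {\tt Nebra\_Scheibe\_white.jpg}. The key point is that $\gamma$ in (\ref{b18a}) is an affine function of $\bar\delta$: the tangent construction of Figure~\ref{betaplej} defines $\bar\delta$ as the inclination of the direction $\beta\text{ Aur}\to\text{center Pleiades}$ to the downward image vertical $(0,-1)$, while the direction $\alpha\text{ Aur}\to\epsilon\text{ Gem}$ has a \emph{fixed} inclination to that same vertical, determined solely by the fitted centers in Table~\ref{tab25}. Hence $\gamma=\bar\delta+c_0$ for a constant $c_0$, and the whole statement is obtained simply by translating the confidence interval (\ref{b37}) and its midpoint by $c_0$.

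Concretely, I would first take the unit vector $\vec a$ pointing from $m_\alpha$ to $m_\epsilon$, which by (\ref{b38}) equals $(-0.976658,-0.214799)$, and compute its angle to $(0,-1)$ as $c_0=\arccos\bigl(\vec a\cdot(0,-1)\bigr)=77.5963^\circ$, exactly as done in the lines just above the statement. Reading off Figure~\ref{betaplej} I would then verify the orientation: both tangents $f_1,f_2$ drawn from the center of $\beta$ Aur run to the same side of the line $\epsilon$ Gem~--~$\alpha$ Aur, so the angle between the two direction vectors is the \emph{sum} $\bar\delta+c_0$ and not $\bar\delta-c_0$ or $c_0-\bar\delta$. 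Granting this, adding $77.5963^\circ$ to the endpoints $25.8923^\circ$ and $25.9932^\circ$ of (\ref{b37}) yields the interval (\ref{b40}), and adding it to the midpoint $\bar d=25.9427^\circ$ yields $103.539^\circ$, matching (\ref{b41}) up to the rounding already present in the tabulated data. Finally, since $\bar\delta\mapsto\bar\delta+c_0$ is strictly increasing, $\gamma$ and $\bar\delta$ move monotonically together, which is the ``same direction'' assertion.

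The only genuine content, and therefore the one place requiring care, is the orientation check: one must confirm from the geometry of Figure~\ref{betaplej} that the offset $c_0$ is added to, not subtracted from, $\bar\delta$. Everything else is the same elementary two-dimensional vector arithmetic already used to produce Table~\ref{tab26} and in the proof of Proposition~\ref{propb3}; in particular no new estimate is needed, since the confidence interval (\ref{b37}) is taken as input from the preceding measurements.
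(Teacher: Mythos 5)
Your proposal is correct and follows essentially the same route as the paper: compute the fixed inclination $c_0=\arccos\bigl(\vec a\cdot(0,-1)\bigr)=77.5963^\circ$ of the line $\alpha$ Aur -- $\epsilon$ Gem from the fitted centers in Table \ref{tab25}, observe from Figure \ref{betaplej} that $\gamma=\bar\delta+c_0$, and translate the confidence interval (\ref{b37}) and its midpoint by $c_0$ to obtain (\ref{b40}) and (\ref{b41}). The orientation check you single out is exactly the step the paper also relies on (carried over from the proof of Proposition \ref{propb3}), so nothing is missing.
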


\subsection{Distances and angles for the Auriga line on the celestial sphere} \label{appB.2}
In this section, we calculate distances and angles directly on the celestial sphere itself. This way, we avoid distortions that arise when projecting the celestial sphere onto a plane (star map) and that could be contained in Figure \ref{vertical} and in Table \ref{tab12a}.

\mycomment{
\begin{figure}[!ht]
  \begin{center}
    \includegraphics[width=\textwidth]{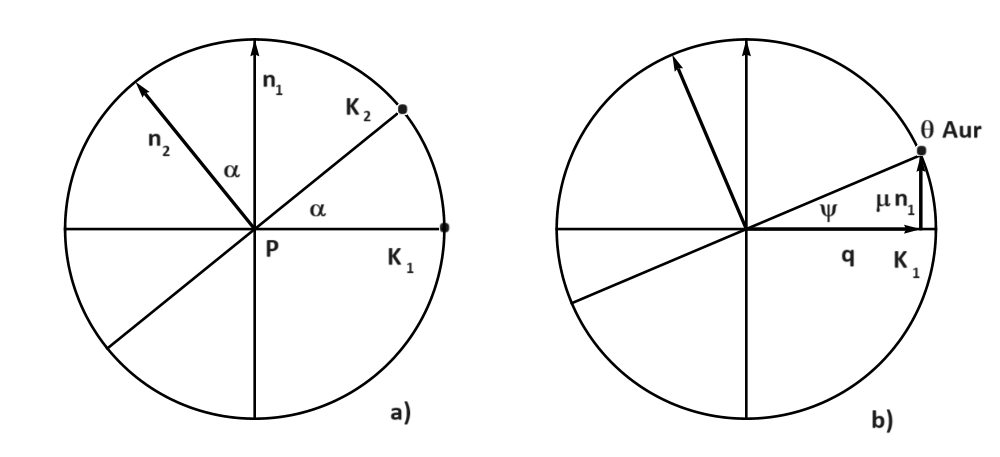}
  %\vspace*{-1.5cm}
\end{center}
\caption{Sketches for the calculations in Section \ref{appB.2}.} 
\label{diagramme}
\end{figure}
}

\begin{table}[t] %[h] [b]
\begin{center}
\begin{tabular}{|l|l|l|}
\hline
 star & azimuth $A$ & altitude $h$\\
\hline
$\alpha$ Aur & $249.9775^\circ$ & $58.3850^\circ$\\
$\beta$ Aur & $240.0972^\circ$ & $64.6370^\circ$ \\
$\theta$ Aur & $226.2044^\circ$ & $60.4202^\circ$ \\
$\epsilon$ Gem & $196.6126^\circ$ & $57.4953^\circ$ \\
$\eta$ Tau & $239.9961^\circ$ & $30.7554^\circ$ \\
\hline
\end{tabular}
\vspace{3mm}
\caption{Azimuthal coordinates of the stars of the Auriga line and the Pleiades center on -1601-2-28 Julian, 18:10:57 for the Mittelberg.}
\label{tab28a}
\end{center}
\end{table}

As a starting point, Stelarium provides us with the azimuthal coordinates of the stars $\alpha$ Aur, $\beta$ Aur, $\theta$ Aur, $\epsilon$ Gem and $\eta$ Tau on -1601-2-28Julian, 18:10:57 (see Table \ref{tab28a}).

From these values, we use the easy-to-prove formula
\begin{equation}
 \vec{e} = (\cos A \sin[90^\circ - h), \sin A \sin(90^\circ -h), \cos(90^\circ - h))
  \end{equation}
to calculate the unit vectors that point from the center of the celestial sphere to the position of the corresponding stars. We get
\begin{eqnarray}
  \vec{e}_\alpha & = & (-0.179483,-0.492525,0.85159) \nonumber\\
  \vec{e}_\beta &=&(-0.213546,-0.371326,0.903612) \nonumber\\
  \vec{e}_\theta &=& (-0.341639,-0.356313,0.869669) \label{b42}\\
 \vec{e}_\epsilon &=& (-0.514939,-0.153633,0.843347) \nonumber\\
 \vec{e}_\eta &=& (-0.429730,-0.744197,0.511374)\nonumber
  \end{eqnarray}
The distance $d(p_1, p_2)$ between two points $p_1$, $p_2$ on a unit sphere surface is defined as the length of the smallest great circle segment connecting these points. The length of this great circle segment is equal to the angle (measured in radians) between the unit vectors $\vec{e}_1$, $\vec{e}_2$ pointing from the center of the sphere to the positions of $p_1$, $p_2$. We can thus calculate $d(p_1, p_2)$ using the formula
\begin{equation}
  d(p_1 , p_2) = \arccos(\vec{e}_1 \cdot \vec{e}_2) \;. \label{b43}
\end{equation}
If we apply formula (\ref{b43}) to the first 4 vectors in (\ref{b42}), we get the first 4 values in column 2 of Table \ref{tab29a}. If we divide these values by  0.136325, we get the relative distances in the third column.
\begin{table}[t] %[h] [b]
\begin{center}
  \begin{tabular}{|l|l|l|}
    
\hline
 line & length (radiants) & relative length\\
\hline
$\beta$ Aur -- $\alpha$ Aur & 0.136325 & 1.0 \\
$\beta$ Aur -- $\theta$ Aur & 0.133461 & 0.97899\\
$\epsilon$ Gem -- $\theta$ Aur & 0.268772 & 1.97156\\
$\epsilon$ Gem -- $\alpha$ Aur & 0.481552 & 3.53239\\
\hline
dist($\theta$ Aur , $\alpha$ Aur -- $\epsilon$ Gem) & 0.00634605 & 0.046551\\
\hline
\multicolumn{2}{|l|}{$\gamma = \sphericalangle(\alpha \text{ Aur} \rightarrow\epsilon\text{ Gem} \,,\, \beta\text{ Aur}\rightarrow\eta\text{ Tau})$} & $102.758^\circ$\\
\hline
\end{tabular}
\vspace{3mm}
\caption{Distances and angles to the Auriga line on the celestial sphere.}
\label{tab29a}
\end{center}
\end{table}

The angle of intersection of two great circles is equal to the angle between the normal vectors of the great circle planes (see Figure \ref{diagramme}a). There, the two great circles $K_1$, $K_2$, whose planes have the normal vectors $n_1$, $n_2$, intersect at the point $P$. It is easy to see that the intersection angle $\alpha$ of the great circles is equal to the angle between the normal vectors.

Now let $K_1$ be the great circle through $\alpha$ Aur and $\epsilon$ Gem and $K_2$ be the great circle through $\beta$ Aur and $\eta$ Tau. Their planes have the normal vectors
\begin{eqnarray}
  \vec{n}_1 &=& \frac{\vec{e}_\alpha\times\vec{e}_\epsilon}{|\vec{e}_\alpha\times\vec{e}_\epsilon|} = (-0.614345,-0.619986,-0.488056) \label{b44}\\
  \vec{n}_2 &=& \frac{\vec{e}_\beta\times\vec{e}_\eta}{|\vec{e}_\beta\times\vec{e}_\eta|} = (0.865644,-0.500659,-0.00116513)
  \end{eqnarray}
The angle between these unit vectors is
\begin{equation}
\gamma = \arccos(\vec{n}_1\cdot\vec{n}_2) = 102.758^\circ \;.
\end{equation}

We still have to calculate the distance from $\theta$ Aur to the great circle $K_1$ through $\alpha$ Aur and $\epsilon$ Gem. For this purpose, we decompose the direction unit vector $\vec{e}_\theta$ into its projection $\vec{q}$ onto the plane of $K_1$ and its part perpendicular to $K_1$ $\mu\vec{n}_1$ (see Figure \ref{diagramme}b).
\begin{equation}
\vec{e}_\theta = \vec{q} + \mu \vec{n}_1 \;. \label{b47}
  \end{equation}
To determine $\mu$, we multiply (\ref{b47}) scalarly with $\vec{n}_1$ and note
$$\vec{q}\cdot\vec{n}_1 = 0\quad,\quad\vec{n}_1\cdot\vec{n}_1 = 1\;.$$
\begin{eqnarray}
  \vec{e}_\theta\cdot\vec{n}_1 &=& \mu \vec{n}_1\cdot\vec{n}_1 \\
  \vec{e}_\theta\cdot\vec{n}_1 &=& \mu  \;.
  \end{eqnarray}
From (\ref{b44}) and (\ref{b42}) we get
\begin{equation}
\mu = 0.00634601\;.
\end{equation}
Obviously $\sin\psi = \frac{\mu}1 = \mu$.
\begin{equation}
  \psi = \arcsin\mu = 0.00634605\;.
\end{equation}
This is the absolute distance from $\theta$ Aur to $K_1$. Dividing this by 0.136325 gives the relative distance, which is given in Table \ref{tab29a}.

\begin{table}[t] %[h] [b]
\begin{center}
  \begin{tabular}{|l|c|c|}
    
\hline
  & Tab.\ref{tab22}--Tab.\ref{tab29a} &  Tab.\ref{tab26}--Tab.\ref{tab29a}\\
\hline
$\beta$ Aur -- $\alpha$ Aur & 0.0& 0.0\\
$\beta$ Aur -- $\theta$ Aur & -0.100388& -0.136423\\
$\epsilon$ Gem -- $\theta$ Aur & 0.0055396& 0.0324687\\
$\epsilon$ Gem -- $\alpha$ Aur & -0.0557807& -0.0362699\\
\hline
$\gamma$ from center coordinates & $-3.65027^\circ$& $-1.53960^\circ$\\
\hline
$\gamma$ from the center of the confidence interval& $-3.69899^\circ$& $0.781013^\circ$\\
\hline
\end{tabular}
\vspace{3mm}
\caption{Differences between the relative distances and angles on the Sky Disc and those on the celestial sphere.}
\label{tab30a}
\end{center}
\end{table}

We now determine the deviations between the relative distances and angles on the Sky Disc and those on the celestial sphere. To do this, we subtract the corresponding relative distances from Table \ref{tab29a} from the relative distances given in Tables \ref{tab22} and \ref{tab26}. Similarly, we subtract the $\gamma$-value from Table \ref{tab29a} from the $\gamma$-values given in Tables \ref{tab22}, \ref{tab26} and in (\ref{b121}), (\ref{b41}). Table \ref{tab30a} shows the results.

The deviations of the relative distances and angles of the second Auriga line on the Sky Disc from the corresponding values on the celestial sphere are so small that the relative distances and angles on the Sky Disc can be considered as measured.

\begin{Rem}
All calculations in Appendix \ref{appB} were performed using {\tt Mathematica} \cite{mma}. {\tt Mathematica} determines which digits of the number are accurate for each number generated in such a calculation and tracks error propagation during the calculation. One can then use the {\tt Accuracy} tool to display how many reliable digits the results have to the right of the decimal point.

All results obtained in Appendix \ref{appB} have at least 13 reliable digits to the right of the decimal point, even if we give fewer digits in our text.
  \end{Rem}

\section{About the horizon arcs on the Sky Disc} \label{appC}
In \cite{schlosser2,schlosser3} W. Schlosser gave an interpretation of the horizon arcs on the Sky Disk. The angle spanned by them is equal to the azimuth range in which the sunset points move within a year when viewed from a certain observation point $P$ on Earth. If one knows the setting point of the sun on the summer solstice at $P$ and one directs one end of the left horizon arc towards this point, then the other end of the left horizon arc points to the setting point of the sun on the winter solstice (see Figure \ref{horizontwinkel}).

In this way, the Nebra people were able to determine the occurrence of the winter solstice, which was a very important date as the shortest day of the year.

Schlosser specified the Brocken that can be seen from Mittelberg as the marker of the sunset on the summer solstice on the Mittelberg (assuming that the Mittelberg was unforested at the time of the Sky Disk). D. Lorenzen \cite{lorenzen1,lorenzen2} raised objections to this because he had calculated using PeakFinder \cite{peakf1} software, which was newly developed in 2010, that at -1700 on the summer solstice the sunset, viewed from Mittelberg, was not exactly behind the Brocken but about $2^\circ$ to the left of the Brocken.

\mycomment{
\begin{figure}[!ht]
  \begin{center}
    \includegraphics[width=\textwidth]{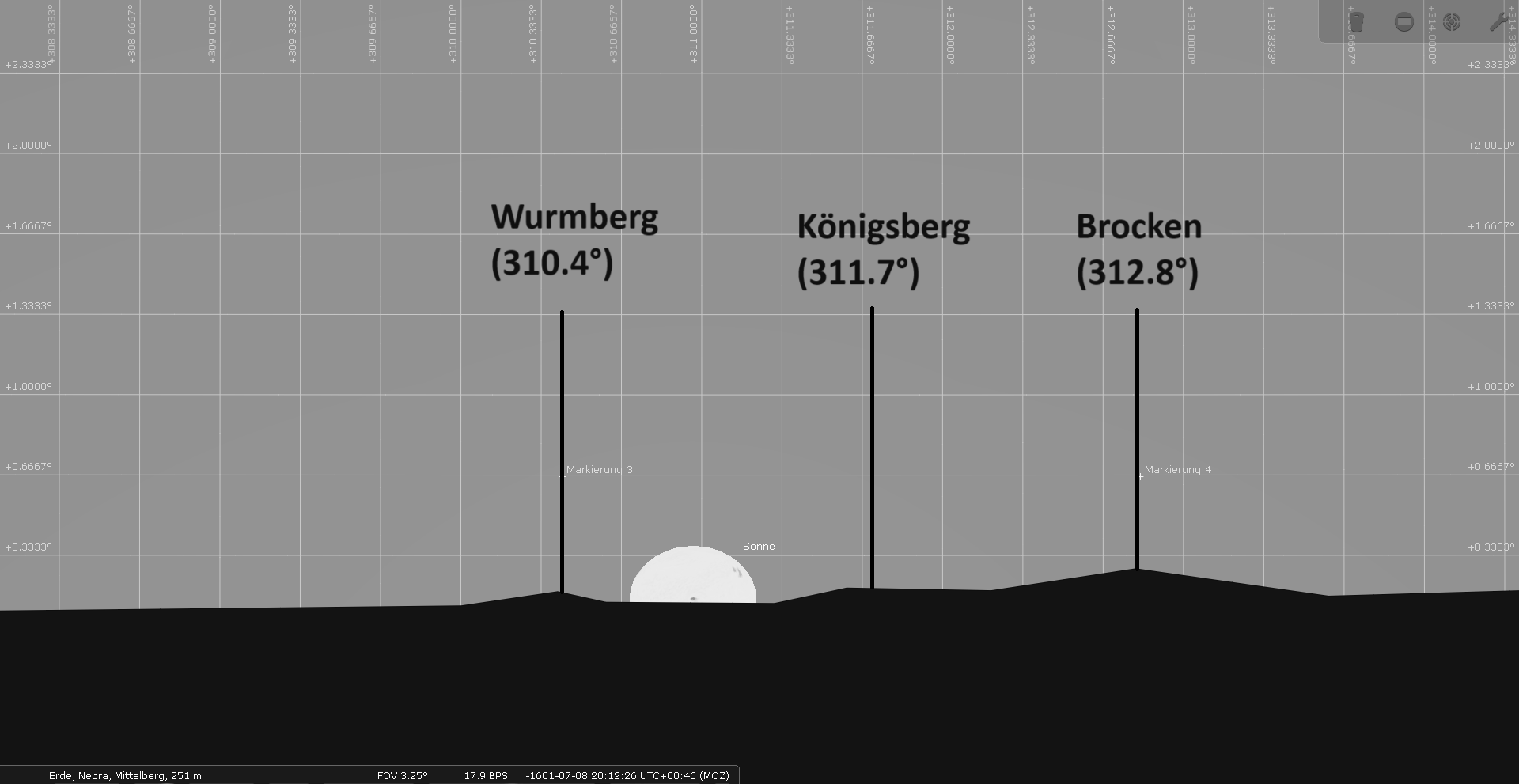}
  %\vspace*{-1.5cm}
\end{center}
\caption{Sunset at the summer solstice -1601 between Wurmberg and K\"onigsberg seen from Mittelberg.} 
\label{sunset}
\end{figure}
}

We checked this using PeakFinder and Stellarium. We simply determined the horizon line using PeakFinder and imported it into Stellarium. We then calculated the sunset on the summer solstice -1601 using Stellarium. It turned out that the sun sets in the depression between the Wurmberg and the K\"onigsberg to the left of the Brocken (Figure \ref{sunset}).

This confirms Lorenzen's result. But that doesn't mean that Schlosser's interpretation is no longer valid.

For the Nebra people, the Brocken probably did not have the same significance as it does for us today. They probably did not know that it is the highest mountain in the Harz. This meant that the depression between Wurmberg and K\"onigsberg was just as suitable as the Brocken itself as a bearing point.

But they could also have used a completely different landmark as a bearing point. Or they could have created an artificial bearing point using a gnomon (see Figure \ref{peilung}).

To do this, one sets up a gnomon and marks the end of the shadow at sunset in a period around the summer solstice. In the days near the summer solstice, the direction of the shadow at sunset no longer appears to change to someone observing only with the naked eye \cite[p. 23]{guenther}. This gives the direction to the point where the sun sets on the summer solstice.

If one now extends this line to the side of the gnomon facing away from the sun (100 meters is sufficient) and sets up a fixed table for the Sky Disk there, one can aim at the gnomon from this table and thus align the Disk.

\mycomment{
\begin{figure}[!ht]
  \begin{center}
    \includegraphics[width=\textwidth]{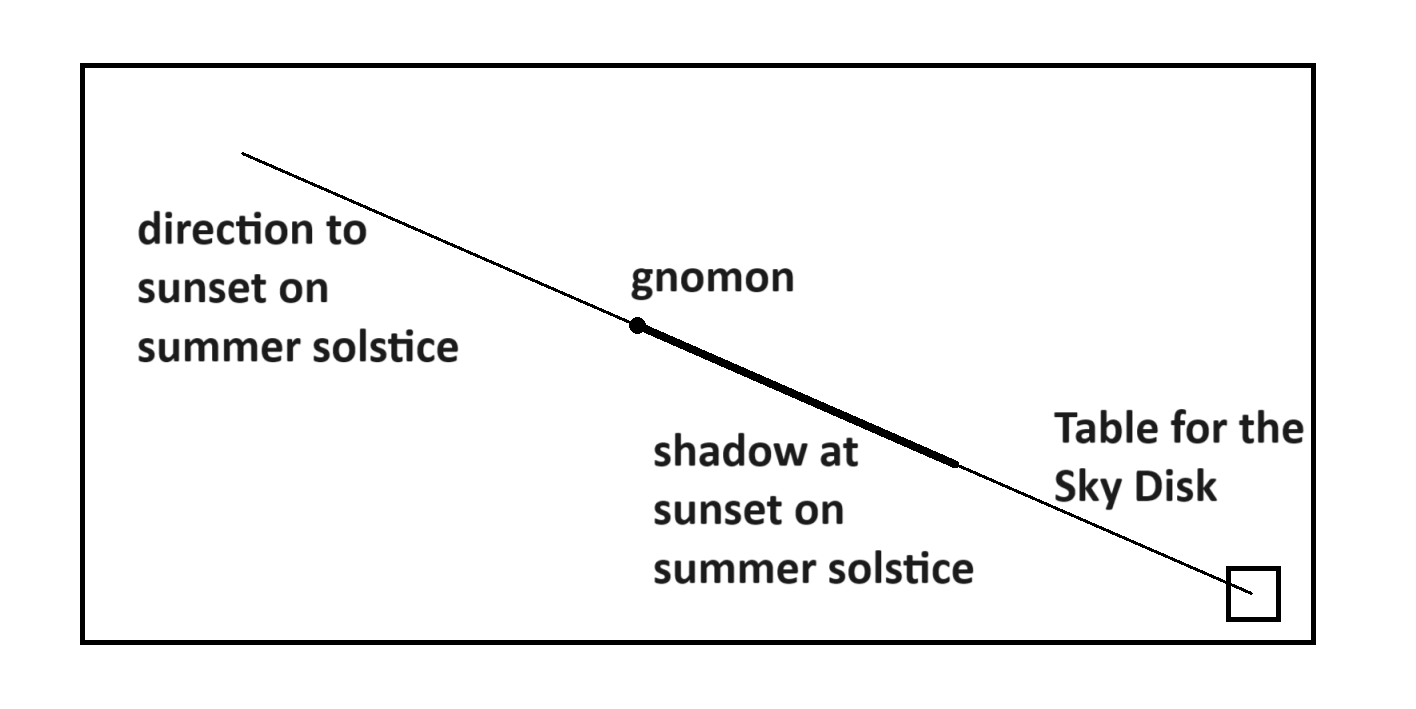}
  %\vspace*{-1.5cm}
  \end{center}
\caption{Finding arrangement for the position of the sunset at the summer solstice using gnomon.} 
\label{peilung}
\end{figure}
}

The fact that our calculations with PeakFinder and Stellarium led to the same result as Lorenzen's calculation with PeakFinder alone seems to indicate that PeakFinder is well suited for archaeoastronomical calculations. This is also seen as such in the PeakFinder description \cite{peakf2}. The reason for this is that the Archaeological Institute of the University of Basel has suggested incorporating archaeoastronomical functions into PeakFinder \cite[p.5]{peakf2}.

W. Schlosser calculated for the horizon arc angle of $82^\circ$--$83^\circ$ a strip of geographical latitude $\varphi$
$$51.8967^\circ \le \varphi \le 52.3959^\circ$$
in the northern hemisphere, on which the observation point $P$ must be located so that the method described at the beginning for determining the winter solstice works accurately.

Since our own measurements of the horizon arc angle $\bar{\alpha}$ in Appendix \ref{appB} resulted in $\bar{\alpha}$-values (Tables \ref{tab33} and \ref{tab23a}) and confidence intervals (\ref{B7}), (\ref{b30}) that do not lie in the interval between $82^\circ$ and $83^\circ$, we want to determine the ranges of latitudes belonging to these $\bar{\alpha}$-values here again. Our results are
\begin{Prop}
Let $\bar{\alpha} = 83.1881^\circ$ and $\bar{\alpha} = 81.7523^\circ$ be the $\bar{\alpha}$-values from the Tables {\rm\ref{tab33}} and {\rm\ref{tab23a}}. Consider an interval $\bar{\alpha}\pm 0.6^\circ$ around both values. Then the following latitude ranges belong to these intervals
\begin{eqnarray}
  83.1881^\circ \pm 0.6^\circ & \Rightarrow & 52.1581^\circ \le\varphi\le 52.6783^\circ \label{c1} \\
  81.7523^\circ \pm 0.6^\circ & \Rightarrow & 51.5075^\circ \le\varphi\le 52.0534^\circ \label{c2}
\end{eqnarray}
An observation location $P$ must be located in these areas so that the above method for determining the winter solstice works exactly.

Since the ring sanctuaries P\"ommelte and Sch\"onebeck have the geographical latitudes $\varphi = 51.997015^\circ$ and $\varphi =52.005628^\circ$ respectively, both lie in the interval {\rm (\ref{c2})}.
\end{Prop}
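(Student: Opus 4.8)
The plan is to read ``the above method'' as the spherical-trigonometry relation between the horizon-arc angle and the observer's latitude that W.~Schlosser already used (the one producing his strip $51.8967^\circ\le\varphi\le 52.3959^\circ$ from the angle range $82^\circ$--$83^\circ$), then to invert that relation and evaluate it at the endpoints of the two tolerance intervals $\bar\alpha\pm 0.6^\circ$. For an observer at geographic latitude $\varphi$ on a flat horizon the Sun's setting point swings over a year between its northernmost position at the summer solstice and its southernmost at the winter solstice; writing $\varepsilon$ for the Bronze-Age obliquity of the ecliptic ($\varepsilon\approx 23.9^\circ$), the amplitude of this swing from due west is $\Omega$ with $\sin\Omega=\sin\varepsilon/\cos\varphi$, so the angle spanned by the two extreme setting points, i.e.\ the horizon-arc angle, is
\begin{equation}
\alpha=2\arcsin\!\Bigl(\frac{\sin\varepsilon}{\cos\varphi}\Bigr). \label{planc1}
\end{equation}
If one wants to match Schlosser's digits exactly one inserts his small correction for the altitude of the sighting mark and for refraction into (\ref{planc1}); this leaves the shape of the relation $\varphi\mapsto\alpha$ unchanged, so I would carry it symbolically.

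The next step is to invert (\ref{planc1}). Solving for $\varphi$,
\begin{equation}
\varphi=g(\alpha):=\arccos\!\Bigl(\frac{\sin\varepsilon}{\sin(\alpha/2)}\Bigr), \label{planc2}
\end{equation}
and I would check that on the relevant range of $\alpha$ --- here roughly $81^\circ$ to $84^\circ$, well inside $(0^\circ,180^\circ)$ and with $\sin\varepsilon/\sin(\alpha/2)$ strictly between $0$ and $1$ --- the function $g$ is strictly increasing: as $\alpha$ grows, $\alpha/2$ grows, $\sin(\alpha/2)$ grows, $\sin\varepsilon/\sin(\alpha/2)$ decreases, and $\arccos$ of it increases. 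Hence $g$ carries a closed $\alpha$-interval onto a closed $\varphi$-interval, and it suffices to evaluate $g$ at the two endpoints $\bar\alpha\pm 0.6^\circ$.

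The remainder is a finite numerical check. Evaluating (\ref{planc2}) at $\alpha=82.5881^\circ$ and $\alpha=83.7881^\circ$ gives, by monotonicity, exactly the endpoints of the interval in (\ref{c1}); evaluating at $\alpha=81.1523^\circ$ and $\alpha=82.3523^\circ$ gives exactly the endpoints of (\ref{c2}). The clause ``$P$ must lie in these areas'' is then just the assertion that (\ref{planc1}) must hold at the measured $\bar\alpha$, which is precisely how the horizon arcs pick out the winter solstice. Finally one compares the quoted latitudes with (\ref{c2}): $\varphi=51.997015^\circ$ (P\"ommelte) and $\varphi=52.005628^\circ$ (Sch\"onebeck) both satisfy $51.5075^\circ\le\varphi\le 52.0534^\circ$, so both sanctuaries lie in the stated strip.

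The only place that needs real care is the first step --- pinning down which formula, which value of the Bronze-Age obliquity, and whether a horizon-altitude or refraction correction is included in ``the above method'', so that the computed endpoints reproduce $52.1581^\circ,\ 52.6783^\circ,\ 51.5075^\circ,\ 52.0534^\circ$ to four decimals. After that the inversion, the monotonicity, and the two arithmetic evaluations are routine; one may also note that the $\pm 0.6^\circ$ tolerance is far wider than the confidence intervals (\ref{B7}) and (\ref{b30}), so the measured $\bar\alpha$-values sit comfortably inside their tolerance bands and the conclusion is robust.
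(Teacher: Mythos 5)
Your proposal is correct, and it is in substance the same inversion the paper performs, but packaged differently. The paper starts from the azimuthal-to-equatorial transformation, sets $h=0$ at sunset to get $\sin\delta=-\cos A\cos\varphi$, writes this at both solstices with the unknown winter-solstice azimuth $A_0$ and the measured span $\bar\alpha$, and then solves the resulting two-equation system numerically (Mathematica's {\tt Solve} for $A_0$ from $\cos A_0+\cos(A_0+\bar\alpha)=0$, then back-substitution for $\varphi$), once for each interval endpoint. Your route short-circuits this: the sum-to-product identity gives $A_0=90^\circ-\bar\alpha/2$ immediately, so the paper's system collapses exactly to your closed form $\cos\varphi=\sin\varepsilon/\sin(\bar\alpha/2)$, i.e.\ $\varphi=\arccos\bigl(\sin\varepsilon/\sin(\bar\alpha/2)\bigr)$; your monotonicity remark then justifies evaluating only at the endpoints, a point the paper leaves implicit. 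What your version buys is transparency (a one-line formula plus an explicit monotonicity argument instead of a numerically solved system); what the paper's version buys is that it stays within the coordinate-transformation formulas it quotes from Meeus and is checked by software. Two small calibration notes: the paper uses no refraction or sighting-mark correction at all (it works with $h=0$), so your hedge about ``Schlosser's small correction'' is unnecessary; and the obliquity must be taken as the $-1601$ solstitial declination $\varepsilon=23.8819^\circ$ ($\sin\varepsilon=0.404853$, from Stellarium), not a rounded $23.9^\circ$ — with that value your formula reproduces the four endpoints $52.1581^\circ$, $52.6783^\circ$, $51.5075^\circ$, $52.0534^\circ$, which is exactly the care point you yourself flagged.
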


\begin{proof}
We start with the formulas
\begin{eqnarray}
  \tan H &=& \frac{\sin A}{\cos A \sin\varphi + \tan h \cos\varphi} \label{c3}\\
  \sin\delta &=& \sin\varphi \sin h - \cos\varphi \cos h \cos A \label{c4}
\end{eqnarray}
for the coordinate transformation $(A, h)\mapsto (H,\delta )$ from azimuthal coordinates to equatorial coordinates (\cite[p.94]{meeus}). The terms have the following meaning:
\begin{description}
\item[$H$] Hour angle of the sun
\item[$\delta$] Declination of the sun
\item[$A$] South azimuth of the sun
\item[$h$] Altitude of the sun
  \item[$\varphi$] Geographical latitude of the observation location $P$
\end{description}
At sunset, $h = 0$. This simplifies (\ref{c4}) to
\begin{equation}
  \sin\delta = - \cos A \cos\varphi\,. \label{c5}
\end{equation}
At the winter solstice, the sun at the observation point $P$ has the (unknown to us) azimuth $A = A_0$, at the summer solstice the azimuth $A = A_0 + \bar{\alpha}$, where $\bar{\alpha}$ is the horizon arc angle measured by us on the Sky Disk. Stellarium also provides the $\delta$-values for the points of the winter and summer solstices -1601 on the ecliptic
\begin{equation}
\delta_w = -23.8819^\circ \quad , \quad \delta_s = 23.8819^\circ \;. \label{c9}
\end{equation}
If we insert this into (\ref{c5}), we get the two equations
  \begin{eqnarray}
    -0.404853 &=& - \cos A_0 \cos\varphi \label{c6}\\
    0.404853 &=& - \cos( A_0 +\bar{\alpha}) \cos\varphi\;. \label{c7}
  \end{eqnarray}
  From these equations it follows that $\cos\varphi \not= 0$, i.e. $\varphi \not= \pm 90^\circ$.
  
  Given $\bar{\alpha}$ we have to determine $A_0$ and $\varphi$ from the system (\ref{c6}), (\ref{c7}).\\*[0.3cm]
(i) In Table \ref{tab33} we found $\bar{\alpha} = 83.1881^\circ$. The interval $\bar{\alpha} \pm 0.6^\circ$ to this value has the limits
\begin{equation}
82.5881^\circ \le\bar{\alpha}\le 83.7881^\circ\;. \label{c8}
\end{equation}
If we insert the lower limit of (\ref{c8}) into (\ref{c6}), (\ref{c7}) and add these two equations, we get
  \begin{equation}
    0 = (\cos A_0  + \cos( A_0 + 82.5881^\circ )) \cos\varphi
  \end{equation}
  and
    \begin{equation}
    0 = \cos A_0  + \cos( A_0 + 82.5881^\circ ) \label{c11}
    \end{equation}
    since $\cos\varphi \not= 0$.

    A picture of the function
\begin{equation}
  f(A_0) := \cos A_0  + \cos( A_0 + 82.5881^\circ ) \label{c12}
\end{equation}
shows that (\ref{c11}) has exactly one solution in $0^\circ \le A_0 \le 180^\circ$ (see Figure \ref{1loesung}).

\mycomment{
\begin{figure}[!ht]
  \begin{center}
    \includegraphics[width=8cm]{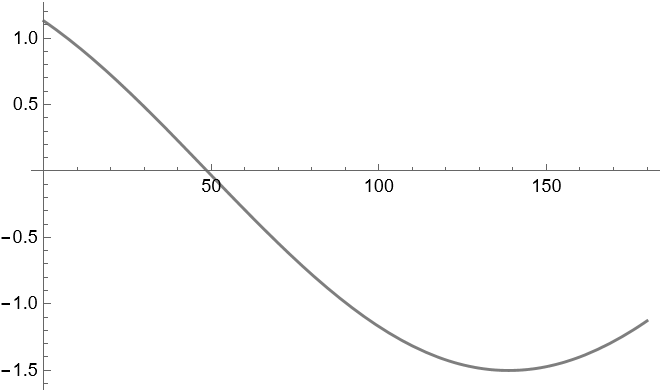}
  %\vspace*{-1.5cm}
  \end{center}
\caption{The function (\ref{c12}) over $0^\circ \le A_0 \le 180^\circ$.} 
\label{1loesung}
\end{figure}
}

We solve (\ref{c11}) with the tool {\tt Solve} from {\tt Mathematica} \cite{mma} and obtain
\begin{equation}
 A_0 = 48.7059^\circ \;. \label{c13}
  \end{equation}
If we insert (\ref{c13}) into (\ref{c6}), we get
\begin{equation}
  0.404853 = 0.659924 \cos\varphi \quad\Rightarrow\quad \varphi = 52.1581^\circ \;.
\end{equation}
Thus we have found the lower limit of the interval (\ref{c1}).

In the same way, we can calculate the upper bound of the interval (\ref{c1}) by substituting the upper bound of (\ref{c8}) into (\ref{c7}) and performing a calculation analogous to the one just performed.\\*[0.3cm]
(ii) The calculation of the interval (\ref{c2}) is also carried out in the same way, where we start by forming the interval $\bar{\alpha} \pm 0.6^\circ$ for the value $\bar{\alpha} = 81.7523^\circ$ from Table \ref{tab23a},
\begin{equation}
81.1523^\circ \le\bar{\alpha}\le 82.3523^\circ \;,
\end{equation}
and then carry out calculations analogous to those just done for its limits.
\end{proof}

\section{Procedures for {\tt Stellarium}}

\subsection{Determination of heliacal settings using {\tt Stellarium}}
\label{appD}
Let be given a star and its arcus visionis $\sigma$.
\begin{enumerate}
\item Start {\tt Stellarium} and set the observation location and year.
\item Stop the clock of {\tt Stellarium}.
\item Turn off ''Ground'' and ''Atmosphere'' (Main Tool Bar).
\item Turn on ''Azimuthal grid'' and ''Ecliptic (of date)'' (View $>$ Markings).
\item Choose any day of the year in question.
\item Change the time of day so that the star you are looking at is
  exactly on the line of the western horizon.
\item Enlarge the image so that a grid line is displayed for every
  integer degree value.
\item In the western sky, put a marker at the intersection of the
  ecliptic with the grid line that runs $\sigma$ degrees below the
  western horizon.
\item Return to normal magnification.
\item By changing the day, move the sun along the ecliptic until it is
  on the marker.
\item Move the considered star by changing the time of day until it is
  exactly on the western horizon.
\item Click on the Sun and read its depth below the horizon.
\end{enumerate}
If the depth of the Sun coincides with $\sigma$ with sufficient accuracy, then the day found is the heliacal setting of the star in question in the given year.

\subsection{Determination of beta days using {\tt Stellarium}}
\label{appC3}
Let be given the depth $h_0$ of the sun at which $\eta$ Tau should
appear vertically below $\beta$ Aur in the evening of the beta day.
\begin{enumerate}
\item Start {\tt Stellarium} and set the observation location and year.
\item Stop the clock of {\tt Stellarium}.
\item Turn off ''Ground'' and ''Atmosphere''.
\item Turn on ''Azimuthal grid'', ''Ecliptic (of date)'' and ''Altidude''.
\item Choose any day of the year in question.
\item Set the time of day to the transit time of $\eta$ Tau.
\item Select and center the star $\beta$ Aur. This makes $\beta$ Aur
  lie on the line of altitudes. $\eta$ Tau lies to the right of this
  line.
\item Now increase the time until $\eta$ Tau is also on the line of altitude. 
\item Enlarge the image so that a grid line is displayed for every
  integer degree value.
\item In the western sky, put a marker at the intersection of the
  ecliptic with the grid line that runs $h_0$ degrees below the
  western horizon.
\item Return to normal magnification.
\item By changing the day, move the sun along the ecliptic until it is
  on the marker.
\item Carry out steps (6), (7) and (8) again.
\item Click on the Sun and read its depth below the horizon.
\end{enumerate}
If the depth of the Sun coincides with $h_0$ with sufficient accuracy,
then the day found is the beta day in the given year.

\subsection{Determination of beta points using {\tt Stellarium}}
\label{appC4}
Let be given the depth $h_0$ of the sun at which $\eta$ Tau should
appear vertically below $\beta$ Aur in the evening of the beta day.
\begin{enumerate}
\item Start {\tt Stellarium} and set the observation location and year.
\item Stop the clock of {\tt Stellarium}.
\item Turn off ''Ground'' and ''Atmosphere''.
\item Turn on ''Azimuthal grid'', ''Ecliptic (of date)'' and
  ''Altidude''.
\item Torn on ''Use decimal degrees'' (Configuration $>$ Tools).
\item Choose any day of the year in question.
\item Set the time of day to the transit time of $\eta$ Tau.
\item Select and center the star $\beta$ Aur. This makes $\beta$ Aur
  lie on the line of altitudes. $\eta$ Tau lies to the right of this
  line.
\item Now increase the time until $\eta$ Tau is also on the line of
  altitude. Decrease the number of seconds until it is a multiple of 10.
\item Now read the azimuths Az$_\beta$, Az$_\eta$ of $\beta$ Aur and $\eta$ Tau. If $\eta$ Tau is to the right of $\beta$ Aur (Az$_\beta$ $<$ Az$_\eta$), then go 10s, 20s, 30s into the future until $\eta$ Tau is to the left of $\beta$ Aur (Az$_\eta$ $<$ Az$_\beta$). If, on the other hand, $\eta$ Tau is to the left of $\beta$ Aur, then go 10s, 20s, 30s into the past until $\eta$ Tau is to the right of $\beta$ Aur. We have thus found a seconds interval $[s_0, s_1]$ where $\eta$ Tau is at $s_0$ to the right and at $s_1$ to the left of $\beta$ Aur.
\item Bisect the interval $[s_0, s_1]$, where the division point $s^*$
  should be an integer value as close as possible to the center of the
  interval.
\item Among the intervals $[s_0, s^*]$ and $[s^*, s_1]$, choose the
  one on whose boundaries $\eta$ Tau lies on different sides of
  $\beta$ Aur. We call the new interval $[s_2, s_3]$.
\item Continue this iterval nesting until an interval $[s_n, s_{n+1}]$
  of length 1 is obtained, with $\eta$ Tau at $s_n$ to the left and at
  $s_{n+1}$ to the right of $\beta$ Aur.
\item Among the two time points $s_n$, $s_{n+1}$ choose the one where $|\text{Az}_{\eta} - \text{Az}_{\beta}|$ is the smallest.
\item Enlarge the image so that a grid line is displayed for every
  integer degree value.
\item In the western sky, put a marker at the intersection of the
  ecliptic with the grid line that runs $h_0$ degrees below the
  western horizon.
\item Enlarge the image further, where the intersection between the
  ecliptic and the $h_0$ line will move away from the marker. Place a
  new marker on this intersection.
\item Continue these magnifications and marker corrections until the
  maximum magnification is reached. The marker at the maximum
  magnification is the best approximation of the beta point.
\item Return to normal magnification.
\item Now select $\eta$ Tau, $\beta$ Aur, $\alpha$ Cyg and the marker
  of the beta point one by one and read the values needed for Table \ref{tab10}.
\end{enumerate}

\subsection{Determination of Tables \ref{tab7a} and \ref{tab13} using  {\tt Stellarium}}
\label{appE}

\begin{enumerate}
\item Start {\tt Stellarium} and set the observation location and year.
\item Stop the clock of {\tt Stellarium}.
\item Turn off ''Ground'' and ''Atmosphere''.
\item Turn on ''Azimuthal grid'' and ''Height''.
\item Choose any day of the year in question.
\item Click on the star Alcyone and read the time of its transit.
\item Set this time and adjust it until the azimuth $180^\circ0'$ is
  displayed for Alcyone.
\item Now increase the time in steps of hours or minutes until the
  next transit is reached.
\item Try to find the exact time at which Alcyone is exactly on the
  west or east horizon (setting or rising of the Pleiades).
\item When Alcyone approaches the position vertically below $\alpha$
  Aur or $\beta$ Aur, fix these stars with a double click. Then one can easily see when Alcyone is exactly vertically below these stars or when Alcyone is not making any relative movement in relation to these stars.
\end{enumerate}

\section{Our own observations of the sky}

\subsection{Observation locations}
We carried out our own observations of the sky at the observation locations given in Table \ref{tab21}.
We use as coordinates for the Mittelberg
\begin{table}[h] %[t] [b]
\begin{center}
\begin{tabular}{|l|c|c|c|}
\hline
location  & latitude & longitude  & height  \\
\hline
Nebra, Mittelberg &  $N51^\circ16'60.00"$ &    $E11^\circ31'13.00"$ &   251m\\
\hline
\end{tabular}
%\vspace{3mm}
%\caption{Observation 14.3.20, Sch\"onauer Str. ($\alpha$ Tau =
%  Aldebaran, $\eta$ Tau = Alcyone (Pleiades)).}
%\label{tab21a}
\end{center}
\end{table}

\begin{table}[t] %[h] [b]
\begin{center}
\begin{tabular}{|l|c|c|c|}
\hline
location  & latitude & longitude  & height  \\
\hline
Leipzig, Eichelbaumstr. & $N51^\circ18'16.15"$ & $E12^\circ19'32.83"$ &  119m \\
Leipzig, Sch\"onauer Str. & $N51^\circ18'26.15"$ &  $E12^\circ18'57.52"$ & 121m\\
Leipzig, Seebenischer Str. &  $N51^\circ17'40.49"$ &   $E12^\circ18'15.28"$ & 122m\\
\hline
\end{tabular}
\vspace{3mm}
\caption{Observation sites of our own sky observations.}
\label{tab21}
\end{center}
\end{table}

\subsection{The conjunction between Pleiades and full moon in autumn} \label{appE2}

We want to discuss the question of whether the Nebra people were able to observe the conjunction that takes place between the moon and the Pleiades in autumn. At this point the moon is said to be a good approximation of a full moon.

\begin{table}[t] %[h] [b]
\begin{center}
\begin{tabular}{|c|c|c|c|c|}
\hline
Date & Time & Moon  & angular distance  & height \\
Greg. && illuminated & Moon -- $\eta$ Tau & \\
\hline
2023-01-03 & 5:24:29 CET & 87.0\% & $3^\circ02'43.7"$& $-1^\circ25'06.67"$\\
2022-12-06 & 23:49:48 CET & 98.5\% & $2^\circ55'40.1"$ & $58^\circ43'01.68"$ \\
2022-11-09 & 15:26:22 CET &  98.6\% & $ 3^\circ24'25.3"$& $-9^\circ56'05.38"$ \\
2022-10-13 & 09:02:27 CEST & 88.1\% & $2^\circ58'31.1"$ & $21^\circ26'22.40"$ \\
2022-09-15 & 23:16:03 CEST & 68.9\% & $3^\circ29'16.6"$ & $14^\circ46'51.81"$ \\
\hline
\end{tabular}
\vspace{3mm}
\caption{Conjunctions between the Moon and $\eta$ Tau (Pleiades) in autumn and winter 2022/2023. (Calculated for Eichelbaumstra{\ss}e).}
\label{tab28}
\end{center}
\end{table}

First of all, it can be noted that from September to the end of the year, a conjunction between the Pleiades and the Moon often occurs in several months, during which the Moon is almost a full moon. Examples are shown in the Tables \ref{tab28}, \ref{tab29} calculated with {\tt Stellarium}. In Table \ref{tab28} there are conjunctions in 2022-11-09 and 2022-12-06, in which the Moon is almost full. During the conjunctions that actually fall in the autumn time in September and October, the Moon is still a bit away from being a full moon.

\begin{table}[t] %[h] [b]
\begin{center}
\begin{tabular}{|c|c|c|c|c|}
\hline
Date & Time  & Moon  & angular distance  & height \\
 Jul.& LMST & illuminated & Moon -- $\eta$ Tau & \\
\hline
-1601-09-06 & 21:22:38 & 89.3\% & $2^\circ36'6.38"$ &  $15^\circ10'51.9"$ \\
-1601-10-04 & 20:00:40 & 97.2\% & $6^\circ22'8.56"$ &  $18^\circ18'31.95"$ \\
-1601-10-31 & 18:57:37 &  98.4\% & $0^\circ52'47.42"$ &  $25^\circ33'6.35"$ \\ 
-1601-11-27 & 18:15:25 & 83.5\% & $5^\circ29'31.61"$ & $34^\circ44'42.25"$ \\
\hline
\end{tabular}
\vspace{3mm}
\caption{Conjunctions between the Moon and $\eta$ Tau (Pleiades) in autumn -1601. Calculated for Mittelberg for the time when the sun is $9^\circ$ below the horizon in the evening. LMST = Local Mean Solar Time. The height of the conjunction is the average of the height of the moon and the height of $\eta$ Tau.}
\label{tab29}
\end{center}
\end{table}

Table \ref{tab29} shows conjunctions for the Bronze Age -1601. Here too, there are conjunctions on the Julian dates -1601-10-04 and -1601-10-31, where the moon is almost full. The conjunction between the Pleiades and an almost full moon is not clearly determined and therefore not a good characteristic for the season "autumn".

However, a serious objection to the use of conjunctions between the Pleiades and a nearly full moon is that such conjunctions cannot be observed with the naked eye because the very bright full moon, when it is only a few degrees away from the Pleiades, outshines the Pleiades and they are then not visible at all.

We ourselves have observed the conjunction of the Pleiades and the Moon on 2022-10-13 as well as the daily approach of the Moon to the Pleiades since 2022-10-10. Table \ref{tab30} shows the azimuthal coordinates of the two celestial bodies at the observation times. The Pleiades are always very high, where they can normally be observed without any problems.

\begin{table}[t] %[h] [b]
\begin{center}
\begin{tabular}{|c|c|c|c|c|c|}
\hline
Date & Time  & Az Moon  & h Moon  & Az $\eta$ Tau & h  $\eta$ Tau\\
 Greg.& CEST  & & & & \\
 \hline
 2022-10-10 &  3:48 & $229.4581^\circ$ &  $33.0562^\circ$ &$181.5510^\circ$ & $62.8723^\circ$\\
2022-10-11 &  4:19 & $229.1098^\circ$& $40.1212^\circ$& $198.7065^\circ$& $61.9058^\circ$\\
2022-10-12 & 2:12 & $169.4165^\circ$& $54.1643^\circ$&$140.8417^\circ$&$58.3251^\circ$\\
2022-10-13 & 1:03 & $125.5653^\circ$&$48.5375^\circ$&$118.9302^\circ$&$50.4864^\circ$\\
\hline
\end{tabular}
\vspace{3mm}
\caption{Azimuth coordinates of the Moon and $\eta$ Tau during our observations in October 2022 (Eichelbaumstra{\ss}e).}
\label{tab30}
\end{center}
\end{table}

However, Table \ref{tab30b} shows that the visibility of the Pleiades deteriorated as the Moon approached them. Already on 2022-10-12, when the Moon was still $16^\circ$ from the Pleiades, the Pleiades were only visible as a faint flicker. At times they even disappeared. On 2022-10-13, when the Moon was only $4.7^\circ$ from the Pleiades, the Pleiades could no longer be seen. At this time, the Moon was no longer a full moon. Only about 90\% of its surface was illuminated. During a real full moon, the Pleiades would also not be visible.

\begin{table}[t] %[h] [b]
\begin{center}
\begin{tabular}{|l|}
  \hline
  The observation conditions were the same on all 4 nights. The air was clear,\\
  cold and had good transparency. There were no clouds. In the vicinity of the\\
  Pleiades, the constellations Taurus, Gemini, Auriga and Orion were visible.\\
  Because of the bright moonlight, however, only the brightest stars appeared.\\
  In Taurus, only $\alpha$, $\beta$ and $\zeta$ could be seen.\\
  \hline
\end{tabular}
\vspace{3mm}
\caption{Observation conditions for Table \ref{tab30b}.}
\label{tab37}
\end{center}
\end{table}

\begin{table}[t] %[h] [b]
\begin{center}
\begin{tabular}{|c|c|c|r|}
\hline
Date & Time & Moon  & angular distance  \\
Greg. & CEST & illuminated & Moon -- $\eta$ Tau \\
\hline
2022-10-10 &  3:48 & 99.9\% &  $42.1289^\circ$ \\
& &\multicolumn{2}{|l|}{Pleiades clearly visible.} \\
\hline
2022-10-11 &  4:19 & 98.4\% & $28.4702^\circ$\\
& &\multicolumn{2}{|l|}{Pleiades clearly visible.} \\
\hline
2022-10-12 & 2:12 & 95.0\% & $16.2770^\circ$\\
& &\multicolumn{2}{|l|}{Pleiades only visible as a faint flicker.} \\
& &\multicolumn{2}{|l|}{At times they disappeared.} \\
\hline
2022-10-13 & 1:03 & 89.7\% &$4.7262^\circ$\\
& &\multicolumn{2}{|l|}{Pleiades not visible.} \\
\hline
\end{tabular}
\vspace{3mm}
\caption{Visibility of the Pleiades during our observations October 2022 (Eichelbaumstra{\ss}e). All observations lasted approximately 15 minutes.}
\label{tab30b}
\end{center}
\end{table}

This means that the conjunctions of the Pleiades and a nearly full moon could hardly have been used by the Nebra people to indicate autumn or the harvest season.

\subsection{Heights of the celestial bodies observed}
\label{F.2}

In March and April 2020, we carried out astronomical observations of the stars in the sky around the Pleiades with the naked eye in order to gain experience with the visibility of the stars. We noted the times of the observations and later calculated the heights of the stars using {\tt Stellarium}. Tables \ref{tab15} to \ref{tab20} contain the data of these observations.

The first observation time for a star is the time at which we first noticed it. This time does not mean that the star first became visible at this time. However, a time marked with "V" is the setting time of the respective star. From this time onwards it was no longer visible.

\begin{table}[t] %[h] [b]
\begin{center}
\begin{tabular}{|c|c|c|c|c|c|}
\hline
\multicolumn{6}{|c|}{Heights of the stars} \\
\hline
Time (CET)  & Sun & $\alpha$ Tau  & $\eta$ Tau  &
Venus & Sirius \\
\hline
18:09&  $0^\circ0'$& & & & \\
18:22&  $-2^\circ2'$ & &   &                   $38^\circ42'$& \\
18:33 & $-3^\circ45'$& & &                     $37^\circ7'$  &   $20^\circ59'$\\
18:43&$  -5^\circ19'$  & &                     &$35^\circ39'$  &$  21^\circ19'$\\
18:50&$  -6^\circ24'$  &  &                    &$ 34^\circ36'$ &$   21^\circ30'$\\
18:52&$ -6^\circ43'$   & &                     &$  34^\circ18'$ &$   21^\circ33'$\\
19:11&$  -9^\circ39'$ &$ 48^\circ27'$  &$ 48^\circ45'$  &$ 31^\circ26'$ &$21^\circ52'$\\
20:11&$ -18^\circ41'$ &$ 40^\circ59'$  &$ 39^\circ50'$ &$ 22^\circ8'$ &$21^\circ13'$ \\
20:29&$ -21^\circ17'$ &$ 38^\circ28'$  &$ 37^\circ2'$  &$ 19^\circ20'$  &$ 20^\circ32'$  \\
\hline
magnitude &-26.75 &  0.85  &  2.85  &   -4.36  &   -1.45 \\
\hline
\end{tabular}
\vspace{3mm}
\caption{Observation 14.3.20, Sch\"onauer Str. ($\alpha$ Tau =
  Aldebaran, $\eta$ Tau = Alcyone (Pleiades)).}
\label{tab15}
\end{center}
\end{table}

\begin{table}[t] %[h] [b]
\begin{center}
\begin{tabular}{|c|c|c|c|c|c|c|c|}
\hline
\multicolumn{8}{|c|}{Heights of the stars} \\
\hline
Time (CET)  & Sun & $\alpha$ Tau & $\eta$ Tau  &
Venus & Sirius & $\alpha$ Aur  & $\beta$ Tau\\
\hline
18:16 &$ -0^\circ34'$&         &          &                    &        &     &\\
18:33 &$ -2^\circ42'$&         &          &$  38^\circ26'$&         &    &\\
18:38 &$ -3^\circ29'$&         &          &$  37^\circ41'$&$  21^\circ36'$&     &\\
18:51&$  -5^\circ30'$&$   48^\circ54'$&          &$  35^\circ45'$&$  21^\circ49'$&    &\\
18:52&$  -5^\circ39'$&$   48^\circ48'$&          &$  35^\circ36'$&$  21^\circ50'$&   &\\
18:54&$  -5^\circ58'$&$   48^\circ35'$&          &$  35^\circ18'$&$  21^\circ52'$&  &\\
18:57&$  -6^\circ26'$&$   48^\circ15'$&          &$  34^\circ51'$&$  21^\circ53'$& &\\
19:00&$  -6^\circ54'$&$   47^\circ56'$&$    48^\circ4'$&$   34^\circ23'$&$  21^\circ55'$& &\\
19:13&$  -8^\circ54'$&$   46^\circ26'$&$    46^\circ12'$&$  32^\circ24'$&$  21^\circ57'$&$ 74^\circ7'$&$  62^\circ37'$\\
19:22&$ -10^\circ16'$&$   45^\circ20'$&$    44^\circ52'$&$  31^\circ1'$&$   21^\circ54'$&$ 72^\circ44'$&$ 61^\circ40'$\\
19:31&$ -11^\circ39'$&$   44^\circ12'$&$    43^\circ31'$&$  29^\circ38'$&$   21^\circ47'$&$  71^\circ20'$&$  60^\circ37'$\\ 
20:10&$ -17^\circ26'$&$   38^\circ55'$&$    37^\circ33'$&$  23^\circ34'$&$   20^\circ40'$&$  65^\circ13'$&$  55^\circ34'$\\
\hline
magnitude & -26.75 &  0.85   &2.85   &-4.40 &   -1.45 &  0.05  &
1.65 \\
\hline
\end{tabular}
\vspace{3mm}
\caption{Observation 18.3.20, Sch\"onauer Str. ($\alpha$ Tau =
  Aldebaran, $\eta$ Tau = Alcyone (Pleiades), $\alpha$ Aur = Capella).}
\label{tab16}
\end{center}
\end{table}

\begin{table}[t] %[h] [b]
\begin{center}
\begin{tabular}{|c|c|c|c|c|c|c|}
\hline
\multicolumn{7}{|c|}{Heights of the stars} \\
\hline
Time (CET)  & Sun & $\alpha$ Tau & $\eta$ Tau  &
Venus & $\alpha$ Aur  & $\beta$ Tau\\
\hline
18:23&$   -0^\circ5'$& & & & & \\
18:32&$   -1^\circ29'$& & &$                  39^\circ47'$& & \\
18:45&$   -3^\circ30'$& & &$                  37^\circ50'$& & \\
18:59&$   -5^\circ41'$& & &$                  35^\circ43'$& & \\
19:02&$   -6^\circ9'$& & &$                   35^\circ16'$& & \\
19:03&$   -6^\circ18'$&$  45^\circ44'$& &$           35^\circ7'$& & \\
19:08&$   -7^\circ4'$&$   45^\circ7'$& &$            34^\circ21'$&$  72^\circ28'$&$     61^\circ28'$\\
19:10&$   -7^\circ22'$&$  44^\circ52'$& &$           34^\circ2'$&$   72^\circ9'$&$      61^\circ14'$\\
19:19&$   -8^\circ45'$&$  43^\circ44'$&$      42^\circ58'$&$    32^\circ39'$&$  70^\circ45'$&$     60^\circ11'$\\
19:30&$  -10^\circ25'$&$  42^\circ17'$&$      41^\circ17'$&$    30^\circ57'$&$  69^\circ2'$&$      58^\circ49'$\\
19:37&$  -11^\circ28'$&$  41^\circ20'$&$      40^\circ13'$&$    29^\circ52'$&$  67^\circ56'$&$     57^\circ55'$\\
19:46&$  -12^\circ48'$&$  40^\circ5'$&$       38^\circ50'$&$    28^\circ28'$&$  66^\circ31'$&$     56^\circ43'$\\
20:14&$  -16^\circ53'$&$  36^\circ4'$&$       34^\circ28'$&$    24^\circ6'$&$   62^\circ9'$&$      52^\circ46'$\\
20:23&$  -18^\circ9'$&$   34^\circ44'$&$      33^\circ4'$&$     22^\circ42'$&$  60^\circ45'$&$     51^\circ27'$\\
\hline
magnitude & -26.75 &  0.85   &2.85   &-4.46 &  0.05  &
1.65 \\
\hline
\end{tabular}
\vspace{3mm}
\caption{Observation 22.3.20, Sch\"onauer Str. ($\alpha$ Tau =
  Aldebaran, $\eta$ Tau = Alcyone (Pleiades), $\alpha$ Aur = Capella).}
\label{tab17}
\end{center}
\end{table}

\begin{table}[t] %[h] [b]
\begin{center}
\begin{tabular}{|c|c|c|c|c|c|}
\hline
\multicolumn{6}{|c|}{Heights of the stars} \\
\hline
Time (CET)  & Sun & $\alpha$ Tau & $\eta$ Tau  &
Venus & Sirius \\
\hline
22:12&$  -30^\circ37'$&$   16^\circ39'$&$        15^\circ6'$&$                  6^\circ52'$&$     8^\circ28'$\\
22:27&$  -31^\circ57'$&$   14^\circ19'$&$        12^\circ55'$&$                 4^\circ47'$&$     6^\circ38'$\\
22:37&$  -32^\circ46'$&$   12^\circ47'$&$        11^\circ28'$&$                 3^\circ25'$&$     5^\circ23'$\\
22:50&$  -33^\circ43'$&$   10^\circ47'$&$        9^\circ38'$&$                  1^\circ40'$&$     3^\circ41'$\\
22:56&$  -34^\circ8'$&$    9^\circ52'$&$         8^\circ48'$ V&$                0^\circ57'$&$     2^\circ54'$\\
23:01&$  -34^\circ27'$&$   9^\circ7'$& &$                                 0^\circ14'$ V&$   2^\circ13'$\\
23:11&$  -35^\circ1'$&$    7^\circ36'$& & &$                                          0^\circ52'$ V\\
23:29&$  -35^\circ52'$&$   4^\circ56'$& & & \\
23:46&$  -36^\circ25'$&$   2^\circ28'$ V& & & \\
\hline
magnitude & -26.75 &  0.85   &2.85   &-4.45 & -1.45 \\
\hline
\end{tabular}
\vspace{3mm}
\caption{Observation 24.3.20, Seebenischer Str. ($\alpha$ Tau =
  Aldebaran, $\eta$ Tau = Alcyone (Pleiades)). V = setting of the star.}
\label{tab18}
\end{center}
\end{table}

\begin{table}[t] %[h] [b]
\begin{center}
\begin{tabular}{|c|c|c|c|c|c|c|}
\hline
\multicolumn{7}{|c|}{Heights of the stars} \\
\hline
Time (CEST)  & Sun & $\alpha$ Tau & $\eta$ Tau  &
Venus & $\beta$ Tau & $\alpha$ Aur\\
\hline
20:42&$   -8^\circ27'$&$  32^\circ38'$& &$         32^\circ4'$& &$          58^\circ34'$\\
20:47&$   -9^\circ10'$&$  31^\circ52'$& &$         31^\circ18'$&$  48^\circ36'$&$    57^\circ47'$\\
20:55&$  -10^\circ19'$&$  30^\circ39'$& &$         30^\circ3'$&$   47^\circ23'$&$    56^\circ34'$\\
21:14&$  -12^\circ58'$&$  27^\circ43'$&$     25^\circ52'$&$   27^\circ5'$&$   44^\circ28'$&$    53^\circ39'$\\
21:35&$  -15^\circ48'$&$  24^\circ27'$&$     22^\circ37'$&$   23^\circ50'$&$  41^\circ12'$&$    50^\circ29'$\\
22:02&$  -19^\circ13'$&$  20^\circ13'$&$     18^\circ30'$&$   19^\circ43'$&$  36^\circ58'$&$    46^\circ29'$\\
\hline
magnitude & -26.74 &  0.85   &2.85   &-4.57 & 1.65 & 0.05 \\
\hline
\end{tabular}
\vspace{3mm}
\caption{Observation 5.4.20, Sch\"onauer Str. ($\alpha$ Tau =
  Aldebaran, $\eta$ Tau = Alcyone (Pleiades), $\alpha$ Aur =
  Capella). The Moon was also visible in the eastern sky with magnitude -12.14 (2 days before full Moon, 92\% illuminated surface). Venus was close to the Pleiades. The observation was impaired by a lot of interfering light.}
\label{tab19}
\end{center}
\end{table}

\begin{table}[t] %[h] [b]
\begin{center}
\begin{tabular}{|c|c|c|c|c|c|c|}
\hline
\multicolumn{7}{|c|}{Heights of the stars} \\
\hline
Time (CEST)  & Sun & $\alpha$ Tau & $\eta$ Tau  &
Venus & Sirius& Moon\\
\hline
21:47&$  -11^\circ34'$&$   9^\circ45'$&$         8^\circ42'$&$                 21^\circ53'$&$     2^\circ48'$&$     18^\circ11'$\\  
21:57&$  -12^\circ44'$&$   8^\circ15'$&$         7^\circ20'$  V&$              20^\circ24'$&$     1^\circ26'$&$     16^\circ43'$\\
22:19&$  -15^\circ10'$&$   4^\circ58'$& &$                               17^\circ11'$&$    -1^\circ37'$&$     13^\circ31'$\\
22:36&$  -16^\circ54'$&$   2^\circ30'$ V& &$                             14^\circ46'$& &$               11^\circ6'$\\
\hline
magnitude & -26.73 &  0.85   &2.85   &-4.54 & -1.45 & -7.73 \\
\hline
\end{tabular}
\vspace{3mm}
\caption{Observation 26.4.20, Seebenischer Str. ($\alpha$ Tau =
  Aldebaran, $\eta$ Tau = Alcyone (Pleiades)). V = setting of the
  star. Moon age: 3.4 days after the new Moon, 12.3\% illuminated. The
  Moon was close to the Pleiades. Its conjunction with the Pleiades
  took place on April 25, 2020, 21:41 CEST. The setting of Sirius went unnoticed.}
\label{tab20}
\end{center}
\end{table}

%\end{appendices}
\newpage
%Hansen \cite{hansen1,hansen2}, Schlosser \cite{schlosser1,schlosser2,schlosser3, schlosser4},
%van der Waerden \cite{waerden}, Ideler \cite{ideler1, ideler2},
%M\"uller \cite[p. 31]{mueller}, Brandt \cite[p.126]{brandt}, Ahnert
%\cite{ahnert2}, Weigert, Zimmermann \cite{weigzimm}, Behre
%\cite[p.98]{behre},
%%Martens, Reichenbach \cite[slide 8]{martreich}, Schultz-Klinken
%%\cite{schultzkl},
%Stellarium \cite{zottiwolf,zottihoffmann}, Kurzmann
%\cite{kurzmann1,kurzmann2},
%P\'asztor, Roslund \cite{pasztorroslund}, Schoch \cite{schoch},
%D\"ammerung \cite{astro}, Neugebauer \cite{neugebauer}, Sehungsbogen
%\cite{wikibogen}, Gebhard, Krause \cite{gebhardkrause}, Meller
%\cite{meller1,meller2}, Pernicka \cite{pernicka}, EOP \cite{eop},
%%Budo \cite[p.51-52]{budo},
%%Williams \cite{williams},
%%%Simon \cite{simon},
%GIMP \cite{gimp}, Mathematica \cite{mma}, Sachs \cite{sachs}, Meller, Michel \cite{melmich1,melmich2}, UNESCO \cite{unesco}, Stika, Heiss \cite{stika}, Meeus \cite{meeus}, PeakFinder \cite{peakf1}, Walker \cite{peakf2}, Lorenzen \cite{lorenzen1, lorenzen2}, \cite[p.23]{guenther}, Pietsch u. Co \cite{pietsch}, Goldstein \cite{goldstein}

\vspace{0.5cm}

%\noindent {\bf Acknowledgements.} I would like to thank Prof. P. B. Gilkey for important and helpful discussions and for valuable suggestions for future investigations.
\vspace{0.5cm}
\section*{Image sources}
\noindent Many of our graphics are based on the following sources:\\*[2mm]
\noindent [source 1] {\tt Nebra\_Scheibe\_white.jpg}\\
         \url{https://commons.wikimedia.org/wiki/File:Nebra_Scheibe_white.jpg?uselang=de}\\
         Author: Johannes Kalliauer\\
         License:
         \begin{itemize}
         \item GNU Free Documentation License, Version 1.2\\
           \url{https://commons.wikimedia.org/wiki/Commons:GNU_Free_Documentation_License,_version_1.2}
         \item Creative-Commons-Lizenz ''Namensnennung - Weitergabe unter gleichen Bedingungen 3.0 nicht portiert''\\
           \url{https://creativecommons.org/licenses/by-sa/3.0/deed.de}
         \end{itemize}

\noindent [source 2] {\tt Pflugwalle.jpg}\\
         \url{https://commons.wikimedia.org/wiki/File:Pflugwalle.jpg?uselang=de}\\
 Author: Matthias S\"u{\ss}en\\
         License:
         \begin{itemize}
         \item GNU Free Documentation License, Version 1.2\\
           \url{https://commons.wikimedia.org/wiki/Commons:GNU_Free_Documentation_License,_version_1.2}
         \item Creative-Commons-Lizenz ''Namensnennung - Weitergabe unter gleichen Bedingungen 3.0 nicht portiert''\\
           \url{https://creativecommons.org/licenses/by-sa/3.0/deed.de}
         \end{itemize}

         \noindent [source 3] {\tt Himmelstafel.Tal-Qadi.1024.png}\\
         \url{https://commons.wikimedia.org/wiki/File:Himmelstafel.Tal-Qadi.1024.png}\\
          Author: Markus Bautsch\\
         License:
         \begin{itemize}
         \item Creative Commons Attribution-Share Alike 4.0 International license.\\
           \url{https://creativecommons.org/licenses/by-sa/4.0/deed.en}
         \end{itemize}

         \noindent [source 4] {\tt Sidereal\_day\_(prograde).png}\\
         \url{https://commons.wikimedia.org/wiki/File:Sidereal_day_(prograde).png}\\
          Author: Gdr (Wikipedia user)\\
         License:
         \begin{itemize}
         \item GNU Free Documentation License, Version 1.2\\
           \url{https://commons.wikimedia.org/wiki/Commons:GNU_Free_Documentation_License,_version_1.2}
         \item Creative Commons Attribution-Share Alike 3.0 Unported license\\
           \url{https://creativecommons.org/licenses/by-sa/3.0/deed.en}
         \end{itemize}
       $\;$\\*[2mm]
         All of the above licenses allow:
         \begin{itemize}
         \item To share - copy and redistribute the material in any medium or format for any purpose, even commercially.
           \item To adapt - remix, transform, and build upon the material for any purpose, even commercially. 
           \end{itemize}
         In detail, our graphics were created as follows:\\*[2mm]
         {\bf Figure \ref{bigobj}:}\\
         Starting point [source 1]. Conversion to black and white. Inserting labels.\\*[2mm]
         {\bf Figure \ref{bear}:}\\
         Own work. Stellarium screenshot. Inserting lines.\\*[2mm]
         {\bf Figure \ref{pflug}:}\\
         Own work. Stellarium screenshot. Inserting lines and labels.\\*[2mm]
         {\bf Figure \ref{pflug2}:}\\
         Starting point [source 1]. Conversion to black and white. Inserting lines and labels.\\*[2mm]
         {\bf Figure \ref{walle}:}\\
         Starting point [source 2]. Paint over the plow with black. Replace the background with white. Mirror on a vertical axis.\\*[2mm]
         {\bf Figure \ref{talqadi}:}\\
         Starting point [source 3]. Conversion to black and white. Inserting labels.\\*[2mm]
         {\bf Figure \ref{pflug_steil}:}\\
         Own work. Stellarium screenshot. Inserting lines and labels.\\*[2mm]
         {\bf Figure \ref{auxiliaryline2}:}\\
         Starting point [source 1]. Conversion to black and white. Inserting lines and labels.\\*[2mm]
         {\bf Figure \ref{auxiliaryline}:}\\
         Own work. Stellarium screenshot. Inserting lines and labels.\\*[2mm]
         {\bf Figure \ref{stellarday}:}\\
         Starting point [source 4]. Conversion to black and white.\\*[2mm]
         {\bf Figure \ref{arcvis}:}\\
         Own work. Stellarium screenshot. Inserting lines and labels.\\*[2mm]
         {\bf Figure \ref{lineAP}:}\\
         Own work. Stellarium screenshot. Inserting lines and labels.\\*[2mm]
         {\bf Figure \ref{fig13}:}\\
         Own work. Stellarium screenshot. Inserting lines and labels.\\*[2mm]
         {\bf Figure \ref{fig14}:}\\
         Own work. Stellarium screenshot. Inserting lines and labels.\\*[2mm]
         {\bf Figure \ref{fig15}:}\\
         Own work. Stellarium screenshot. Inserting lines and labels.\\*[2mm]
         {\bf Figure \ref{HorEklipt}:}\\
         Own Work. Created with Mathematica.\\*[2mm]
         {\bf Figure \ref{azimutkreis}:}\\
         Own Work. Created with GIMP.\\*[2mm]
         {\bf Figure \ref{zenitkriterium}:}\\
         Own Work. Created with Mathematica.\\*[2mm]
         {\bf Figure \ref{zenit}:}\\
         Own Work. Created with Mathematica.\\*[2mm]
         {\bf Figure \ref{AuxLine4}:}\\
         Starting point [source 1]. Conversion to black and white. Inserting lines and labels.\\*[2mm]
         {\bf Figure \ref{vertical}:}\\
         Own work. Stellarium screenshot. Inserting lines and labels.\\*[2mm]
          {\bf Figure \ref{ausgleich1}:}\\
          Own Work. Created with Mathematica.\\*[2mm]
          {\bf Figure \ref{ausgleich2}:}\\
          Own Work. Created with Mathematica.\\*[2mm]
          {\bf Figure \ref{ausgleich3}:}\\
          Own Work. Created with Mathematica.\\*[2mm]
          {\bf Figure \ref{stellar1}:}\\
          Own Work. Created with Mathematica.\\*[2mm]
          {\bf Figure \ref{stellar2}:}\\
          Own Work. Created with Mathematica.\\*[2mm]
          {\bf Figure \ref{stellar3}:}\\
          Own Work. Created with Mathematica.\\*[2mm]
          {\bf Figure \ref{horizontwinkel}:}\\
          Starting point [source 1]. Conversion to black and white. Inserting lines and labels.\\*[2mm]
          {\bf Figure \ref{horizontwinkel2}:}\\
          Own Work. Created with GIMP.\\*[2mm]
          {\bf Figure \ref{kreisfit1}:}\\
          Own Work. Created with Mathematica.\\*[2mm]
          {\bf Figure \ref{kreisfit}:}\\
          Starting point [source 1]. Conversion to black and white. Inserting the fitting circles.\\*[2mm]
          {\bf Figure \ref{betaplej}:}\\
          Starting point [source 1]. Conversion to black and white. Inserting lines and labels.\\*[2mm]
          {\bf Figure \ref{betabigg}:}\\
          Starting point [source 1]. Conversion to black and white. Cutting out the image section. Inserting lines and labels.\\*[2mm]
          {\bf Figure \ref{verdrehung}a:}\\
          Starting point [source 1]. Conversion to black and white. Cutting out the image section. Adding labels.\\*[2mm]
          {\bf Figure \ref{verdrehung}b:}\\
          Starting point [source 1]. Rotating the image by $3.47^\circ$. Cutting out the image section. Conversion to black and white. Adding labels.\\*[2mm]
          {\bf Figure \ref{diagramme}:}\\
          Own Work. Created with GIMP.\\*[2mm]
           {\bf Figure \ref{sunset}:}\\
           Own work. Stellarium screenshot using a horizon line generated with Peakfinder. Inserting lines and labels.\\*[2mm]
           {\bf Figure \ref{peilung}:}\\
           Own Work. Created with GIMP.\\*[2mm]
           {\bf Figure \ref{1loesung}:}\\
           Own Work. Created with Mathematica.\\*[2mm]
           The author of this paper, the copyright holder of all figures in this paper, hereby publishes them under the following licenses:\\*[2mm]
Figure \ref{talqadi} is published under the
           Creative Commons Attribution-Share Alike 4.0 International license.\\
           \url{https://creativecommons.org/licenses/by-sa/4.0/deed.en}\\*[2mm]
           All other figures are published under the following licenses:
           \begin{itemize}
         \item GNU Free Documentation License, Version 1.2\\
           \url{https://commons.wikimedia.org/wiki/Commons:GNU_Free_Documentation_License,_version_1.2}
         \item Creative Commons Attribution-Share Alike 3.0 Unported license\\
           \url{https://creativecommons.org/licenses/by-sa/3.0/deed.en}
         \end{itemize}
           $\;$\\*[2mm]
%\bibliography{bflit}
%\bibliographystyle{plain}

%\thispagestyle{empty}
%\vspace{2cm}
%\end{document}
%\begin{thebibliography}{10}

%\end{thebibliography}

\end{document}